%
%


\documentclass{article}
%
\usepackage{amsthm,amsmath,amsfonts,amssymb,natbib}
\RequirePackage[colorlinks,citecolor=blue,urlcolor=blue]{hyperref}
%
\usepackage[utf8]{inputenc}
\usepackage{xcolor}
\usepackage{authblk}
\newtheorem{thm}{Theorem}[section]
\newtheorem{prop}[thm]{Proposition}
 \newtheorem{lem}[thm]{Lemma}
\newtheorem{cor}[thm]{Corollary}
\newtheorem{prp}[thm]{Proposition}

 \theoremstyle{definition}
\newtheorem{rem}[thm]{Remark}
\newtheorem{asm}[thm]{Assumption}
\newtheorem{exm}[thm]{Example}
\newtheorem{dfn}[thm]{Definition}

\numberwithin{equation}{section}
\pagenumbering{arabic}

\def\reff#1{{\rm(\ref{#1})}}

\def\be{\begin{equation}}
\def\ee{\end{equation}}

\def\bea{\begin{eqnarray}}
\def\eea{\end{eqnarray}}

\def\bea*{\begin{eqnarray*}}
\def\eea*{\end{eqnarray*}}

\newcommand{\cA}{\mathcal{A}}
\newcommand{\cB}{\mathcal{B}}
\newcommand{\cC}{\mathcal{C}}
 \newcommand{\cD}{\mathcal{D}}
\newcommand{\cE}{\mathcal{E}}
\newcommand{\cF}{\mathcal{F}}

\newcommand{\cH}{\mathcal{H}}
\newcommand{\cI}{\mathcal{I}}

\newcommand{\cL}{\mathcal{L}}
\newcommand{\cM}{\mathcal{M}}

 \newcommand{\cP}{\mathcal{P}}
 \newcommand{\cQ}{\mathcal{Q}}
 \newcommand{\cR}{\mathcal{R}}
 \newcommand{\cS}{\mathcal{S}}

\newcommand{\cZ}{\mathcal{Z}}


\newcommand{\Om}{{\Omega}}
\newcommand{\om}{{\omega}}

\newcommand{\eps}{{\epsilon}}

\def \E{\mathbb{E}}
\def \F{\mathbb{F}}

\def \L{\mathbb{L}}
 
 \def \N{\mathbb{N}}
 \def\P{\mathbb{P}}
\def \Q{\mathbb{Q}}
\def \R{\mathbb{R}}


\def \pointleq{\leq_{\Omega}}	
\def \pointgeq{\geq_{\Omega}}	

\def \oleq{\leq} 				
\def \ogeq{\geq} 				

\def \ppeq{\preceq}				
\def \mkt{(\cH,\tau, \oleq,\cI,\cR)}		



\def\bdd{\cB_b}
\def\bdl{\cB_\ell}

\definecolor{franks}{rgb}{.1,0,.7}

\renewcommand{\cite}[1]{\citet{#1}}


\begin{document}

\title{Viability and Arbitrage under Knightian Uncertainty
\thanks{Comments by Rose--Anne Dana, Filipe Martins-da-Rocha, and seminar and workshop participants at Oxford University, the  Institut Henri Poincaré Paris, and Padova University,  are  gratefully acknowledged.}}

\author[1]{Matteo Burzoni\thanks{Matteo Burzoni acknowledges the support of the Hooke Research Fellowship from the University of Oxford.}}
\author[2]{Frank Riedel\thanks{Frank Riedel gratefully acknowledges the  financial 
support of the German Research Foundation (DFG) via  CRC 1283.
 }}
 \author[3]{Mete Soner\thanks{Matteo Burzoni and Mete Soner  acknowledge the  support
 		of  the ETH Foundation, the Swiss Finance Institute and the Swiss
 		National Foundation through 
 		SNF $200020_-172815$.}}
\affil[1]{\small Universit\`a degli studi di Milano, Italy}
\affil[2]{\small Bielefeld University, Germany and University of Johannesburg, South Africa}
\affil[3]{\small Department of Operations Research and Financial Engineering,
	Princeton University}

\maketitle
\begin{abstract}
We reconsider the microeconomic foundations of financial economics.  Motivated by   the importance of 
Knightian Uncertainty in markets, we present a model that  does not carry any probabilistic structure ex ante, yet is based on a common order.  We  derive the fundamental equivalence of economic viability of asset prices and absence of arbitrage. We also obtain a modified version of the Fundamental Theorem of Asset Pricing using the notion of  
\emph{sublinear} pricing measures. Different versions of the  Efficient Market Hypothesis are related to the assumptions one is willing to impose on the common order.     
\end{abstract}

\medskip
{\footnotesize{ \it Keywords:} Viability, Knightian Uncertainty, No Arbitrage, Robust Finance
 
{\it  JEL subject classification: D53, G10} 

{\it AMS 2010 subject classification.} Primary 91B02; secondary 91B52, 60H30
}

\section{Introduction}
\label{s.introduction}
Asset pricing models  typically  take a basic set of securities as given and determine the range of  option prices  that is  consistent with the absence of arbitrage.  From an economic point of view, it is crucial to know if modeling security prices directly  is justified; an asset pricing model  is called viable if its   security prices can be thought of as (endogenous)  equilibrium outcomes of a competitive    economy. Traditional finance models rely on a probabilistic framework. 
The Capital Asset Pricing Model assumes that agents have mean-variance preferences and share the same view of mean and variance-covariance matrix of asset returns. The Consumption-Based Capital Asset Pricing Model derives asset returns from economic equilibrium and also assumes that agents share the same prior. 
\cite{HarrisonKreps79} realized that a    reference probability that  determines  the null sets, the topology, and the    order of the model is sufficient to prove the equivalence of economic viability and pricing by arbitrage.  The common prior or the weaker reference probability assumption is   made in most asset pricing models.

Recently, a large and increasing body of literature has focused on  decisions, markets, and  economic interactions under uncertainty. Frank Knight's pioneering work  (\cite{Knight}) distinguishes  \emph{risk} --  a situation that allows for an objective probabilistic description --   from   \emph{uncertainty} --   a situation that cannot be modeled by a
single probability distribution.
By now,  it is widely acknowledged that  drift and  volatility of asset prices, the term structure of interest rates, and credit risk are important instances in which the probability distribution of the relevant parameters  is imprecisely known, if  not completely unknown.  \cite{EpsteinJi13}   emphasize  the relevance of non-probabilistic uncertainty in financial modeling when parameters vary too frequently to be estimated accurately, or when nonlinearities arise that are too complex to be captured by existing models, or when non-stationarities prevent  the use of the law of large numbers or the central limit theorem.They show that a probability space framework is not able to model ambiguity about volatility\footnote{ Compare also the general conceptual discussion of ambiguity and limitations of probabilistic modeling  in \cite{LoMueller2010}. }.   

We take these insights as a motivation to reconsider the foundations of arbitrage pricing  and its relation to economic equilibrium  without imposing  a priori a probability space framework. We show that the basic relations between economic equilibrium (viability), absence of arbitrage, and suitable pricing functionals can be proved with ease by merely assuming a common order with respect to which   preferences are monotone\footnote{An obvious and intuitive example of an order that we have in mind here is the pointwise order, i.e.  when agents will prefer a contingent consumption plan over their endowment if the new plan pays off more in every state of the world. Preferences over monetary outcomes are naturally assumed to be monotone with respect to this basic order.}. Using this approach we achieve a unifying theory that covers classical models of risk as well as new models of ambiguity.

We first show that  the absence of arbitrage and  the (properly defined) economic viability of the model are equivalent. In equilibrium, there are no arbitrage opportunities; conversely, for arbitrage--free asset pricing models,  it is possible to  construct a heterogeneous agent  economy   such that the   asset prices  are   equilibrium prices of that economy. 

The  second  key result is  the Fundamental Theorem of Asset Pricing. In contrast to risk, it is no longer possible to characterize 
viability  through the existence of a single 
linear pricing measure (or equivalent martingale measure).  
Instead, it is necessary to use a  suitable 
\emph{nonlinear} pricing  expectation,  that we call a sublinear martingale expectation. A sublinear expectation has the common properties of an expectation including  monotonicity, preservation of constants,  and positive homogeneity,  yet it  is no longer additive. Indeed, sublinear expectations can be   represented as the supremum of a class of (linear) expectations, an operation that does not preserve linearity\footnote{In economics, such a representation theorem appears first in \cite{GilboaSchmeidler89}. Sublinear expectations also arise in Robust Statistics, compare \cite{Huber81}, and they play a fundamental role in theory of risk measures in Finance, see \cite{Artzneretal99} and \cite{FoellmerSchied11}.}.
Nonlinear expectations arise   in decision--theoretic models of ambiguity--averse preferences (\cite{GilboaSchmeidler89}, \cite{Maccheronietal06}). It is interesting to see that a similar nonlinearity arises here for the \emph{pricing} functional. A general theory of equilibrium with such sublinear prices is developed in  \cite{BeissnerRiedel19}\footnote{Given that we have a nonlinear price system, one might wonder whether    agents can generate arbitrage gains  by splitting a consumption bundle into two or more plans. The convexity of our price functional excludes such arbitrage opportunities, see Proposition 1 in \cite{BeissnerRiedel19}.}.

The common order shapes equilibrium asset prices. We study various common orders and how they are related to versions of the Efficient Market Hypothesis (\cite{Fama70}). A strong  interpretation of the Efficient Market Hypothesis  says that properly discounted expected returns   of assets are equal to the return of a safe bond. We obtain this conclusion when the common order is based on expected payoffs with respect to a common prior.
When the common order is given by the almost sure order under a common prior,  one obtains a weaker version of the Efficient Market Hypothesis: under an equivalent pricing measure, expected returns are equal\footnote{This statement is equivalent to the classic  version of the Fundamental Theorem of Asset Pricing \cite{HarrisonKreps79, HarrisonPliska81, DuffieHuang85, DalangMortonWillinger90,DelbaenSchachermayer98}.}. In situations of Knightian uncertainty, different specifications of the common order can be made. An example is the quasi-sure order induced by a set of priors: a claim dominates quasi-surely another claim if it is almost surely greater or equal under all considered probability measures. Another example is the order induced by smooth ambiguity preferences, as introduced by \cite{klibanoff2005smooth}, where Knightian uncertainty is modeled by a second-order prior over a class of multiple priors. We show that weaker versions of the Efficient Market Hypothesis hold, depending on
the strength of the assumptions we are willing to impose on the common order, and 
how the related fundamental theorem of asset pricing needs to be suitably adapted.

\subsection*{Further Related Literature}

The relation of arbitrage and viability has been discussed in various contexts.  \cite{jouini1995martingales} and \cite{jouini1999viability} discuss models with transaction cost and other frictions.  \cite{Werner87} and \cite{Dana99} study the absence of arbitrage in its relation to equilibrium when a finite  set of agents is fixed a priori whereas  
\cite{Cassese2017} characterizes  the absence of arbitrage in an order-theoretic framework derived from coherent risk measures.  
Knightian uncertainty is also closely related to robustness concerns that play an important role   in macroeconomic models that deal with 
the fear of model misspecification (\cite{HansenSargent2001,HansenSargent08}). 
The pointwise order corresponds to the ``model-independent'' (or rather ``probability--free'') approach in finance that has been discussed, e.g., in \cite{Riedel14},  \cite{Acciaioetal}, \cite{Burzoni-et-al} and \cite{C}. This literature uses  different notions of ``relevant payoffs''  that our approach allows to unify under a common framework.

The paper is set up as follows. Section \ref{s.setup} describes the model,   the two main contributions of this paper, and provides four examples. Section \ref{s.examples} derives   various classic and new forms of the Efficient Market Hypothesis. The modeling philosophy  is discussed in more detail in   Section
\ref{s.discussion}. Section \ref{s.proofsmain} is devoted to the  proofs of the main theorems. 
The appendix contains a detailed study of  general discrete time markets when the space of contingent payoffs consists of bounded measurable functions. It  also discusses further extensions as, e.g., the equivalence of absence of arbitrage and absence of free lunches with vanishing risk, or the question if an optimal superhedge for a given claim exists.

\section{A Probability-Free Foundation for Financial Economics}
\label{s.setup}

A non-empty set $\Omega$ contains the states of the world; the $\sigma$--field $\cF$ on $\Omega$ collects the possible events.

The commodity space (of contingent claims) $\cH$ is a vector space of $\cF$-measurable real-valued functions containing all constant functions. We will use the symbol $c$ both for real numbers as well as for constant functions. $\cH$ is endowed with a metrizable topology $\tau$ and a pre-order $\le$ that are compatible with the vector space operations.  

The abstract vector space model allows to cover the typical models that have been used in financial economics. Under risk, it is common to take a space of suitably integrable  functions with respect to a given prior with the usual almost sure order; without an ex ante given probability measure, integrability cannot be used as a criterion. We thus allow for more generality here in order to include, e.g.,  spaces of suitably bounded measurable functions.

We assume throughout that agents' preferences are monotone with respect to the preorder $\le$ which thus  plays a crucial role in our analysis. 
A major conclusion of our study is that the strength of the assumptions we are willing to make on the common order (and therefore on the agents populating the economy)  shapes the results about market returns as we shall see in detail in Section \ref{s.examples}.
We assume throughout that the preorder $\le$ is consistent with the order on the reals for constant functions and with the pointwise order for measurable functions. 
A consumption plan $Z \in \cH$ is negligible if we have $0 \le Z $ and $Z \le 0$. $C \in \cH$ is nonnegative if $0 \le C$ and positive if in addition not $C \le 0$. We denote by $\cZ$, $\cP$ and $\cP^+$ the class of negligible, nonnegative and positive contingent claims, respectively.

We   introduce a class of  \emph{relevant contingent claims} $\cR$, a convex subset of $\cP^+$.  The relevant claims are used below in two important ways. On the one hand, they  describe the  nonnegative consumption plans that some agents strictly prefer to the null claim.
On the other hand, 
they signal arbitrage: if a net trade allows to obtain a payoff that dominates a relevant payoff with respect to the common order, we speak of an arbitrage. In the spirit of \cite{Arrow53} and most of the literature, a common choice of the relevant claims is the set of positive  claims $\cP^+$; we invite the reader to make this identification at first reading. However, 
it might be of interest to consider smaller relevant
sets in some economic contexts.  The introduction of $\cR$ also allows to subsume various notions of arbitrage that appeared in the literature, compare the discussion in Section \ref{s.discussion}.

The financial market is modeled by the set of \emph{net trades} $\cI\subset \cH$, 
a convex cone containing $0$. $\cI$ is   the set of payoffs that the agents 
can achieve from zero initial wealth by trading in the financial market.
In the basic frictionless model of  securities, $\cI$ contains the payoffs of self-financing strategies with zero initial capital.
In a frictionless market,  $\cI$ is a subspace. When   short selling constraints, credit line limitations, or transaction costs are imposed,  e.g.,   we are  led to a convex cone instead of a subspace, see Example \ref{ex.dynamic}.

An \emph{agent} 
in this economy is described by a    preference   relation $\preceq$
(i.e. a complete and  transitive binary relation) on $\cH$ that is 
\begin{itemize}
	\item \emph{weakly monotone with respect to $\le$}, i.e.  $X \oleq Y$ implies
	$X \preceq Y$ for every $X,Y \in \cH$;
	\item \emph{convex}, i.e. the upper contour sets $\{Z\in \cH: Z \succeq X\}$ are  convex;
	\item \emph{$\tau$-lower semi--continuous}, i.e. for every sequence $\{X_n\}_{n=1}^\infty \subset \cH$ 
	converging to $X$ in $\tau$ with $X_n \preceq Y$ for $n \in \mathbb N$, we have $X \preceq Y$.
\end{itemize}
The set of all agents is denoted by $\cA$. 

In the spirit of  \cite{HarrisonKreps79}, we think of a potentially large set of agents about whom some things are known, without assuming that we know exactly their  preferences or their number.  We only impose a list of properties on preferences that are standard in economics.  In particular, bearing in mind the interpretation of $\le$ as a \emph{common order}, the preferences are monotone with respect to $\le$.   Moreover, we impose some weak form of continuity with respect to the given topology $\tau$; it is known that, in general, some form of continuity is required for the existence of equilibrium. Convexity reflects a preference for diversification.
We stress that preferences are defined on the entire commodity space $\cH$; this assumption can be relaxed (we refer to Appendix \ref{s.eh} for the technical details).

A financial market
$(\cH,\tau,  \oleq,\cI,\cR)$ is
\emph{viable} if  there is a family of agents\\  $\{\preceq_a\}_{a\in A}\subset \cA$ 
such that

\begin{itemize}
	\item $0$ is optimal for each agent $a \in A$, i.e.
	\begin{equation}
		\label{EqnViable1}
		\forall    \ell \in \cI
		\quad \ell \preceq_a 0,
	\end{equation}
	\item for every relevant claim $R \in \cR$ there exists an agent $a \in A$ such that
	\begin{equation}
		\label{EqnViable2}
		0 \prec_{a}  R \,.
	\end{equation}
\end{itemize}
We say that $\{\preceq_a\}_{a\in A}$ 
{\em{supports the financial market}} $(\cH,\tau, \oleq,\cI,\cR)$.

A  market is in equilibrium when  agents have no incentive to trade away from their current endowment\footnote{We take the endowment to be zero  in our definition; this comes without loss of generality, see also the discussion in Section \ref{s.discussion}.}. 
We generalize the viability definition of \cite{HarrisonKreps79} who use a single representative agent to allow for  economies with  heterogeneous agents. Heterogeneity of agents allows us to prove the two  main theorems  in great generality with easy arguments without the need to construct strictly monotone preferences (that might not exist in all cases). We replace the strict monotonicity assumption by the weaker Condition (\ref{EqnViable2}) which, in particular, excludes the trivial case of agents who are indifferent between all payoffs.  Compare also our discussion of the viability concept in Section \ref{s.discussion} below.

A net trade $\ell \in \cI$ is an arbitrage if there exists a relevant claim
$R^* \in \cR$ such that $\ell \ge R^*$. More generally, 
a sequence of net trades
$\{\ell_{n}\}_{n=1}^\infty  \subset \cI$
is a \emph{free lunch with vanishing risk} 
if there exists a relevant claim
$R^* \in \cR$
and a   sequence $\{e_n\}_{n=1}^\infty \subset \cH$ of nonnegative consumption plans with $e_n\stackrel{\tau}{\rightarrow}0$
satisfying
$
e_n+ \ell_n \ge R^*$ for all $n \in\mathbb N$. We say that the financial market is 
{\em{strongly free of arbitrage}} if there is no free lunch with vanishing risk. In general, the absence of arbitrage is not equivalent to the absence of free lunches with vanishing risk\footnote{Our notion of free lunch with vanishing risk is a modification of the one used by \cite{Kreps81}. It corresponds to the notion proposed by  \cite{DelbaenSchachermayer98} for financial markets in continuous time in a probability space framework. This notion and our adapted viability concept allow to prove the equivalence of viability, absence of arbitrage, and the existence of a suitable pricing functional in contrast to \cite{Kreps81}.}. In Appendix \ref{s.versus}, we establish the equivalence for finite horizon discrete time financial markets.

\subsection*{Two Theorems}

Our first   theorem establishes the equivalence of viability and absence of arbitrage.

\begin{thm}
	\label{t.viable}
	A financial market is strongly free of arbitrage if and only if it is viable.
\end{thm}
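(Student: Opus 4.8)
The plan is to prove the two implications separately, with the harder direction (absence of arbitrage $\Rightarrow$ viability) resting on a separating-hyperplane argument in the topological vector space $(\cH,\tau)$.

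For the easy direction, viability $\Rightarrow$ strong absence of arbitrage, I would argue by contraposition. Suppose $\{\ell^n\}\subset\cI$ is a free lunch with vanishing risk, so there is a relevant $R^*\in\cR$ and nonnegative $e_n\stackrel{\tau}{\to}0$ with $e_n+\ell_n\ge R^*$. Let $\{\preceq_a\}_{a\in A}$ be any supporting family. By \eqref{EqnViable2} pick an agent $a$ with $0\prec_a R^*$. Since $e_n+\ell_n\ge R^*$ and the preference is weakly monotone with respect to $\le$, we get $R^*\preceq_a e_n+\ell_n$, hence $0\prec_a e_n+\ell_n$. On the other hand $\ell_n\preceq_a 0$ by \eqref{EqnViable1}; the obstacle is that I cannot directly add $e_n$. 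Instead I would use lower semicontinuity: since $e_n\to 0$ and (after checking that $e_n+\ell_n$ stays on the correct side of $0$) the preference relation is $\tau$-lower semicontinuous and convex, I would derive a contradiction with $0\prec_a R^*$ by passing to the limit, using convexity to handle the translation by $\ell_n\preceq_a 0$ and lsc to pass $e_n\to 0$. The delicate point is matching the quantifiers in the lsc definition to the sequence at hand.

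For the hard direction, strong absence of arbitrage $\Rightarrow$ viability, I would fix a relevant contract $R\in\cR$ and construct a single agent whose preference makes $0$ optimal over $\cI$ while strictly preferring $R$ to $0$; taking one such agent per $R$ (or indexing $A$ by $\cR$) then yields the supporting family. The natural object is the $\tau$-closure of the convex cone $\cI-\cP$ of net trades minus nonnegative consumption, i.e. the set of claims dominated by something attainable. Strong absence of arbitrage should translate into the statement that no relevant $R^*$ lies in this closed convex set (this is exactly where the free-lunch-with-vanishing-risk formulation, rather than plain no-arbitrage, is needed: it guarantees the separation is from the \emph{closure}). I would then invoke a Hahn--Banach separation theorem to produce a continuous linear functional $\varphi$ on $(\cH,\tau)$ that is nonpositive on $\cI-\cP$ and strictly positive at $R$. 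Monotonicity of $\varphi$ with respect to $\le$ follows from its nonpositivity on $-\cP$, and $\varphi|_{\cI}\le 0$ follows from $\cI\subset\cI-\cP$.

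The main obstacle, and the step I would spend the most care on, is the \textbf{aggregation over all relevant contracts}: a single separating functional handles one $R$, but \eqref{EqnViable2} demands that \emph{every} $R\in\cR$ be strictly priced by \emph{some} agent. I would resolve this by letting $A=\cR$ and, for each $R$, defining the agent's preference via $X\preceq_R Y \iff \varphi_R(X)\le\varphi_R(Y)$ where $\varphi_R$ is the functional separating $R$; I must then verify that each such linear preference is weakly monotone (from monotonicity of $\varphi_R$), convex (automatic for linear functionals, since upper contour sets are half-spaces), and $\tau$-lower semicontinuous (from $\tau$-continuity of $\varphi_R$). A secondary technical point is ensuring the separation actually yields \emph{strict} positivity $0\prec_R R$ rather than mere non-negativity, which requires that $R$ be separated from the closed cone with a strict inequality — available precisely because $R$ is outside a \emph{closed} convex set, so an open neighborhood of $R$ is also excluded. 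Assembling these functionals into the family $\{\preceq_R\}_{R\in\cR}$ completes the construction and hence the proof.
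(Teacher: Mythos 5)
Your overall strategy is the right one and is close in spirit to the paper's (the paper also builds the supporting family out of linear preferences $X\preceq_\varphi Y\Leftrightarrow\varphi(X)\le\varphi(Y)$, one for each functional in a dual family, which resolves your ``aggregation'' concern exactly as you propose). However, both directions as you have written them contain a genuine gap.

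In the easy direction, the obstacle you flag --- that you cannot add $e_n$ to the relation $\ell_n\preceq_a 0$ --- is real, and the proposed fix via convexity does not work: preferences are not additive, and convexity of upper contour sets gives you no way to combine $0\prec_a e_n+\ell_n$ with $\ell_n\preceq_a 0$. The correct move is purely algebraic and happens \emph{before} invoking the preference: rewrite $e_n+\ell_n\ge R^*$ as $R^*-e_n\le \ell_n$ in the common order, apply weak monotonicity to get $R^*-e_n\preceq_a\ell_n$, apply optimality of $0$ to get $R^*-e_n\preceq_a 0$, and only then use $\tau$-lower semicontinuity on the sequence $R^*-e_n\stackrel{\tau}{\to}R^*$, all of whose terms lie in the lower contour set of $0$. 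This matches the quantifiers in the lsc definition exactly and needs no convexity.

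In the hard direction, the claim that strong absence of arbitrage is equivalent to ``no relevant $R^*$ lies in the $\tau$-closure of $\cI-\cP$'' is not justified. If $R^*\in\overline{\cI-\cP}$, you obtain $\ell_n-p_n\to R^*$, i.e.\ $\ell_n+d_n\ge R^*$ with $d_n:=R^*-\ell_n+p_n\stackrel{\tau}{\to}0$, but $d_n$ has no sign; the definition of a free lunch with vanishing risk requires the vanishing perturbations $e_n$ to be \emph{nonnegative}, and in a general $\cH$ (not a lattice) you cannot replace $d_n$ by a nonnegative majorant converging to $0$. So separating $R$ from $\overline{\cI-\cP}$ is not what the no-free-lunch condition delivers. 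The paper circumvents this by working with the super-replication functional $\cD$ of \eqref{e.v}, whose definition builds in the nonnegativity of the $e_n$: Proposition \ref{p.viable} shows that no free lunch with vanishing risk is exactly $\cD(R)>0$ for all $R\in\cR$, Lemma \ref{l.general} proves $\cD$ is lower semicontinuous directly by a diagonal argument, and Fenchel--Moreau duality (Proposition \ref{p.convexdual}) then produces the whole family $\cQ_{ac}$ of continuous, positive, normalized functionals with $\varphi\le\cD$ at once, so that $\cD(R)>0$ yields some $\varphi$ with $\varphi(R)>0$. If you want to keep a per-$R$ Hahn--Banach separation, you must separate $R$ from the closed convex set $\{X:\cD(X)\le 0\}$ (or otherwise prove that the relevant sublevel set is closed), not from $\overline{\cI-\cP}$. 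The remainder of your construction --- monotonicity of $\varphi_R$ from positivity on $\cP$, convexity and lsc of the induced linear preference, and indexing the agents by $\cR$ --- is correct and is essentially what the paper does with $\{\preceq_\varphi\}_{\varphi\in\cQ_{ac}}$.
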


In the  standard literature, the  model of the economy  is constructed on a   probability space with a given reference probability $\P$ that determines the null sets and the topology of the model. In such common prior models,   a financial market is viable if and only if there exists a linear pricing measure in the form of a  risk-neutral probability measure $\P^*$ that is equivalent to $\P$, as \cite{HarrisonKreps79} have shown. In the absence of a common prior, we have to work with a more general, sublinear notion  of  pricing.
A functional 
$$
\cE :  \cH \to\R \cup\{\infty\}
$$ is a {\emph{sublinear expectation}}
if  it is monotone with respect to $\oleq$,
translation-invariant, i.e. 
$
\cE(X+c)=\cE(X)+c
$
for all constant claims $c \in \cH$ and $X \in\cH$, 
and sublinear, i.e. for all $X,Y \in \cH$ and $\lambda>0$, we have 
$\cE(X+Y) \le \cE(X)+\cE(Y)$  and $\cE(\lambda X)=\lambda \cE(X)$.
$\cE$  has \emph{full support} if
$\cE(R) >0$ for every $R \in \cR$.
Last but not least, $\cE$ has the \emph{martingale property} if 
$\cE(\ell) \le  0$ for every
$\ell \in \cI$. We say in short that $\cE$ is a sublinear martingale expectation with full support if all the previous properties are satisfied.

It is well known from decision theory that sublinear expectations can be written as upper expectations over a set of probability measures. In our more abstract framework, probability measures are replaced by suitably normalized functionals. 
We say that $\varphi \in \cH^\prime_+${\footnote{$\cH^\prime$
		is the topological dual of $\cH$ and $\cH^\prime_+$ is the set of positive elements in $\cH^\prime$.}}
is a \emph{martingale functional}\footnote{In this generality the terminology \emph{functional} is more appropriate. When the dual space $\cH^\prime$ can be identified with a space of measures, we will use the terminology \emph{martingale measure}. The technical question whether these measures are countably additive is discussed in Appendix \ref{s.ca}.} if it satisfies $\varphi({ 1})=1$ (normalization) and $\varphi(\ell) \le 0$ for all $\ell \in \cI$.  In the spirit of the probabilistic language, we call a linear functional absolutely continuous if it assigns the value zero to all negligible claims. 
We denote by $\cQ_{ac}$
the set of absolutely continuous
martingale functionals.

The notions that we introduced now allow us to state the general version of the fundamental theorem of asset pricing in our order-theoretic context.  

\begin{thm}[Fundamental Theorem of Asset Pricing]
	\label{t.subexp}
	The financial market is viable if and only if there   exists a lower semi--continuous sublinear martingale expectation with full support. 
	
	In a viable market, the set of absolutely continuous martingale functionals $\cQ_{ac}$ is not empty and 
	$$ 
	\cE_{\cQ_{ac}}(X):=\sup_{\phi \in \cQ_{ac}} \phi(X)
	$$ 
	is a  lower semi--continuous sublinear martingale expectation with full support. Moreover, $\cE_{\cQ_{ac}} $ is maximal, in the sense that any other 
	lower semi--continuous sublinear martingale expectation with full support $\cE$ satisfies $\cE(X)\le \cE_{\cQ_{ac}}(X)$ for all $X\in\cH$.
\end{thm}

\begin{rem}
	Under nonlinear expectations, 
	one has to distinguish martingales from symmetric martingales; 
	a symmetric martingale has the property that the process itself 
	and its negative are martingales. In our context, this condition translates to the equality $\cE_{\cQ_{ac}}(\ell)=0$ for all $\ell \in \cI$.
	When the set of net trades $\cI$ is a linear space
	as in the case of frictionless markets,
	a net trade $\ell$ and its negative $-\ell$ belong to $\cI$. In this case, sublinearity and the condition 
	$\cE_{\cQ_{ac}}(\ell)\le 0$ for all $\ell\in\cI$ imply 
	$\cE_{\cQ_{ac}}(\ell)=0$ for all net trades $\ell\in\cI$. 
	Thus, the net trades $\ell$ are  symmetric  martingales.
\end{rem}

\subsection*{Four Examples}

Our   novel approach builds the foundations of  financial economics without imposing any a priori probabilistic structure, thereby including the new paradigm of Knightian uncertainty into the theory.  
We illustrate the unifying power of the model with four examples ranging from classical situations of risk to new ones with ambiguity.  


\begin{exm}\label{ex.atom}(The atom of finance and complete  markets)
	\label{ex.simple}
	The basic
	one--step binomial model, that we like to call the atom of finance, consists of two states of the world, 
	$\Om=\{1,2\}$. 
	An  element $X \in \cH$  
	can be identified with a vector in  $\R^2$.  Let $\oleq$ be the usual partial order of
	$\R^2$. The relevant claims are the positive ones, $\cR=\cP^+$.
	
	There is 
	a riskless asset $B$ and a risky asset $S$.
	At time zero, both assets have value $B_0=S_0=1$. 
	The riskless asset yields $B_1=1+r$ for an interest rate $r>-1$ at time 
	one,  whereas the risky asset takes the values $u$ in state 1 
	and respectively $d$ in state 2 with $u>d$. 
	
	We use the riskless asset $B$ as num\'eraire.  The discounted net return on the risky asset is
	$\hat \ell := S_1/(1+r)-1$.
	$\cI$ is the linear space spanned by $ \hat \ell$.
	There is no arbitrage if and only if the unique candidate 
	for a martingale probability of state one 
	$$
	p^*=\frac{1+r-d}{u-d}
	$$ belongs to $(0,1)$ which is equivalent to $u>1+r>d$. 
	$p^*$ induces the unique martingale measure $\P^*$ with expectation
	$$\E^*[X]= p^* X (1)
	+ (1-p^*) X_1(2)\,.$$  $\P^*$ is a linear measure;   moreover, it has the full support property since for every $R\in\cR$ we have  $\E^*[R]>0$.
	The market is viable with the single-agent economy  $A=\{\preceq^*\}$ where  the preference relation $\preceq^*$ is given by the linear expectation $\P^*$, i.e.  $X \preceq^* Y$ if and only if $\E^*[X] \le \E^*[Y]$.
	Indeed, under this preference $\ell \sim^* 0$ for 
	any $\ell \in \cI$ and $X\prec^* X+R$ for any $X \in \cH$ and
	$R \in \cP^+$.  In particular, any $\ell \in \cI$ is an optimal 
	portfolio and the market is viable.
	
	The preceding analysis carries over to all finite $\Omega$ 
	and complete financial markets. 
	
\end{exm}

\begin{exm}\label{ex.Ellsberg} (Ellsberg Market)
	We illustrate the concepts used in our definition of viability and in Theorem \ref{t.subexp} with the help of a market that is inspired by the
	Ellsberg thought experiments,   the archetypal instances  of ambiguity in economics and decision theory.
	In these experiments, an urn is called ambiguous if the  exact composition is not known to the participants.
	Suppose for simplicity that we model an ambiguous urn that contains  black and red balls; let it be known that the proportion of red balls is in the interval $[p_*,p^*]\subset [0,1]$.
	The finite state space is given by
	$
	\Omega:=\{ \mbox{red} , \mbox{black}\}.
	$
	The commodity space $\cH=\R^\Omega$ is the set
	of functions on $\Om$
	with the usual order.  	
	Consider the claim that
	pays one dollar if a red ball is drawn, i.e.
	$$
	X(\om)=
	\left\{ 
	\begin{array}{ll}
		1 \quad \ {\text{if}}\ \ &\om\ {\text{is red}},\\
		0 \quad \ {\text{if}}\ \ &\om\ {\text{is black}}
	\end{array}.
	\right.
	$$
	In contrast to the frictionless atom of finance, we now suppose that due to ambiguity, the 
	asset can be bought at  price $p_*$  and sold   at price $p^*>p_*$\footnote{Such a price setting behavior is quite natural in today's regulated financial markets when banks take the   \emph{model uncertainty} or ambiguity of their  algorithms into account and perform a number of stress tests to compute an interval of possible prices, compare the early account of such practices in \cite{Artzneretal99}.}.
	Then the set of net trades is  given by positive cone (rather than a linear subspace)  generated by 
	$
	\ell_1(\om)=X(\om)-p^*$ and 
	$\ell_{2}(\om)=p_* - X(\om), $ i.e.
	$$
	\cI
	=\left\{ \lambda \ell_1+ \mu \ell_2\ :\ \lambda, \mu \ge 0\ \right\}.
	$$
	Let $\P_p$ be the measure that assigns
	probability $p$ to the event $\{\mbox{red}\}$.
	The  risk-neutral subjective expected utility agent's preference  with this subjective belief $\P_p$ is denoted
	by $\preceq_p$.
	For this financial market, the set of absolutely continuous 
	martingale expectations $\cQ_{ac}$
	is given by 
	$$
	\cQ_{ac}=\{\P_p; p \in[ p_*,p^*]\}.
	$$
	The set of  agents $\cA^*=\{\preceq_p; p \in [p_*,p^*]\}$
	supports this market in the sense of our viability definition.
	The corresponding \emph{sublinear martingale expectation } is given by  
	$$
	\cE(\xi)= \max\{\ \E_{p_*}[\xi]\ ,\ \E_{p^*}[\xi]\ \}.
	$$

	In contrast to the frictionless, ambiguity-free, and complete atom of finance, we now have a continuum  of  heterogeneous risk-neutral agents $\cA^*$ that support the market (instead of a unique risk-neutral agent).
	The model's ambiguity (or, the imprecise probabilistic information) is   described  by the sublinear pricing functional  $\cE(\xi)$. The corresponding ambiguity-averse  Gilboa-Schmeidler agent's preferences are represented by the  utility function 	$$
	U_{GS}(\xi):= 
	\inf_{p \in [p_*,p^*]}\ \E_p[\xi].
	$$
	We have then 
	$
	U_{GS}(\ell) \le 0$ for all $ \ell \in \cI.
	$
	When $0<p_*<p^*<1$, $U_{GS}$ is   strictly monotone and 
	supports the  market.  	Note however, that when $p_*=0$ or $p^*=1$, the agent with utility function $U_{GS}$
	does no longer support the market as strict monotonicity fails. 
\end{exm}

\begin{exm} [Multiple priors]
	\label{ex.nullsets} 
	Knightian uncertainty leads to frameworks in which single ambiguity-averse or expected utility maximizers might not suffice to support a given arbitrage-free market\footnote{In a recent paper, \cite{Bartl19} analyzes portfolio and consumption choice in such discrete time models.}.  Consider a one-step financial market that allows trading
	only at time zero.  
	Suppose that the underlying financial market  returns are too ambiguous  or too  non-stationary to be accurately estimated or  modeled   by a single probability distribution, yet   agents are willing to take a stand on a range of possible models for returns.  Such a situation of  \emph{imprecise probabilistic information} about the distribution of asset returns can be modeled by  a 
	set of priors
	$\cQ$ on a measurable space $(\Omega,\cF)$.
	Let the common order be the
	\emph{quasi--sure} order induced by the family of priors, i.e.	
	$$
	X \oleq Y \quad
	\Leftrightarrow
	\quad
	\Q (X\le Y)=1, \ \ \forall\ \Q \in \cQ.
	$$
	Then, an event $A \subset \Om$ is  \emph{negligible}\footnote{These events are called  \emph{polar} in the mathematical literature on quasi-sure analysis.}
	if $\Q(A)=0$ for every prior $\Q\in \cQ$.	The commodity space that represents all claims in this market is given by
	$$
	\cH = \bigcap_{\Q\in \cQ} \L^2(\Om,\Q).
	$$
	As the $\L^2$ spaces are defined
	as equivalence classes, two claims in $\cH$ are in the same equivalence class
	if they differ only on a negligible set.
	Call a  claim $R \ge 0$   relevant if the set
	$\{R>0\}$ is not negligible, i.e., $\cR=\cP^+$.
	The set of net trades can be  any convex cone included in
	$$
	\cI_0=\{ \ell \in \cH \ :\ \E_\Q[\ell] \le 0 \quad \forall \ \Q \in \cQ\}.
	$$
	
	One might guess that the single Gilboa-Scheidler agent with the utility function
	$$
	U_{GS}(X):= \inf_{\Q\in \cQ} \E_\P[X], \qquad X \in \cH
	$$
	supports the financial market.  
	However, this agent would support the financial market
	only when it satisfies the monotonicity condition
	\eqref{EqnViable2}, i.e. $U_{GS}(R)>0$ for every $R\in\cR$. 
	This monotonicity implies that all priors must have the same null sets.
	Indeed, suppose that there are two priors $\P_1, \P_2 \in \cQ$ and
	an event $A$ satisfying $\P_1(A)>0=\P_2(A)$.  Then, $A$ is not negligible
	and consequently, the claim
	$R=\chi_A $ is relevant.  
	On the other hand, $U_{GS}(\chi_A) =0$ showing
	that the Gilboa-Schmeidler agent is not strictly monotone
	when there are priors with different null-sets\footnote{Note that the 
		same conclusion holds if we define $U_{GS}$ with 
		any set of probability measures equivalent to $\cQ$.}.
	
	Similarly, suppose that a single agent with a linear preference relation  $\preceq_\Q$ given by the subjective expected utility function $U(X)=\E_\Q [X]$ for  a prior $\Q$ supports the financial market\footnote{We take a risk-neutral agent for simplicity. The argument holds as well for risk-averse agents with a standard Bernoulli utility function.}.
	Let $A$ be a null set of $\Q$.   Then, this agent is   indifferent
	between $\chi_A$ and the zero claim.  Thus  the monotonicity 
	condition \eqref{EqnViable2} implies that 
	$\chi_A$ cannot be relevant and $A$ must be negligible.
	As   negligible sets are null for every prior, we conclude that
	the null sets of $\Q$ and the negligible  sets coincide.
	
	We conclude  that   we need to reconsider  the viability concept in models that do not allow to describe the negligible sets by a single probability measure.  In particular, note that the family of heterogeneous
	agents $\{\preceq_\Q\}_{\Q\in \cQ}$
	with linear preferences induced by the set of priors $\cQ$   
	supports the financial market.
\end{exm}

\begin{exm} [Volatility Uncertainty]
	\label{ex.G} 
	In the classical Samuelson-Merton-Black-Scholes model, the stock price  is a geometric Brownian motion satisfying the stochastic differential
	equation $dS_t= \sigma S_t dB_t$ where
	$B$ is a standard Brownian motion and $\sigma$ is
	a positive constant modeling the {\emph{volatility}}.  
	
	It is widely recognized and empirically well documented  that volatility is time-varying and stochastic; a variety of complex stochastic models of the underlying dynamics have been proposed in turn\footnote{Starting with the Heston model (\cite{Heston93}), a whole literature has explored ever more complex dynamics, see \cite{Ghysels96} for an overview.}. One might question whether it is plausible to assume that agents  can identify one of these models uniquely, in particular when the relevant volatility is not directly observable. A robust modeling approach thus allows for a whole class of volatility models.
	
	Let us  consider a financial market with Knightian 
	uncertainty about the  volatility of the price process, as discussed in  \cite{EpsteinJi13} 
	for con\-sumption-based asset pricing and in  \cite{Vorbrink14} for option pricing;   
	\emph{any adapted process} $\sigma$ taking values in a certain interval $[\underline{\sigma},\overline{\sigma}]$ 
	is a plausible volatility process. We denote this class by $\Sigma$;  for a given $\sigma\in\Sigma$, we let $\P^\sigma$ be
	the corresponding distribution of the stock price process.  The ambiguity about volatility is thus modeled by the 
	the set of priors $\cM= \{\P^\sigma : \sigma \in \Sigma\}$. 
	Take  $\cR=\cP^+$, and model the financial market by taking  $\cI$ to be the set of all
	stochastic integrals with simple
	integrands that are bounded from below{\footnote{We refer the 
			reader to \cite{soner2012wellposedness,soner2013dual}
			for all the technical details, in particular, 
			for the formal construction of $\{\P^\sigma : \sigma \in \Sigma\}$ and its subtle properties.}}.
	
	In this situation, priors in $\cM$ are not mutually equivalent; 
	in fact, typically they are singular to each other.
	A single probability space framework is  not able to capture volatility uncertainty. 
	In most finance models, the assumption of a reference probability was made for technical reasons, in order to be able to apply the Girsanov theorem. The restriction to a single probability space framework in diffusion models  limits the set of possible models; only drift uncertainty can be captured by such models, not volatility uncertainty. The issue becomes  even more pertinent if one wants to capture more complex financial models involving jumps.

	As the   negligible sets  of $\cM$ 
	cannot be generated  by any single probability measure,
	this financial market is not covered by the classical probability space framework.  However, the family of heterogenous agents 
	$\{\preceq_\sigma\}_{\sigma \in \Sigma}$ induced by the 
	priors $\P^\sigma$ does support the financial  market. 	Indeed, for every prior $\P^\sigma$,  every $\ell \in \cI$  is a $\P^\sigma$-local martingale
	and consequently, $\E_ {\P^\sigma}[\ell] \le 0$.

	We now discuss briefly a variation of the example in which the set of relevant claims is derived from preferences.   In a first step, let us consider agents who do not have a razor-sharp model of volatility, but are willing to use a robust version that is called ``$\epsilon$-contamination'' in statistics (\cite{Huber65}) and  decision theory (\cite{Eichberger99})
	For any reference volatility $ \sigma \in \Sigma$ and contamination $\eps>0$, let
	$$
	\Sigma_{\sigma,\eps}:= \{ \tilde{\sigma} \in \Sigma\ :\
	\|\sigma-\tilde{\sigma}\|_\infty \le \eps\ \}
	$$
	describe the sets of volatility models that are close to $\sigma$. 
	The $\epsilon$-contamination preferences  $\preceq_{\sigma,\eps}$ are represented by the utility functions
	$$
	U_{\sigma,\eps}(X):= \inf_{\sigma \in \Sigma_{\sigma,\eps}} \E_{\P^\sigma}[X] .
	$$
	We use these preferences to define the set of relevant claims as follows by setting 
	$$
	\cR=\{R \in\cP_+:   U_{\sigma,\eps}(R)>0   \text{ for some}
	\ \ \sigma \in \Sigma, \eps>0 \}.
	$$
	A payoff on an event  is thus relevant if a bet on that event matters for some ambiguity-averse agent with $\epsilon$-contamination preferences.
	This technically more complex model is also included in our framework. 
	Indeed, for  $\sigma \in \Sigma$ and $\ell \in \cI$, 
	$\E_{\P^\sigma}[\ell] \le 0$  and thus 
	$U_{\sigma, \eps}(\ell)\le 0$, 
	proving the optimality  condition (\ref{EqnViable1}). 
	The  definition of $\cR$ shows directly that
	the monotonicity condition \eqref{EqnViable2} is satisfied by 
	the family of heterogenous agents
	$\{\preceq_{\sigma,\eps}\}_{\{\sigma \in \Sigma, \eps >0\}}$.
	
\end{exm}


\section{The Efficient Market Hypothesis}
\label{s.examples}

The Efficient Market Hypothesis (EMH) plays a fundamental role in the history of Financial Economics.  \cite{Fama70} calls  markets   informationally efficient if all available information is reflected properly in current asset prices. There are several interpretations of this conjecture; in the early days after its appearance, the efficient market hypothesis was usually interpreted as asset prices being random walks in the sense that (log-) returns be independent from the past and identically distributed with the mean return being equal to the return of a safe bond (\cite{Malkiel2003}). Later, the informational efficiency of asset prices was interpreted as a  martingale property; (conditional) expected returns of all assets are equal to  the return of a safe bond under \emph{some} probability measure.   This conjecture of the financial market's being a ``fair game''   dates back to  \cite{Bachelier1900} and was  rediscovered by Paul Samuelson (\cite{Samuelson1965,Samuelson1973}). In dynamic settings, market efficiency is thus  strongly related to (publicly available)  information.  Under Knightian uncertainty, the role of information and the martingale property of prices needs to be adapted properly as we shall see in this section\footnote{We refer to \cite{JL12} for a detailed  analysis of the interplay between different information sets and market efficiency under a common prior. In our  framework, the information flow is taken as given; it is implicitly encoded in the set of available claims $\cI$. We do not consider the issue of private information of insiders. }.


Throughout this section, let
us assume that we have a frictionless one--period or discrete--time multiple period financial market as in Example \ref{ex.dynamic}. In particular, the set of net trades $\cI$ is a subspace of $\cH$.


\subsection{A Strong Version of the  Efficient Market Hypothesis under Risk}
\label{ex.strong}

There are various ways to formalize the Efficient Market Hypothesis. A particular strong interpretation of informational efficiency requires that expected returns of all risky investment be equal under  a common prior. In our framework, such a conclusion results if the common order is 
derived from a common prior.

Let  $\P$ be a probability measure
on $(\Om,\cF)$.  Set $\cH=  \cL^1(\Om,\cF,\P)$.  
Let the common order be defined by saying   $X\le Y$ if  and only if the expected payoffs under  $\P$ satisfy
\begin{equation}
	\label{EqnDefSEMH}
	\E_\P[X] \le \E_\P[Y].
\end{equation}
We call $\P$ the \emph{common prior} of this model.
In this case, negligible claims coincide with the claims 
with mean zero under $\P$.  Moreover, $X \in \cP$
if $\E_\P[X]\ge 0$. We take $\cR=\cP_+$.

\begin{prop}
	Under the assumptions of this subsection, the  financial market is viable if and only if the common prior $\P$ is a martingale measure. In this case, $\P$ is the unique martingale measure.
\end{prop}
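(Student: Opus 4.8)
The plan is to invoke the Fundamental Theorem of Asset Pricing (Theorem~\ref{t.subexp}) and to exploit the fact that the common order $\le$, being generated by the single linear functional $\E_\P[\cdot]$, is so coarse that it collapses the positive dual cone onto the ray spanned by $\E_\P$. Essentially all of the work lies in identifying the unique candidate pricing functional; once that is done, both directions of the equivalence and the uniqueness statement follow by a direct check of the hypotheses of Theorem~\ref{t.subexp}.

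First I would pin down $\cH^\prime_+$. Take any $\varphi\in\cH^\prime_+$, i.e.\ a continuous linear functional with $\varphi(X)\ge 0$ whenever $0\le X$. If $\E_\P[X]=0$, then both $0\le X$ and $0\le -X$, so positivity gives $\varphi(X)\ge 0$ and $-\varphi(X)=\varphi(-X)\ge 0$, hence $\varphi(X)=0$; thus $\varphi$ annihilates the kernel of $\E_\P[\cdot]$. The elementary fact that a linear functional vanishing on the kernel of a nonzero linear functional is a scalar multiple of it then yields $\varphi=c\,\E_\P[\cdot]$, with $c\ge 0$ by testing on $X=1$, and the normalization $\varphi(1)=1$ forces $c=1$. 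Since $\E_\P[\cdot]$ is continuous on $\cL^1$ it indeed lies in $\cH^\prime$, and since it vanishes on the negligible (mean--zero) claims it is automatically absolutely continuous. Hence $\E_\P[\cdot]$ is the only normalized positive functional, so $\cQ_{ac}\subseteq\{\E_\P\}$ and every martingale functional equals $\E_\P$.

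Next I would observe that, because $\cI$ is a subspace, the martingale condition $\E_\P[\ell]\le 0$ for all $\ell\in\cI$ is equivalent to $\E_\P[\ell]=0$ for all $\ell\in\cI$, i.e.\ to $\P$ being a martingale measure; combined with the previous paragraph, $\cQ_{ac}\neq\emptyset$ precisely when $\P$ is a martingale measure, and then $\cQ_{ac}=\{\E_\P\}$. For the ``only if'' direction, viability gives $\cQ_{ac}\neq\emptyset$ through Theorem~\ref{t.subexp}, hence $\P$ is a martingale measure. For the ``if'' direction, assuming $\P$ is a martingale measure I would verify that $\cE:=\E_\P[\cdot]$ is itself a lower semi--continuous sublinear martingale expectation with full support: it is linear (hence sublinear), translation invariant, continuous (hence lower semi--continuous), satisfies $\cE(\ell)=0$ on $\cI$, and for relevant $R\in\cP^+$ one has $\E_\P[R]\ge 0$ while $R\le 0$ fails, i.e.\ $\E_\P[R]>0$, which is exactly full support; Theorem~\ref{t.subexp} then delivers viability. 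The uniqueness of the martingale measure is already contained in the first step, since every martingale functional coincides with $\E_\P$.

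I expect the only step needing genuine care to be the collapse of the positive dual cone onto the ray $\R_+\,\E_\P$ in the first step --- this is where the hypothesis ``common order $=$ expectation under a prior'' does all the work, rigidly pinning down the pricing functional and leaving no room for a nondegenerate set of priors. Everything afterwards is a routine verification of the conditions of the Fundamental Theorem, together with the trivial observation that for a subspace $\cI$ the supermartingale condition self--improves to a martingale condition.
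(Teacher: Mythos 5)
Your proposal is correct and follows essentially the same route as the paper: both directions go through Theorem~\ref{t.subexp}, with the key step being that any (absolutely continuous, normalized, positive) martingale functional must coincide with $\E_\P[\cdot]$ because it vanishes on the mean-zero claims. The only cosmetic differences are that you establish $\varphi=\E_\P$ via the hyperplane/kernel argument where the paper uses an indicator-function argument, and that for the ``if'' direction you verify that $\E_\P$ is a sublinear martingale expectation with full support rather than directly exhibiting the single supporting agent $A=\{\le\}$.
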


\begin{proof}
	Note that the common order as given by (\ref{EqnDefSEMH}) is complete. 
	If $\P$ is a martingale measure, the common order $\le$ itself defines a linear preference relation under which the market is viable with $A=\{\le\}$. 
	
	On the other hand if the market is viable, Theorem \ref{t.subexp} ensure that there exists a sublinear martingale expectation with full support.
	By the Riesz duality theorem, a martingale functional $\phi \in \cQ_{ac}$ can be identified with a probability measure $\Q$ on $(\Omega,\cF)$. It is absolutely continuous (in our sense defined above) if and only if it assigns the value $0$ to all negligible claims. As a consequence, we have 
	$E_\Q [X]=0$ whenever $E_\P[X]=0$. Then  
	$\Q=\P$ follows\footnote{If $\Q\not=\P$, 
		there is an event $A\in\cF$ with $\Q(A)<\P(A)$. 
		Set $X=1_A-\P(A)$. Then $0=E_\P[X]>\Q(A)-\P(A)=E_\Q[X]$.}.
\end{proof}

The only absolutely continuous 
martingale measure is
the common prior itself.  As a consequence, all traded assets have zero  net expected return  under the common prior. A financial market is thus viable if and only if the strong form of the expectations hypothesis holds true.

\subsection{A Weak Version of the  Efficient Market Hypothesis under Risk}
\label{ex.weakemh}
A weaker version of  the efficient market hypothesis   
states that expected returns    
are equal under some (pricing) probability measure $\P^*$ that is equivalent to the common prior (or ``real world'' probability) $\P$. 

Let $\P$ be  a probability on $(\Om,\cF)$  and $\cH=  \cL^1(\Om,\cF,\P)$. 
In this example, the common order is given by the  almost sure order  under the common prior  $\P$, i.e.,
$$
X \le Y \quad \Leftrightarrow
\quad
\P(X \le Y) =1.
$$
A payoff is negligible if it vanishes $\P$--almost surely and is positive if it is $\P$--almost surely nonnegative. 
Let the  relevant claims $\cR$ consist of  the $\P$--almost surely nonnegative payoffs that are strictly positive with positive $\P$--probability, 
$$
\cR= \left\{R\in \cL^1(\Om,\cF,\P)_+\ :\
\P(R>0) >0\right\}.
$$

A functional $\phi\in \cH'_+$ is an absolutely continuous martingale functional if and only if it can be identified with a probability measure $\Q$ that is absolutely continuous with respect to  $\P$  and if all net trades have expectation zero under $\phi$. In other words, discounted asset prices are $\Q$-martingales.
We thus obtain a version  of the Fundamental Theorem of Asset Pricing under risk, similar to 
\cite{HarrisonKreps79} and 
\cite{DalangMortonWillinger90}.

\begin{prop}\label{PropEMHRisk}
	Under the assumptions of this subsection, the  financial market is viable if and only if there is a martingale measure  $\Q$ that has a bounded density with respect to $\P$.
\end{prop}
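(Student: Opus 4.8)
The plan is to derive the statement directly from the Fundamental Theorem of Asset Pricing (Theorem \ref{t.subexp}) after identifying the dual objects in this $\cL^1$--setting. Since $\cH = \cL^1(\Om,\cF,\P)$, its topological dual is $\cH' = \cL^\infty(\Om,\cF,\P)$, and every positive functional $\phi \in \cH'_+$ is represented by a density $g \in \cL^\infty_+$ through $\phi(X) = \E_\P[gX]$. The normalization $\phi(1)=1$ reads $\E_\P[g]=1$, so $\phi$ is the integration functional of the measure $\Q$ with $d\Q/d\P = g \in \cL^\infty$; absolute continuity in the sense of the paper is automatic here, because the only negligible claim of $\cL^1(\P)$ is $0$. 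As $\cI$ is a subspace, the martingale condition $\phi(\ell)\le 0$ for all $\ell\in\cI$ forces $\phi(\ell)=0$, i.e. $\Q$ is a martingale measure. Hence $\cQ_{ac}$ is \emph{exactly} the set of martingale measures with bounded density, and by Theorem \ref{t.subexp} viability is equivalent to $\cQ_{ac}\neq\emptyset$ together with full support of $\cE_{\cQ_{ac}}$, that is $\cE_{\cQ_{ac}}(R)=\sup_{\Q\in\cQ_{ac}}\E_\Q[R]>0$ for every $R\in\cR$. It therefore remains to show that $\cE_{\cQ_{ac}}$ has full support if and only if some $\Q\in\cQ_{ac}$ is \emph{equivalent} to $\P$, i.e. has a $\P$-a.s. strictly positive bounded density; this is the content of the Proposition once ``martingale measure with bounded density'' is read as the equivalent pricing measure of the weak EMH.

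For the easy implication, suppose $\Q\sim\P$ is a martingale measure with bounded density $g=d\Q/d\P\in\cL^\infty_+$, so $g>0$ $\P$-a.s. The linear functional $\phi(X)=\E_\Q[X]=\E_\P[gX]$ is $\tau$-continuous (because $g\in\cL^\infty$), monotone with respect to the $\P$-almost sure order, translation invariant and normalized, and has the martingale property since $\Q$ is a martingale measure. It has full support: for $R\in\cR$ we have $R\ge 0$ and $\P(R>0)>0$, whence $\E_\Q[R]=\E_\P[gR]>0$ because $g>0$ a.s. A linear functional is in particular a lower semi--continuous sublinear expectation, so $\phi$ witnesses viability via Theorem \ref{t.subexp}.

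The substantive direction is to manufacture a single equivalent martingale measure with bounded density out of viability. Theorem \ref{t.subexp} supplies $\cQ_{ac}\neq\emptyset$ and full support of $\cE_{\cQ_{ac}}$; applied to the indicators $R=1_A$ with $\P(A)>0$ (which lie in $\cR$), full support says that no positive-measure event is null under \emph{all} densities in $\cQ_{ac}$. I would then run a Halmos--Savage type exhaustion. Set $s=\sup\{\P(g>0): g \text{ is a density of some } \Q\in\cQ_{ac}\}$, choose densities $g_n$ with $\P(g_n>0)\uparrow s$, and form the countable convex combination $g^*=\sum_n c_n g_n$ with strictly positive weights $c_n$ satisfying $\sum_n c_n=1$ and $\sum_n c_n\|g_n\|_\infty<\infty$ (for instance $c_n$ proportional to $2^{-n}/(1+\|g_n\|_\infty)$). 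Then $g^*\in\cL^\infty_+$, $\E_\P[g^*]=1$, and $\{g^*>0\}=\bigcup_n\{g_n>0\}$, so $\P(g^*>0)=s$; moreover $g^*$ is again a martingale density, since the partial sums $\sum_{n\le N}c_n g_n$ converge to $g^*$ in $\cL^\infty$ and each annihilates $\cI$. If $s<1$, put $A=\{g^*=0\}$, so $\P(A)>0$ and $1_A\in\cR$; full support yields $h\in\cQ_{ac}$ with $\E_\P[h\,1_A]>0$, and then $\tfrac12(g^*+h)$ is a martingale density whose support strictly contains $\{g^*>0\}$, contradicting maximality of $s$. Hence $s=1$, $g^*>0$ $\P$-a.s., and $\Q^*$ with $d\Q^*/d\P=g^*$ is the desired equivalent martingale measure with bounded density.

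I expect the main obstacle to be precisely the construction of the last paragraph. One must keep the aggregate density bounded---which fails for the naive choice $c_n=2^{-n}$ when $\|g_n\|_\infty$ is unbounded and is repaired only by the weighted choice above---verify that countable convex combinations preserve both normalization and the martingale property (an interchange of summation and expectation justified by $\cL^\infty$-convergence of the partial sums tested against fixed $\ell\in\cL^1$), and convert full support into the strict enlargement of the support used in the maximality step. The remaining ingredients---the Riesz representation of the dual, monotonicity, translation invariance, and the recognition of a linear functional as a lower semi--continuous sublinear expectation---are routine.
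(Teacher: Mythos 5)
Your proof is correct and follows essentially the same route as the paper: the forward direction exhibits the linear expectation under the martingale measure as a supporting (sublinear) expectation, and the converse combines Theorem \ref{t.subexp}, the Riesz identification of $\cQ_{ac}$ with bounded-density martingale measures, and a Halmos--Savage exhaustion. Your explicit weighted convex combination $c_n\propto 2^{-n}/(1+\|g_n\|_\infty)$ is a worthwhile refinement, since the paper's bare citation of Halmos--Savage yields an equivalent martingale measure but does not by itself guarantee that its density remains bounded, which is part of the stated conclusion.
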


\begin{proof}
	If $\Q$ is a martingale measure equivalent to $\P$, define $X \preceq^* Y$ if and only if $\E_{\Q}[X] \le \E_{\Q}[Y]$. Then the market is viable with $A=\{\preceq^*\}$.  Condition (\ref{EqnViable2}) is satisfied because $\Q$ is equivalent to $\P$. 
	
	If the market is viable, Theorem \ref{t.subexp} ensures that there exists a sublinear martingale expectation with full support.
	By the Riesz duality theorem, a martingale functional $\phi \in \cQ_{ac}$ can be identified with a probability measure $\Q_{\phi}$ that is absolutely continuous with respect to  $\P$, has a bounded density with respect to $\P$,  and   all net trades have zero expectation zero under $\Q_{\phi}$. In other words, discounted asset prices are $\Q_{\phi}$-martingales. From the full support property, the family $\{\Q_\phi\}_{\phi\in\cQ_{ac}}$ is equivalent to $\P$, meaning that $\Q_\phi(A)=0$ for every $\phi\in\cQ_{ac}$ if and only if  $\P(A)=0$. By the Halmos-Savage Theorem (\cite{HS49}, \cite[Theorem 1.61]{FoellmerSchied11}), there exists	a countable subfamily $\{\Q_{\phi_n}\}_{n\in\N}\subset\{\Q_\phi\}_{\phi\in\cQ_{ac}}$ which is equivalent to $\P$.   The measure $\Q:=\sum_{n=1}^{\infty}2^{-n}\Q_{\phi_n}$ is the desired equivalent martingale measure.
\end{proof}

\subsection{The Efficient Market Hypothesis under Knightian Uncertainty}

We turn our attention to the  EMH  under  Knightian uncertainty. We consider first the case when the common order is derived from a common set of priors, inspired by the multiple prior approach in decision theory (\cite{Bewley02,GilboaSchmeidler89}). We then discuss a second-order Bayesian approach that is inspired by the smooth ambiguity model (\cite{klibanoff2005smooth}).

\subsubsection{A Strong Version   under Knightian Uncertainty}
\label{ss.semhku}

We  consider a generalization of the original EMH to Knightian uncertainty  that shares a certain analogy with Bewley's incomplete expected utility model (\cite{Bewley02}) and Gilboa and Schmeidler's maxmin expected utility (\cite{GilboaSchmeidler89})\footnote{For the relation between the two approaches, compare also   the discussion of objective and subjective ambiguity in \cite{GilboaMaccheroniMarinacciSchmeidler10}.}.   

Let $\Om$  be a metric space with metric $d$ and Borel sets $\cF$. Let  $\cM$ be a convex, weak$^*$-closed set of   priors  
on $(\Om, \cF)$.  Define the semi-norm 
$$
\|X\|_\cM:= \sup_{\P\in \cM} \E_\P|X|.
$$
Let $\cL^1(\Om,\cF,\cM)$
be the closure of continuous and bounded functions on $\Om$
under the semi-norm $\|\cdot\|_\cM$.  If we identify the
functions which are $\P$-almost surely equal for every $\P \in \cM$,
then $\cH=\cL^1(\Om,\cF,\cM)$ is a Banach space.
The topological dual of $\cL^1(\Om,\cF,\cM)$
can be identified with  probability measures that admit a bounded density  with respect to some measure in $\cM$ (\cite{BionNadalKervarec12,BeissnerDenis18}).  Therefore, 
any absolutely continuous martingale functional $\Q\in \cQ_{ac}$
is a probability measure and
$\cM$ is closed in the weak$^*$ topology induced by $\cL^1(\Om,\cF,\cM)$.

Consider the uniform order induced by expectations over 
$\cM$, 
$$
X\oleq Y \quad
\Leftrightarrow
\quad
\forall \P\in\cM \quad \E_\P[X] \le \E_\P [Y]\,.
$$
Then, $Z \in \cZ$ if $\E_\P[Z]=0$
for every $\P \in \cM$.  A claim $X$
is nonnegative if $\E_\P[X]\ge 0$
for every $\P \in \cM$.
Let  the relevant claims   
consist of nonnegative claims with a positive return under some prior belief, i.e.
$$ 
\cR=\{
R \in \cH : 0 \le \inf_{\P\in\cM}\ \E_\P[R] \
{\mbox{ and }}
0<  \sup_{\P\in\cM}\ \E_\P[R]\}\,.
$$

\begin{prop}
	\label{p.431}
	Under the assumptions of this subsection, if the  financial market is viable, then
	the set of absolutely continuous martingale functionals $\cQ_{ac}$ is a subset of
	the set of priors $\cM$.
\end{prop}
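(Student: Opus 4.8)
The plan is to prove the inclusion by a Hahn--Banach separation argument, exploiting that in this subsection every element of $\cQ_{ac}$ is a genuine, positive, normalized functional in the dual $\cH'$ and that $\cM$ is convex and weak$^*$-closed. Viability enters only through Theorem~\ref{t.subexp}, which guarantees that $\cQ_{ac}$ is non-empty so that the statement is not vacuous; I would fix an arbitrary $\Q \in \cQ_{ac}$ and show $\Q \in \cM$.

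First I would suppose $\Q \notin \cM$ and separate. Since $\{\Q\}$ is weak$^*$-compact, $\cM$ is weak$^*$-closed and convex, and they are disjoint, the strict separation theorem in the locally convex space $(\cH',\sigma(\cH',\cH))$ applies. The decisive structural fact is that the weak$^*$-continuous linear functionals on $\cH'$ are exactly the evaluations at elements of $\cH$; hence the separating functional is represented by a payoff $X \in \cH$, and with $\alpha := \sup_{\P\in\cM}\E_\P[X]$ we obtain
\[
\E_\Q[X] > \alpha \ge \E_\P[X] \qquad \text{for all } \P \in \cM.
\]

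Next I would turn this into a contradiction with positivity. Using that constants lie in $\cH$, set $Y := \alpha - X \in \cH$. For every $\P \in \cM$ one has $\E_\P[Y] = \alpha - \E_\P[X] \ge 0$, so $Y$ is nonnegative in the common ($\cM$-expectation) order, i.e. $Y \in \cP$; whereas $\E_\Q[Y] = \alpha - \E_\Q[X] < 0$. This contradicts $\Q \in \cH'_+$, since a positive functional must satisfy $\Q(Y)\ge 0$ on all of $\cP$. Therefore $\Q \in \cM$. The same computation can be recast through the bipolar theorem: $\cP$ is precisely the dual cone of $\cM$ in the pairing $\langle \cH',\cH\rangle$, so positivity already forces $\Q$ into the weak$^*$-closed cone generated by $\cM$, and the normalization $\Q(1)=1$ together with the weak$^*$-closedness of $\cM$ then pins it down to $\cM$ itself.

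I expect the only genuine subtlety to be the separation step: the justification that the separating functional can be represented by an actual contract $X \in \cH$ (the identification of the predual of $(\cH',\sigma(\cH',\cH))$ with $\cH$) and that strict separation of the point $\Q$ from the closed convex set $\cM$ is available in this non-metrizable weak$^*$ topology. Both follow from standard locally convex duality once $\cM$ has been shown to be weak$^*$-closed, as recorded in the set-up of this subsection. It is worth emphasizing that the argument uses only positivity, normalization, and the convex weak$^*$-closedness of $\cM$ --- not the martingale property nor full support --- so viability is needed solely to ensure $\cQ_{ac}\neq\emptyset$; indeed a contradiction based on full support alone would not be available, since the relevant contract $R:=\E_\Q[X]-X\in\cR$ produced by the same separation satisfies $\Q(R)=0$, which is perfectly compatible with full support of the sublinear expectation $\sup_{\phi\in\cQ_{ac}}\phi(\cdot)$.
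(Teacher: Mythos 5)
Your proof is correct and follows essentially the same route as the paper: assume $\Q\notin\cM$, apply Hahn--Banach separation in the weak$^*$ duality with $\cH=\cL^1(\Om,\cF,\cM)$ to produce a contract $X^*$ with $\sup_{\P\in\cM}\E_\P[X^*]<\E_\Q[X^*]$, and derive a contradiction from the fact that $\Q$ respects the common order. The only cosmetic difference is that the paper phrases the final step via weak monotonicity of the induced preference $\preceq_\Q$ on the contract $X^*\le 0$, whereas you phrase it as positivity of $\Q\in\cH'_+$ on the shifted contract $\alpha-X\in\cP$; these are the same observation.
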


\begin{proof}
	Set $\cE_\cM(X):= \sup_{\P \in \cM} \E_\P[X]$.
	Then, 
	$Y \le 0$ if and only if 
	$\cE_\cM(Y) \le 0$.
	Fix $\Q \in \cQ_{ac}$
	with the preference relation given by
	$X \preceq_\Q Y$ if $\E_\Q[X-Y] \le 0$.

	Let us assume that $\Q \notin \cM$.
	Since $\cM$ is a weak$^*$-closed and convex
	subset of the topological dual of $\cL^1(\Om,\cF,\cM)$,
	there exists 
	$X^* \in \cL^1(\Om,\cF,\cM)$
	with
	$\cE_\cM(X^*) <0 <\E_{\Q}[X^*]$ by the Hahn-Banach theorem.
	In particular, $X^* \in \cL^1(\Om,\cF,\cM)$ and 
	$X^* \le 0$.
	Since $\preceq_\Q$
	is weakly monotone with respect to $\le$,
	$X^* \preceq_\Q 0$.  Hence,
	$\E_{\Q}[X^*] \le 0$
	contradicting the choice of $X^*$.
	Therefore, $\cQ_{ac} \subset \cM$.

\end{proof}

Expected returns of traded securities are thus not necessarily  the same under all $\P\in \cM$. However, 
the set of martingale measures   is a subset of  $\cM$ here,  and  thus the strong form of the EMH
holds true on a  subset of the set of priors $\cM$.

In general, it is not possible to characterize the set of martingale measures in more detail. However, we can identify a subspace of claims on which expectations under all priors coincide.  Let $\cH_\cM$ be the subspace of claims
that have  no ambiguity in the mean in the sense that
$\E_\P[X]$ is the same constant for all $\P \in \cM$.
Consider the submarket $(\cH_\cM,\tau, \oleq,\cI_\cM,\cR_\cM)$
with $\cI_\cM:=\cI\cap\cH_\cM$ and $\cR_\cM:=\cR\cap \cH_\cM$.
Restricted to this market, the sets of measures $\cQ_{ac}$ and $\cM$ are identical
and the  strong EMH holds true.

The following simple
example illustrates these points.

\begin{exm}
	\label{ex.simple1}
	Let $\Om=\{0,1\}^2$, $\cH$ be all functions on $\Om$.  Then, $\cH=\R^4$ and
	we write $X=(x,y,v,w)$ for any $X\in \cH$. Let $\cI=\{(x,y,0,0): x+y=0\}$.
	Consider the priors given by
	$$
	\cM:=\left\{ \left(p, \frac12 -p, \frac14, \frac 14\right)\ : \ p \in \left[\frac16,\frac13\right]\right\}.
	$$
	There is Knightian uncertainty about the first two states, yet no Knightian uncertainty about the last two states.

	A claim $Z=(Z_1,Z_2,Z_3,Z_4)$ is negligible iff $Z_1=Z_2$ and $Z_4=-2Z_1-Z_3$. In particular, $X=(1,1,0,-2)$ and $Y=(0,0,1,-1)$ are negligible. 
	Now let $\Q^*=\left(q_1,q_2,q_3,q_4\right) \in \cQ_{ac}$. The martingale property implies $q_1=q_2$. Absolute continuity requires that $\E_{\Q^*}[X]=0$ and  $\E_{\Q^*}[Y]=0$, or
	$q_1+q_2-2q_4=0$ and $q_3=q_4$. From here, we obtain with $q_1=q_2$ that $q_1=q_2=q_3=q_4$, so
	$\Q^*=\left(\frac14,\frac14,\frac14,\frac14 \right)$.
	Notice that $\Q^* \in \cM$.
	
	In this case,
	$\cH_\cM =\left\{ X=(x,y,v,w) \in \cH\ :\ x=y\right\}$.  
	In particular, all priors in $\cM$ coincide with $\Q^*$ when restricted to $\cH_\cM$.  Hence, for the claims that are mean-ambiguity-free,
	the strong efficient market hypothesis
	holds true. 
\end{exm}

\subsubsection{A Weak Version   under Knightian Uncertainty}
\label{e.choquet-utility}
Let $\cM$ be a common  set of  priors  
on $(\Om, \cF)$.  Let
$
\cH$ be the space of bounded, measurable functions.
Let the common order be given by the  quasi-sure ordering under the common set of priors $\cM$, i.e.
$$
X \oleq Y \quad
\Leftrightarrow
\quad
\P(X\le Y)=1, \ \ \forall\ \P \in \cM.
$$
In this case, a claim $X$ is negligible if it vanishes \emph{$\cM$--quasi--surely}, i.e. with probability one for all $\P\in\cM$. 
An indicator function $1_A$ is thus negligible if the set $A$ is  \emph{polar}, i.e. a null set with respect to every probability in $\cM$. 
Take the set of relevant claims to be\footnote{These sets of positive and relevant claims can be derived from Gilboa--Schmeidler utilities. Define 
	$
	X\preceq Y $ if and only if 
	$
	\inf_{\P\in\cM} E_\P [U(X)]  \leq  \inf_{\P\in\cM}  U(Y)]$   for all strictly increasing
	and concave real functions $U$.
	Then  $0\preceq Y$  implies that for all priors $\P \in \cM$, every risk-averse expected utility agent prefers $Y$ to the null claim. As is well known, this is equivalent to $Y$ dominating the null claim in the sense of second order stochastic dominance  for every $\P \in \cM$, implying that $Y$ is nonnegative almost surely for all $\P\in\cM$.}
$$
\cR=\left\{ R \in \cP\ :\ \exists \ \P \in \cM \quad 
\text{such that} \quad
\P(R>0)>0\ \right\}.
$$

\begin{prop}
	Under the assumptions of this subsection, the  financial market is viable if and only if there is a set of finitely additive martingale measures  $\cQ$  that has the same polar sets as the common set of priors $\cM$.
\end{prop}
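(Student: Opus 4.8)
The plan is to deduce the proposition directly from the Fundamental Theorem of Asset Pricing (Theorem \ref{t.subexp}), once the analytic objects are identified. On $\cH=\bdd$ equipped with the supremum norm, the topological dual $\cH'$ is the space of bounded finitely additive set functions, so a martingale functional $\varphi\in\cH'_+$ with $\varphi(\mathbf 1)=1$ and $\varphi(\ell)\le 0$ for all $\ell\in\cI$ is exactly a finitely additive martingale measure; since $\cI$ is a subspace here, one in fact has $\varphi(\ell)=0$. Under this identification the set $\cQ$ in the statement will be the set $\cQ_{ac}$ of absolutely continuous martingale functionals provided by Theorem \ref{t.subexp}, and the entire content of the proposition reduces to matching the polar sets of $\cQ_{ac}$ with those of $\cM$.

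For the implication from viability to the existence of $\cQ$, I would invoke Theorem \ref{t.subexp} to obtain $\cQ_{ac}\neq\emptyset$ together with the maximal lower semi-continuous sublinear martingale expectation $\cE_{\cQ_{ac}}(X)=\sup_{\Q\in\cQ_{ac}}\E_\Q[X]$ with full support, and then set $\cQ:=\cQ_{ac}$. It remains to verify the two inclusions of polar sets. If $A$ is $\cM$-polar, then $\mathbf 1_A$ is negligible, and since every $\Q\in\cQ_{ac}$ is absolutely continuous it assigns $0$ to negligible claims, so $\Q(A)=0$; hence $A$ is $\cQ_{ac}$-polar. Conversely, if $A$ is not $\cM$-polar, then $\mathbf 1_A\in\cP$ and $\P(\mathbf 1_A>0)=\P(A)>0$ for some $\P\in\cM$, so $\mathbf 1_A\in\cR$; full support then yields $\cE_{\cQ_{ac}}(\mathbf 1_A)=\sup_{\Q}\Q(A)>0$, so $A$ is not $\cQ_{ac}$-polar. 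Together these show $\cQ_{ac}$ and $\cM$ share the same polar sets.

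For the converse implication, given a family $\cQ$ of finitely additive martingale measures with the same polar sets as $\cM$, I would exhibit a lower semi-continuous sublinear martingale expectation with full support and conclude viability by Theorem \ref{t.subexp}. The natural candidate is $\cE_\cQ(X):=\sup_{\Q\in\cQ}\E_\Q[X]$, which is finite (bounded by $\|X\|_\infty$), sublinear and translation invariant as a supremum of normalized linear functionals, and lower semi-continuous as a supremum of norm-continuous functionals. Monotonicity with respect to $\le$ follows from the matching polar sets: $X\le Y$ means $\{X>Y\}$ is $\cM$-polar, hence $\cQ$-polar, hence $\Q$-null for every $\Q$, so $\E_\Q[X]\le\E_\Q[Y]$. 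The martingale property is immediate since each $\Q$ is a martingale measure. The only genuinely nontrivial point is full support.

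The full support verification is the step I expect to be the main obstacle, precisely because the measures in $\cQ$ are merely finitely additive. For $R\in\cR$ one knows $R\ge 0$ $\cM$-quasi surely and $\P^*(R>0)>0$ for some prior $\P^*\in\cM$, but one cannot directly conclude $\E_{\Q}[R]>0$ from $\Q(\{R>0\})>0$, since a purely finitely additive measure can charge $\{R>0\}$ yet integrate $R$ to $0$. The remedy is to exploit the countable additivity of the genuine prior $\P^*$: by continuity from below there is $n_0$ with $\P^*(\{R>1/n_0\})>0$, so $B:=\{R>1/n_0\}$ is not $\cM$-polar, hence not $\cQ$-polar, and some $\Q^*\in\cQ$ charges $B$. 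On $B$ the payoff $R$ is bounded below by $1/n_0$, and $R\ge 0$ holds $\Q^*$-a.s., so $\E_{\Q^*}[R]\ge \frac{1}{n_0}\Q^*(B)>0$ and therefore $\cE_\cQ(R)>0$. This establishes full support and, through Theorem \ref{t.subexp}, viability.
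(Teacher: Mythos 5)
Your proposal is correct and follows essentially the same route as the paper: in the forward direction you take $\cQ:=\cQ_{ac}$ from Theorem \ref{t.subexp} and match polar sets via absolute continuity (polar $\Rightarrow$ null for every $\phi\in\cQ_{ac}$) and full support (non-polar $\Rightarrow$ $1_A\in\cR$ $\Rightarrow$ charged by some $\phi_A$); in the converse you build $\cE_\cQ=\sup_{\Q\in\cQ}\E_\Q[\,\cdot\,]$ and invoke Theorem \ref{t.subexp} again. The one place you go beyond the paper is the full-support verification in the converse: the paper dispatches it with ``using the same argument as above,'' which implicitly passes from $\Q^*(\{R>0\})>0$ to $\E_{\Q^*}[R]>0$ — a step that is not automatic for merely finitely additive $\Q^*$ (on $\Omega=\N$ with $R(n)=1/n$, a Banach-limit measure charges $\{R>0\}$ but integrates $R$ to zero). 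Your patch — using countable additivity of a genuine prior $\P^*\in\cM$ to find $n_0$ with $\P^*(R>1/n_0)>0$, transferring non-polarity of $B=\{R>1/n_0\}$ to $\cQ$, and bounding $\E_{\Q^*}[R]\ge \frac{1}{n_0}\Q^*(B)>0$ after discarding the $\Q^*$-null set $\{R<0\}$ — is exactly the right fix and makes the argument airtight where the paper is terse.
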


\begin{proof}
	Suppose that the market is viable. We show that the class $\cQ_{ac}$ from Theorem \ref{t.subexp} satisfies the desired properties. The martingale property follows by definition and from the fact that $\cI$ is a linear space. Suppose that $A$ is polar. Then, $1_A$ is negligible and from the absolute continuity property, it follows $\phi(A)=0$ for any $\phi\in\cQ_{ac}$. On the other hand, if $A$ is not polar, $1_A\in\cR$ and from the full support property, it follows that there exists $\phi_A\in\cQ_{ac}$ such that $\phi_A(A)>0$. Thus, $A$ is not $\cQ_{ac}$-polar. We conclude that $\cM$ and $\cQ_{ac}$ share the same polar sets.
	For the converse implication, define $\cE(\cdot):=\sup_{\phi\in\cQ}\E_{\phi}[\cdot]$. Using the same argument as above, $\cE$ is a sublinear martingale expectation with full support. From Theorem \ref{t.subexp} the market is viable. \end{proof}

Under Knightian uncertainty,  there can be  indeterminacy in arbitrage--free prices as there is  frequently a range of economically justifiable arbitrage--free prices. Such indeterminacy has been observed in full general equilibrium analysis as well (\cite{RiSh05,DanaRiedel13,BeissnerRiedel19}). In this sense,  Knightian uncertainty shares a similarity with incomplete markets and other frictions like transaction costs, but the economic reason for the indeterminacy is different. 

\subsubsection{A second-order Bayesian version of the Efficient Market Hypothesis}  
\label{e.smooth}

We now consider a common order $\oleq$ obtained by a second--order Bayesian approach,   in the spirit of  the smooth ambiguity model (\cite{klibanoff2005smooth}).

Let $\cF$ be a sigma algebra on $\Om$
and $\mathfrak{P}=\mathfrak{P}(\Omega)$ the set of 
all probability measures on $(\Omega,\cF)$.  
Let $\mu$ be  a second order prior, i.e. a probability measure\footnote{From Theorem 15.18 of \cite{aliprantis1999infinite}, the space of probability measure is a Borel space if and only if $\Omega$ is a Borel space. This allows to define second order priors.} on $\mathfrak{P}$. 
The common prior in this setting is given by  the probability measure  $\hat{\P}:\cF\to[0,1]$ defined as $\hat{\P}(A)=\int_\mathfrak{P} \P(A) \mu(d\P)$. 
Let $\cH=  \cL^1(\Om,\cF,\hat{\P})$.

The common order is given by
$$
X \oleq Y \quad \Leftrightarrow
\quad
\mu\left(\{\P\in\mathfrak{P}\ :\ \P(X \le Y)=1\ \}\right)=1\,.
$$
A claim is positive if it is $\P$--almost surely nonnegative for all priors in the support of the second order prior $\mu$. 
A claim is relevant if   the set of beliefs $\P$ under which the claim is strictly positive with positive probability is not negligible according to the second order prior\footnote{The order used in this section   can be derived from smooth ambiguity utility functions. 
	Define 
	$
	X\le Y $ if and only if 
	$$
	\int_{\mathfrak{P}} \psi\left(\E_\P[U(X)]\right) \mu(d\P)  \le\int_{\mathfrak{P}} \psi\left(\E_\P[U(X)]\right) \mu(d\P)  $$
	for all strictly increasing
	and concave real functions $U$ and $\psi$. Recall that $\psi$ reflects uncertainty aversion.
	Then  $0\le Y$  is equivalent to $Y$ dominating the zero claim in the sense of second order stochastic dominance for $\mu$--almost all $\P\in\mathfrak{P}$.}.

\begin{prop}
	Under the assumptions of this subsection, the  financial market is viable if and only if there is a martingale measure  $\Q$ that has   the form
	$$
	\Q(A) = \int_\mathfrak{P} \int_A \,D   \,d\P   \,\mu(d \P)
	$$
	for some  state price density $D$.
\end{prop}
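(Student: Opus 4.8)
The plan is to recognize that, despite its second--order formulation, this market is \emph{verbatim} the ``weak EMH under risk'' market of Proposition~\ref{PropEMHRisk} for the single averaged prior $\hat{\P}$, and then merely to translate the resulting density into the mixture form stated here. First I would unwind the order: $X \oleq Y$ means $\P(X\le Y)=1$ for $\mu$--almost every $\P$, and since $\hat{\P}(X>Y)=\int_{\mathfrak{P}}\P(X>Y)\,\mu(d\P)$, this holds if and only if $X\le Y$ holds $\hat{\P}$--almost surely. Consequently $\cZ=\{0\}$ in $\cH=\cL^1(\Om,\cF,\hat{\P})$, every continuous linear functional is automatically absolutely continuous, and $\cP$ consists of the $\hat{\P}$--a.s.\ nonnegative claims. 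The same averaging identity shows that $R\in\cR$, i.e.\ $\mu[\P:\P(R>0)>0]>0$, is equivalent to $\int_{\mathfrak{P}}\P(R>0)\,\mu(d\P)=\hat{\P}(R>0)>0$; hence $\cR=\{R\in\cL^1(\Om,\cF,\hat{\P})_+ : \hat{\P}(R>0)>0\}$. Thus the data $(\cH,\tau,\oleq,\cI,\cR)$ coincide with those of the weak--EMH--under--risk market for the common prior $\hat{\P}$ (recall $\cI$ is a subspace by the standing assumption of this section).

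Proposition~\ref{PropEMHRisk} then applies directly: the market is viable if and only if there is a martingale measure $\Q$ equivalent to $\hat{\P}$, with Radon--Nikodym density $D:=d\Q/d\hat{\P}$. It remains only to put $\Q$ into the asserted shape. By the definition $\hat{\P}(A)=\int_{\mathfrak{P}}\P(A)\,\mu(d\P)$ and a standard monotone--class / monotone--convergence argument, one has $\int_\Om f\,d\hat{\P}=\int_{\mathfrak{P}}\int_\Om f\,d\P\,\mu(d\P)$ for every nonnegative measurable $f$, the inner integral $\P\mapsto\int f\,d\P$ being measurable on $\mathfrak{P}$. Applying this to $f=\mathbf{1}_A D$ gives
\[
\Q(A)=\int_A D\,d\hat{\P}=\int_{\mathfrak{P}}\int_A D\,d\P\,\mu(d\P),
\]
which is exactly the claimed form. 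Conversely, any $\Q$ presented this way is, by the same Tonelli identity, the measure with $\hat{\P}$--density $D$; when $D$ is a genuine state price density ($D>0$, $\E_{\hat{\P}}[D]=1$) this makes $\Q$ an equivalent martingale measure, and Proposition~\ref{PropEMHRisk} returns viability.

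The step requiring the most care is matching the \emph{full support} condition to the strict positivity of $D$. For the ``only if'' direction, Theorem~\ref{t.subexp} yields only a nonempty family $\cQ_{ac}$ of absolutely continuous martingale functionals; by the $\cL^1$--$\cL^\infty$ duality each is an $\hat{\P}$--absolutely continuous martingale measure, and the full support property forces $\{\Q_\phi\}_{\phi\in\cQ_{ac}}$ to share the null sets of $\hat{\P}$. Extracting a \emph{single} equivalent measure $\Q$ (equivalently, a strictly positive $D$) from this family is precisely where the Halmos--Savage theorem enters, exactly as in the proof of Proposition~\ref{PropEMHRisk}. The only remaining technical point is the measurability and Fubini justification for the mixture representation of $\hat{\P}$, which is routine once $\P\mapsto\int f\,d\P$ is known to be measurable on the Borel space $\mathfrak{P}$.
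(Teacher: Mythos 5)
Your proof is correct and follows essentially the same route as the paper: the paper's own argument is precisely to observe that the $\hat{\P}$--almost--sure order induced by the averaged prior $\hat{\P}(A)=\int_{\mathfrak{P}}\P(A)\,\mu(d\P)$ coincides with the second--order quasi--sure order of this subsection and then to invoke Proposition~\ref{PropEMHRisk}. You simply spell out the details the paper leaves implicit (the matching of $\cR$, the Halmos--Savage step inherited from Proposition~\ref{PropEMHRisk}, and the Tonelli identity converting the $\hat{\P}$--density $D$ into the stated mixture form), all of which are accurate.
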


\begin{proof}
	The set function $\hat{\P}:\cF\to[0,1]$ defined as $\hat{\P}(A)=\int_\mathfrak{P} \P(A) \mu(d\P)$ is a probability measure on $(\Omega,\cF)$. The induced $\hat{\P}$-a.s. order coincides with $\oleq$ of this subsection. The result thus follows from Proposition \ref{PropEMHRisk} and the rules of integration with respect to $\hat{\P}$. \end{proof}

The smooth ambiguity model thus leads to a second--order Bayesian approach for asset returns. All asset returns are equal to the safe return for some second order martingale measure; the expectation is the average expected return corresponding to a risk--neutral second order prior $\Q$.

\section{Further   Discussion of the Modeling Philosophy}
\label{s.discussion}

This section discusses various aspects of our modeling approach in more detail. We first discuss the motivation of using a common order instead of a probabilistic framework. We then explain how various static and dynamic financial markets are  embedded in our abstract model. Our notion of viability is related to the usual notion of equilibrium of a competitive market with heterogeneous agents. We discuss the role of sublinear versus linear prices in our theory, and lastly show how our concept of relevant claims can be used to unify various notions of arbitrage in the finance literature.

\paragraph{A common order versus a common probabilistic framework }

Preference properties  shared by all agents in the market will be reflected in equilibrium prices.  

A situation of \emph{risk} is described by the fact that agents share a common prior; in the laboratory, a   random experiment based on an objective device like a roulette wheel or a coin toss simulates such a market environment.  If no such objective device can be invoked, as in the real world, one might still presume the existence of a common subjective belief  for all market participants as it is done (implicitly) in the Capital Asset Pricing Model as well as in its  its consumption-based version. Such an assumption might be too strong;  the  Ellsberg experiments show how  to create an environment of Knightian uncertainty in the lab.  In complex  financial markets in which  credit risk claims, options on  term structure shapes and   volatility dynamics are traded, Knightian uncertainty plays a prominent role  because agents lack   precise probability estimates of crucial model parameters and they might be wary of  potential structural breaks in the data. Moreover, as \cite{EpsteinJi13} have shown (see also Example \ref{ex.G}), if we model Knightian uncertainty about volatility, it is logically impossible to construct a reference probability measure. 

We take these considerations as a \emph{motivation} to  forego any explicit or implicit probabilistic assumption, be it   a common prior $\P$ or the weaker assumption of a common reference probability.  Instead, we  base our analysis on  a  \emph{common    order} $\le$, a far weaker assumption that only requires a  (typically incomplete) unanimous dominance criterion.  A minimal example of a common order is the pointwise order.  Pointwise dominance is certainly a criterion that we might assume to be unanimously shared in the context of monetary or single good payoffs. 
The generality of our approach allows to cover a wide variety of situations, including the well-studied case of risk as well as situations of Knightian uncertainty as we have seen in the previous section\footnote{It might be interesting to note an alternative way of setting up the model in which the common order is \emph{derived} from a class of given preference relations.  Suppose that no common order is a priori given.  Instead, we start with  a class of preference relations  $\cA_0$ on the commodity space $\cH$ that are  convex and $\tau$-lower semicontinuous.
	We can then define the \emph{uniform} order derived from the set of preference relations $ \cA_0$ as follows.
	Let  
	$$ 
	\cZ_\ppeq  := \left\{ Z \in \cH : 
	X\ppeq Z+X \ppeq X, \ \forall\  X \in \cH \right\},
	$$ 
	be the set of negligible (or null) claims for the preference relation $\ppeq \in {\cA_0}$. 
	We call $\cZ_{uni} := \bigcap_{\ppeq \in  \cA_0} \cZ_\ppeq$  the set of 
	\emph{unanimously negligible} claims. 
	Let  the uniform pre-order $\le_{uni}$ on $\cH$ be given by 
	$X \le_{uni} Y$ if and only if there 
	exists $Z \in \cZ_{uni}$ such that
	$X(\omega) \le  Y(\omega)+ Z(\omega)$ for all $\omega \in \Omega$. Note that we use the pointwise order on the reals and the set of  uniformly negligible payoffs to derive the common order from  $\cA_0$. 
	$(\cH,\le_{uni})$ is then a pre-ordered vector space, and agents' preferences in $\cA_0$ are monotone with respect to the uniform pre-order. 
	Let  $\cA$ be the set of (convex etc.) preferences  that  are monotone with respect to the common order. Note that $\cA$ contains $\cA_0$, but is usually larger than $\cA_0$.  }.

\paragraph{The Financial Market}

We model the financial market in a rather reduced form with the help of the convex cone $\cI$. 
This abstract approach is sufficient for our purpose of discussing the relation of arbitrage and viability. 
In the next example, we show how the usual models of static and dynamic trading are embedded.

\begin{exm}
	\label{ex.dynamic}
	We consider four markets with increasing complexity.
	
	1. In a one period setting with finitely many states $\Omega =\{1,\ldots, N\}$, a financial market with $J+1$ securities can be described by its initial prices $x_j \ge 0, j=0, \ldots, J$ and a $(J+1 ) \times N$--payoff matrix $F$, compare \cite{LeRoyWerner2014}.  A portfolio  $\bar{H}=(H_0, \ldots, H_J) \in \mathbb R^{J+1}$  has the payoff   $\bar{H} F=\left(\sum_{j=0}^J H_j F_{j\omega}\right)_{\omega=1,\ldots,N}$; its  initial cost satisfies $H \cdot x =\sum_{j=0}^J H_j x_j $. If the zeroth  asset is riskless with a price  $x_0=1$ and pays off $1$ in all states of the world, then a net trade with zero initial cost can be expressed in terms of the portfolio of risky assets $ H=(H_1, \ldots, H_J) \in \mathbb R^J$ and the return matrix $\hat{F}=(F_{j\omega}-x_j)_{j=1,\ldots,J,\omega=1,\ldots,N}$. 
	$\cI$ is given  by the image of the $J \times N$ return matrix $\hat{F}$, i.e.
	$$
	\cI=\{ H \hat{F}: H \in \mathbb R^J\}\,.
	$$
	
	2.  Our model includes the case of finitely many trading periods. Let 
	$\F:=(\cF_t)_{t=0}^T$ be a filtration on  $(\Om,\cF)$
	and  $S=(S_t)_{t=0}^T$ be an adapted
	stochastic process
	with values in $\R^J_+$ for some $J\ge 1$; $S$ models the uncertain assets. We assume that a riskless bond with interest rate zero is also given. Then, the set of net trades can be described by the gains from trade processes: 
	$\ell \in \cH$
	is in $\cI$ 
	provided that there exists predictable integrands
	$H_t\in \left(\cL^0(\Omega,\cF_{t-1})\right)^J$ for  
	$t=1,\ldots,T$ such that,
	$$
	\ell= \left(H \cdot S\right)_T:= \sum_{t=1}^{T} H_t\cdot \Delta S_t, \quad
	{\text{where}}
	\quad
	\Delta S_t:=(S_{t}-S_{t-1}).
	$$
	In the frictionless case, the set of net trades is  a subspace of $\cH$. In general, one might impose restrictions on the set of admissible trading strategies. 
	For example, one might exclude  short-selling of  risky assets, or impose a   bound on agents' credit line; in these cases, the marketed subspace $\cI$ is a convex cone, compare \cite{Luttmer96}, \cite{jouini1995martingales}, and \cite{araujo2018financial}, e.g.
	
	3. In   \cite{HarrisonKreps79}, the market is described by a marketed space 
	$M \subset \cL^2(\Omega, \cF, \P)$ and a (continuous) linear functional $\pi$ on $M$. In this case,     $\cI$ is the kernel of the price system, i.e.  $$\cI=\{ X \in M : \pi(X)=0\}\,.
	$$
	
	4. In continuous time, the set of net trades consists of stochastic integrals of the form
	$$
	\cI= \left\{ \int_0^T \theta_u \cdot dS_u\ : 
	\ \theta \in \cA_{adm}\ \right\},
	$$
	for 
	a suitable set of  {\emph{admissible}}  strategies  $\cA_{adm}$.  There are several possible choices of such a set.  
	When the stock price process $S$ is a semi-martingale one example of $\cA_{adm}$
	is the set of all $S$-integrable,
	predictable processes whose integral is bounded
	from below.{\footnote{In continuous time, 
			to avoid doubling strategies
			a lower 
			bound 
			(maybe more general than above)
			has to be imposed on the stochastic integrals.
			In such cases, the set $\cI$ is not a linear
			space.}}
	Other typical choices for $\cA_{adm}$
	would consist of simple integrands only; when $S$ is a 
	continuous process and $\cA_{adm}$ is the set of 
	process with finite variation then the above integral 
	can be defined through integration by parts (see \cite{DoS13,dolinsky2015martingale}).
\end{exm} 
In general, the absence of a common prior  poses some non-trivial technical questions about the integrability of contingent claims and net trades.
Clearly, it is possible to restrict the commodity space to  the class of bounded measurable functions (that are integrable with respect to any prior).
The condition  $\cI\subset\cH$  could be restrictive in some applications and we provide a way to overcome this difficulty in Appendix \ref{s.eh}.

\paragraph{Viability}
\label{s.viability}

In \cite{HarrisonKreps79, Kreps81}, viability is defined with the help of a single representative agent with a \emph{strictly} monotone preference relation. We allow for many agents, with \emph{weakly} monotone preferences, yet the family as a whole satisfies a strict monotonicity condition (Eqn.( \ref{EqnViable2})).
We show that one can prove the same results as Harrison and Kreps in classic cases, and one obtains easily the equivalence result in new cases involving Knightian uncertainty as in the Epstein-Ji model (Example \ref{ex.G}).
From the economic point of view, our definition of viability does not contradict the intuitive idea of an economic equilibrium because that idea does allow for  many agents and weakly monotone preferences, of course.
We thus believe that   the use of our new definition is justified (and its usefulness is illustrated by the proofs of our abstract theorems).

The reader might note that   our notion of equilibrium does not model   endowments explicitly as we assume   that the zero trade is optimal for each agent. This reduced approach comes  without loss of generality in our context.  In general, an agent is given by a preference relation $\succeq \in \cA$ and an endowment $e \in \cH$. Given the set of net trades, the agent chooses 
$\ell^* \in \cI$ such that
$ e + \ell^* \succeq e + \ell$ for all $\ell \in \cI$. By suitably modifying the preference relation, this can be reduced to the optimality of the zero trade at the zero endowment for a suitably modified preference relation. Let
$ X \succeq' Y$ if and only if $ X+e+\ell^* \succeq Y+e+\ell^*$.  It is easy to check that $\succeq'$ is also an admissible preference relation. 
For the new preference relation $\succeq'$, we then have
$0 \succeq' \ell$ if and only if $e+\ell^* \succeq e+\ell^*+\ell$. As $\cI$ is a cone, 
$\ell+\ell^*\in \cI$, and we conclude that we have indeed   $0 \succeq' \ell$
for all $\ell \in \cI$.

\paragraph{Sublinear Expectations}

Our fundamental theorem of asset pricing characterizes the absence of arbitrage with the help of a non--additive expectation $\cE$.
In decision theory, non--additive probabilities have a long history; \cite{Schmeidler89} introduces an extension of expected utility theory based on non--additive probabilities. The widely used max-min expected utility model
of \cite{GilboaSchmeidler89} is another instance.  If we define the subjective expectation of a payoff to be the minimal expected payoff over a class of priors, then the resulting notion of expectation has some of the common properties of an expectation like monotonicity and preservation of constants, but is no longer  additive. 

In our case, the non--additive expectation has a more objective than subjective flavor because it describes the pricing functional of the market. 
It assigns a nonpositive value to all  net trades; in this sense,  net trades have  the (super)martingale property under this expectation.   If we assume for the sake of the discussion that the set of net trades is a linear subspace, then the pricing functional has to be additive over that subspace. As a consequence, the value of all net trades under the sublinear pricing expectation is zero. For contingent claims that lie outside the marketed subspace, the pricing operation of the market is sub--additive. 

Whereas an additive probability measure is sufficient to characterize viable markets in probabilistic models, under uncertainty, it is more sensible to consider non--additive notion of expectation\footnote{\cite{BeissnerRiedel19} develop a general equilibrium model based on such non--additive pricing functionals.} as, for example, in the framework of Example \ref{ex.Ellsberg} because it allows to characterize fully  the ambiguity of   market prices\footnote{In some technically complex models, only \emph{sub}linear functionals can be strictly positive.}.

\paragraph{Relevant Claims}

We use the notion of relevant claims to generalize the typical approach to define arbitrage as positive net trades.  This approach introduces some additional flexibility and allows to cover    variants of the notion of arbitrage that were discussed in the literature.
For example, if some positive claim cannot be liquidated without costs, agents would  not consider a net trade that achieves such a payoff  as free lunch if the liquidation costs are larger than the potential gains. 
It is then reasonable to consider as relevant only a restricted class of positive claims, possibly  only cash.
Moreover,  relevant payoffs   identify  those  nonnegative consumption plans that some market participants strictly prefer to the null plan. 
The commodity spaces that are used to model markets with Knightian uncertainty
may be quite large, e.g. when we work with the space of all bounded, measurable functions.  In such models, it makes sense to work with a 
set of relevant claims that is smaller than the cone of positive claims, see also Example \ref{ex.G}.


We illustrate the usefulness of the notion of relevant claims  by relating our work to recent results in Mathematical Finance. Our approach gives a microeconomic foundation to the characterization of absence of arbitrage in ``robust'' or ``model--free'' finance.

Let  $\Om$ be a 
metric space with the   pointwise order $\oleq$.
In the finance literature, this approach is called
{\emph{model-independent}} as it does not rely on any probability measure. There is still a model, of course, given by $\Omega$ and the pointwise order.

A claim is nonnegative, $X \in \cP$, if $X(\om) \ge 0$
for every $\om \in \Om$
and $R \in \cP^+$ if $R \in \cP$ and there
exists $\om_0 \in \Om$ such that $R(\om_0)>0$.

In the literature
several different notions of arbitrage have been used.
Our framework allows to unify these different approaches under one framework with the help of the notion of relevant claims\footnote{One might also compare  the similar approach in
	\cite{BFM}.}.
We start with the following large
set of relevant claims
$$
\cR_{op}:= \cP^+= \left\{ R  \in \cP\ :\ \exists \om_0\in \Om\ 
{\mbox{such that}}\ 
R(\om_0)> 0  \ \right\}.
$$ 
With this notion of relevance, an investment opportunity $\ell$ is 
an arbitrage if $\ell(\om) \ge0$ for every $\om$ with a strict inequality for some $\omega$, corresponding to  the notion of
{\emph{one point arbitrage}} considered in   \cite{Riedel14}.
In this setting, no arbitrage is equivalent to the existence 
a set of martingale measures $\cQ_{op}$
so that for each point there exists $\Q \in \cQ_{op}$ putting
positive mass to that point.

Other authors (\cite{BFM,Riedel14,dolinsky2014robust}) introduced the notion of \emph{open arbitrage}. 
In probability space frameworks, the family of null-sets defines ``small``, i.e. negligible events.  In absence of a reference probability, it might be still reasonable to distinguish large and small events. When there is a topology, one can define  ``small" events as those sets that are countable unions of closed sets with empty interior (Baire first category set). Open sets are then considered as relevant.  We might then call  $\ell \in \cI$  an {\emph{open arbitrage}} if it is nonnegative and is strictly positive 
on an open set. This case can be modeled in our framework by requiring
the relevant claims to be continuous, nonnegative  functions that are different from zero somewhere, i.e.,
$$
\cR_{open}:= \left\{ R  \in C_b(\Om) \cap \cP\ :\ \exists \om_0\in \Om\ 
{\mbox{such that}}\ 
R(\om_0)> 0  \ \right\}.
$$
It is then clear that when $R \in \cR$
then it is non-zero on an open set.

\cite{Acciaioetal} 
define a claim to be an arbitrage when it is positive everywhere, corresponding, in our model, to the choice $$
\cR_{+}:= \left\{ R  \in \cP \ :\ \
R(\om)> 0,  \ \forall \ \om \in \Om\ \right\}.
$$
\cite{C} consider a slightly stronger
notion of relevant claims.  Their choice is 
\begin{equation}
	\label{e.ru}
	\cR_{u}=\left\{ R\in \cP \ :\ \exists c \in (0, \infty)\
	{\mbox{such that}}\ R \equiv c\ \right\}.
\end{equation}
Hence, $\ell \in \cI$ is an arbitrage if is 
uniformly positive, which is sometimes 
called {\emph{uniform arbitrage}}.
Notice that with the choice $\cR_u$, 
the notions of arbitrage and free lunch with vanishing
risk are equivalent\footnote{
	The no arbitrage condition with $\cR_u$ is the weakest one, 
	$\cR_{op}$ is the strongest one.  The first one
	is equivalent to the existence of one
	sublinear martingale expectation.  The latter one   is equivalent to the existence
	of a sublinear expectation that puts positive measure 
	to all points.
	In general, the no-arbitrage condition based on  $\cR_+$ is
	not equivalent to the absence of  uniform arbitrage.
	However,  absence of uniform arbitrage implies the 
	existence of a linear bounded functional that is 
	consistent with the market.  
	In particular, 
	risk neutral functionals  are positive on $\cR_u$. 
	Moreover, if the set $\cI$ is ``large'' enough then
	one can show that the risk neutral functionals give rise to  countably additive measures.  In \cite{Acciaioetal}, this conclusion  is achieved by using
	the so-called ``power-option'' placed in the set $\cI$
	as a static hedging possibility, compare also \cite{C}.}.

\section{Proof of the Theorems}
\label{s.proofsmain}
Let $(\cH,\tau, \le\cI,\cR)$ be a given financial market. 
Recall that $(\cH,\tau)$ is a metrizable topological vector space; we write $\cH^\prime$ for its topological dual. We let $\cH^\prime_+$ be the set of all positive
functionals, i.e., $\varphi \in \cH^\prime_+$
provided that $\varphi(X) \ge 0$ for every 
$X \geq 0$ and $X \in \cH$. 

The following functional generalizes the notion of super-replication functional from the probabilistic to our order-theoretic framework. It  plays a central role
in our analysis. For $X \in \cH$, let
\begin{align}
	\label{e.v}
	\cD(X):=\inf \{ \  c\in \mathbb{R}\ : \ 
	\exists  \{\ell_n\}_{n=1}^{\infty} \subset \cI, \ 
	& \{e_n\}_{n=1}^{\infty} \subset \cH_+,\  e_{n}\stackrel{\tau}{\rightarrow}0,
	\\\nonumber
	&\  {\text{such that}}\ \
	\left.  c+e_n +\ell_n \ge X \right\}.
\end{align}
Note that $\cD$ is extended real valued.  In particular,
it takes the value $+\infty$ when there are no super-replicating
portfolios. It might also take the value $-\infty$ if
there is no lower bound. 

We observe first that the absence of free lunches with vanishing 
risk  can be equivalently described by the statement that the super-replication functional $\cD$ assigns a strictly positive value to all relevant claims.

\begin{prp}
	\label{p.viable}
	The financial market is strongly free of arbitrage if and only if
	$\cD(R) >0$ for every $R \in \cR$.
\end{prp}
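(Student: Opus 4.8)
The plan is to prove the biconditional by relating the super-replication functional $\cD$ directly to the definition of a free lunch with vanishing risk. The key observation is that the condition $e_n + \ell_n \ge R^*$ appearing in the definition of a free lunch is almost literally the condition defining $\cD$ at the level $c=0$; the value $\cD(R^*)$ measures the cheapest constant $c$ for which $R^*$ can be approximately super-replicated from $c$ plus net trades. So I expect the relevant contract $R^*$ to be exactly the object that certifies $\cD(R^*) \le 0$ when a free lunch exists.

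First I would prove the contrapositive of the forward direction: suppose the market admits a free lunch with vanishing risk, witnessed by a relevant contract $R^* \in \cR$, net trades $\{\ell_n\} \subset \cI$, and nonnegative $\{e_n\} \subset \cH_+$ with $e_n \stackrel{\tau}{\to} 0$ and $e_n + \ell_n \ge R^*$. Taking $c = 0$ in the defining set \reff{e.v} for $X = R^*$, these sequences are admissible, so $\cD(R^*) \le 0$, contradicting $\cD(R) > 0$ for all $R \in \cR$. This shows that if $\cD(R) > 0$ on all relevant contracts, there can be no free lunch, i.e., the market is strongly free of arbitrage.

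For the converse, I would again argue by contraposition: suppose $\cD(R^*) \le 0$ for some $R^* \in \cR$. Then by definition of the infimum there is a sequence $c_k \downarrow \cD(R^*) \le 0$, and for each $c_k$ admissible sequences of net trades and vanishing-risk terms super-replicating $R^*$ from level $c_k$. The task is to manufacture from this a genuine free lunch for $R^*$ itself. The natural move is to absorb the constants $c_k$: since $c_k \le$ some small positive $\delta_k \to 0$ (or is nonpositive), one can fold $c_k$ into the nonnegative terms or shift the target, producing net trades $\tilde\ell_n \in \cI$ and nonnegative $\tilde e_n \stackrel{\tau}{\to} 0$ with $\tilde e_n + \tilde\ell_n \ge R^*$. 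Here I would use that $\cI$ is a convex cone containing $0$, that constants belong to $\cH$, and that the order is compatible with the vector and topological structure so that adding a vanishing constant to an already $\tau$-null sequence keeps it $\tau$-null and nonnegative.

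The main obstacle is the converse direction: when $\cD(R^*) = 0$ is attained only as an infimum (not achieved), one must carefully combine the approximating constants $c_k$ with the super-replicating sequences to extract a \emph{single} sequence exhibiting the free lunch, rather than a doubly-indexed family. The delicate point is handling the case $\cD(R^*) = 0$ exactly, where each $c_k > 0$ is strictly positive; one must verify that the leftover positive constants $c_k$ can be merged into the vanishing-risk terms $\tilde e_n$ while preserving both nonnegativity and $\tau$-convergence to zero, which requires that the constant functions $c_k \to 0$ in $\tau$ (guaranteed by compatibility of $\tau$ with the reals on constants) and a diagonal extraction to reduce the two indices to one.
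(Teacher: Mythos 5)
Your proposal is correct and follows essentially the same route as the paper: the forward direction takes $c=0$ in the definition of $\cD$, and the converse performs exactly the diagonal extraction over the double index combined with absorbing the positive part of $c_k$ into the vanishing-risk terms $\tilde e_k := e_{k,n(k)} + (c_k \vee 0)$. The delicate point you flag (the case $\cD(R^*)=0$ with $c_k>0$ strictly) is handled in the paper precisely by this absorption, using that constants tending to zero in $\mathbb{R}$ tend to zero in $\tau$.
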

\begin{proof}
	Suppose $\{\ell_n\}_{n=1}^\infty \subset \cI$ is a free lunch with vanishing risk.
	Then, there is $R^* \in \cR$ and $\{e_n\}_{n=1}^{\infty} \subset \cH_+$ with $e_{n}\stackrel{\tau}{\rightarrow}0$ so that $e_n+\ell_n\ogeq R^*$.
	In view of the definition, we obtain
	$\cD(R^*) \le 0$.
	
	To prove the converse, suppose that $\cD(R^*) \le 0$ for some
	$R^* \in \cR$.  Then, the definition of $\cD(R^*)$ implies that there is a sequence of real numbers $\{c_k\}_{k=1}^{\infty}$ with $c_k\downarrow\cD(R^*)$, net trades  $\{\ell_{k,n}\}_{n=1}^{\infty}\subset\cI$,  and   $\{e_{k,n}\}_{n=1}^{\infty}\subset\cH_+$ with $e_{k,n}\stackrel{\tau}{\rightarrow}0$ for $n\rightarrow\infty$ such that
	$$
	c_k+e_{k,n}+\ell_{k,n}\ge R^*,\qquad \forall\ n,k\in\N.
	$$
	Let $B_{r}(0)$ be the ball with radius $r$ centered at zero with the metric compatible with $\tau$.
	For every $k$, choose  $n=n(k)$ such that $e_{k,n}\in B_{\frac{1}{k}}(0)$.
	Set $\tilde{\ell}_k:=\ell_{k,n(k)}$ and $\tilde{e}_k:=e_{k,n(k)}+(c_k\vee 0)$.
	Then,
	$\tilde{e}_k+\tilde{\ell}_k\ogeq R^*$ for every $k$.  
	Since  $\tilde{e}_k\stackrel{\tau}{\rightarrow}0$, $\{\tilde{\ell}_k\}_{k=1}^\infty$ is a free lunch
	with vanishing risk.
\end{proof}

It is clear that $\cD$ is convex and
we  now use the  tools of convex duality to characterize this functional in more detail.   
Recall the set of absolutely continuous martingale functionals $\cQ_{ac}$
defined in Section \ref{s.setup}.

\begin{prop}
	\label{p.convexdual} 
	Assume that the financial market is strongly free of arbitrage. Then, the super-replication functional $\cD$ defined in \eqref{e.v} is a 
	lower semi-continuous, sublinear martingale expectation with full support.
	Moreover,
	\begin{equation*}
		\cD(X)= \sup_{\varphi \in  \cQ_{ac}}\ 
		\varphi(X), \quad X \in \cH.
	\end{equation*}
\end{prop}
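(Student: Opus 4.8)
The plan is to establish the three defining properties of a sublinear martingale expectation for $\cD$ directly from its definition, then prove lower semi-continuity, and finally obtain the dual representation via the Fenchel–Moreau theorem, identifying the subgradient set with $\cQ_{ac}$.

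\medskip

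\noindent\textbf{Structural properties.} First I would verify that $\cD$ is sublinear, translation-invariant, and monotone. These follow almost mechanically from the definition in \eqref{e.v}: since $\cI$ is a convex cone containing $0$ and $\cH_+$ is a cone, if $c$ super-replicates $X$ via $(\ell_n,e_n)$ and $c'$ super-replicates $Y$ via $(\ell_n',e_n')$, then $c+c'$ super-replicates $X+Y$ via $(\ell_n+\ell_n',e_n+e_n')$, giving subadditivity; positive homogeneity uses that $\lambda\cI=\cI$ for $\lambda>0$ since $\cI$ is a cone. Translation invariance $\cD(X+c)=\cD(X)+c$ is immediate by shifting the candidate constant, and monotonicity with respect to $\oleq$ follows because $X\oleq Y$ means any super-replication of $Y$ also super-replicates $X$. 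The full support property $\cD(R)>0$ for $R\in\cR$ is exactly Proposition \ref{p.viable} together with the standing no-arbitrage hypothesis.

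\medskip

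\noindent\textbf{Lower semi-continuity.} The key analytic point is that $\cD$ is $\tau$-lower semi-continuous, which is what permits the biconjugate to recover $\cD$. I would argue this by showing the sublevel sets $\{X:\cD(X)\le c\}$ are $\tau$-closed, exploiting that the very definition of $\cD$ already builds in a limiting procedure over sequences $e_n\stackrel{\tau}{\to}0$. Concretely, if $X_k\stackrel{\tau}{\to}X$ with $\cD(X_k)\le c$, one diagonalizes the approximating sequences: for each $k$ pick $\ell_{k}\in\cI$ and $e_{k}\in\cH_+$ small in $\tau$ with $c+\tfrac{1}{k}+e_k+\ell_k\ge X_k$, and then combine $e_k$ with the error $X-X_k$ (which is $\tau$-small but not necessarily nonnegative) to produce a legitimate super-replication of $X$ at level $c$. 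This combination is the delicate step, since $X-X_k$ need not be nonnegative, so one must route the discrepancy through the net-trade/error structure carefully; this is essentially the same maneuver used in Proposition \ref{p.viable}.

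\medskip

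\noindent\textbf{Dual representation and the main obstacle.} Once $\cD$ is an l.s.c.\ sublinear (hence convex) functional, the Fenchel–Moreau theorem gives $\cD(X)=\sup_{\varphi\in\cH'}\{\varphi(X)-\cD^*(\varphi)\}$, and since $\cD$ is positively homogeneous its conjugate $\cD^*$ is the indicator of a closed convex set $\cK\subset\cH'$, so $\cD(X)=\sup_{\varphi\in\cK}\varphi(X)$. The heart of the proof is to show $\cK=\cQ_{ac}$. Membership $\varphi\in\cK$ means $\varphi(X)\le\cD(X)$ for all $X$; I would extract the three defining conditions of $\cQ_{ac}$ from this domination. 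Monotonicity of $\cD$ forces $\varphi\in\cH'_+$; applying the inequality to $\pm 1$ and using $\cD(\pm 1)=\pm 1$ yields the normalization $\varphi(1)=1$; applying it to net trades, where $\cD(\ell)\le 0$ (since $0+0+\ell\ge\ell$ for $\ell\in\cI$), gives the martingale property $\varphi(\ell)\le 0$; and applying it to negligible $Z$ (using $\cD(Z)\le 0$ and $\cD(-Z)\le 0$) gives $\varphi(Z)=0$, i.e.\ absolute continuity. The converse inclusion $\cQ_{ac}\subseteq\cK$ amounts to checking $\varphi(X)\le\cD(X)$ for every $\varphi\in\cQ_{ac}$: given any super-replication $c+e_n+\ell_n\ge X$, positivity and linearity give $c+\varphi(e_n)+\varphi(\ell_n)\ge\varphi(X)$, and letting $n\to\infty$ one needs $\varphi(e_n)\to 0$, which holds because $\varphi$ is $\tau$-continuous and $e_n\stackrel{\tau}{\to}0$, together with $\varphi(\ell_n)\le 0$; infimizing over $c$ then yields the bound. \textbf{The main obstacle} I anticipate is the l.s.c.\ diagonalization above: making the error terms and the topological convergence of $X_k$ to $X$ interact cleanly within the cone structure of $\cI$ requires care, whereas the duality identification, once l.s.c.\ is in hand, is a routine unwinding of the domination inequality on the test elements $1$, $\ell\in\cI$, and $Z\in\cZ$.
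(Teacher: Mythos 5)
Your outline follows the paper's own route almost exactly: verify the sublinear-expectation axioms and the martingale/full-support properties directly from the definition of $\cD$ (the paper's Lemma \ref{l.convex}), prove lower semi-continuity by the diagonalization you describe (Lemma \ref{l.general}), and then apply Fenchel--Moreau and identify $dom(\cD^*)$ with $\cQ_{ac}$ by testing the domination $\varphi\le\cD$ on constants, net trades and negligible claims (Lemmas \ref{l.pzero} and \ref{l.emm}). The identification of $\cK$ with $\cQ_{ac}$ in both directions is carried out correctly, and your honest flagging of the sign issue with $X-X_k$ in the l.s.c.\ step mirrors a point the paper itself passes over lightly.

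The one genuine omission is \emph{properness}: before invoking Fenchel--Moreau you must rule out $\cD(X)=-\infty$, since a convex l.s.c.\ function taking the value $-\infty$ has biconjugate identically $-\infty$ and your set $\cK$ would be empty. This is not automatic and the paper devotes a separate argument to it: one first gets $\cD(1)\in(0,1]$ (finiteness from $1+0+0\ge 1$, strict positivity from Proposition \ref{p.viable} applied to the relevant constant claim $1$, so no-arbitrage enters here too), and then, assuming $\cD(X)=-\infty$ for some $X$, convexity forces $\cD(X+\lambda(1-X))=-\infty$ for all $\lambda\in[0,1)$, whence lower semi-continuity gives $0<\cD(1)\le\lim_{\lambda\to1}\cD(X_\lambda)=-\infty$, a contradiction. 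Without this step the sentence ``Fenchel--Moreau gives $\cD=\cD^{**}$'' is unjustified, and with it your argument is complete and coincides with the paper's.
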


The technical proof of this statement can be found in Appendix \ref{ss.super-replication}. The important insight is that the super-replication functional can be described by a family of linear functionals. In the probabilistic setup, they correspond to the family of (absolutely continuous) martingale measures.
With the help of this duality, we are now able to carry out the proof of our first main theorem.

\begin{proof}[Proof of Theorem \ref{t.viable}]
	Suppose first that the market is viable
	and for some  $R^*\in \cR$,  there are sequences $\{e_n\}_{n=1}^{\infty} \subset \cH_+$ and $\{\ell_n\}_{n=1}^{\infty} \subset \cI$ with $e_{n}\stackrel{\tau}{\rightarrow} 0$, 
	and  $e_n+\ell_n \ge R^*$.
	By viability, there is   a family of agents  $\{\preceq_a\}_{a\in A}\subset \cA$ 
	such that for some $a \in A$ we have $R^*\succ_a 0$.
	Since $\le$ is a pre-order compatible with the vector space operations,
	we  have
	$-e_n  + R^* \le \ell_n$.
	As   $\preceq_a \,\in \cA$ is monotone
	with respect to $\le$, we have
	$-e_n  + R^* \preceq_a \ell_n$.
	By optimality of the zero trade,  $\ell_n \preceq_a 0$, and we get $-e_n  + R^* \preceq_a 0$. 
	By lower semi--continuity of $\preceq_{a}$, we conclude that
	$R^* \preceq_a 0$,
	a contradiction.
	
	Suppose now that the market is strongly free of arbitrage. 
	By Proposition \ref{p.viable},
	$\cD(R) > 0$,
	for every $R\in \cR$.  
	In particular, this implies that
	the family  $\cQ_{ac}$ is non-empty,
	as otherwise the supremum over $\cQ_{ac}$ would be $-\infty$.
	For each $\varphi \in \cQ_{ac}$,
	define $\preceq_\varphi$ by,
	$$
	X \preceq_\varphi Y , \quad
	\Leftrightarrow
	\quad
	\varphi(X) \le \varphi(Y).
	$$
	One directly verifies that $\preceq_\varphi \in \cA$.
	Moreover, $\varphi(\ell) \le \varphi(0)=0$ for any $\ell\in\cI$ implies that $\ell^*_\varphi=0$ is optimal for $\preceq_{\varphi}$ and \eqref{EqnViable1} is satisfied.
	Finally, Proposition \ref{p.viable} and Proposition \ref{p.convexdual} imply that for any $R\in\cR$, there exists 
	$\varphi\in \cQ_{ac}$ such that $\varphi(R)>0$; thus, \eqref{EqnViable2} is satisfied.
	We deduce that   $\{\preceq_\varphi\}_{\varphi \in \cQ_{ac}}$  supports the financial market  $(\cH,\tau, \oleq,\cI,\cR)$.
\end{proof}

The previous 
arguments also imply our version of the fundamental theorem of asset pricing. In fact, with absence of arbitrage, the super-replication function is a lower semi-continuous sublinear martingale expectation with full support. Convex duality allows to prove the converse.

\begin{proof}[Proof of Theorem \ref{t.subexp}]
	Suppose the market is viable. From Theorem \ref{t.viable}, it is strongly free of arbitrage. 
	From Proposition \ref{p.convexdual}, the super-replication functional  is the desired lower semi-continuous sublinear martingale expectation with full support.

	Suppose now that $\cE$ is a lower semi-continuous sublinear martingale expectation with full support. 
	In particular, $\cE$ is a convex, lower semi-continuous, proper functional.  Then,  by
	the Fenchel-Moreau theorem,
	$$
	\cE(X)= \sup_{\varphi \in dom(\cE^*)} \varphi(X),
	$$
	where
	$dom(\cE^*)=\left\{
	\varphi \in \cH^\prime\ :\
	\varphi(X) \le \cE(X), \ \forall \ X \in \cH
	\ \right\}$. 
	We can proceed as in the proof of Theorem \ref{t.viable},
	to verify the viability of $\mkt$ using
	the preference relations $\{\preceq_\varphi\}_{\varphi\in dom(\cE^*)}$.
	
	We finally show the maximality of $\cE_{\cQ_{ac}}$. Let $\cE$ be a lower semi-continuous sublinear martingale expectation with full support. With the help of the martingale property of $\cE$ one can show, as in Lemma \ref{l.emm}, that every $\varphi \in dom(\cE^*)$ is a  martingale functional. As $\cE$ is monotone with respect to $\le$, we also conclude that $\varphi$ vanishes for negligible payoffs. Hence, we obtain $dom(\cE^*) \subset\cQ_{ac}$. From the above dual representation for $\cE_{\cQ_{ac}}$,   $\cE(X)\le \cE_{\cQ_{ac}}(X)$ for every $X\in\cH$ follows.
\end{proof}

\section{Conclusion}

This paper studies      economic viability of a given financial market without assuming   a probability-space framework.
We show that it is possible to  understand the equivalence of economic viability and the absence of arbitrage on the basis  solely of a common  order; the order (which is typically quite incomplete)   is unanimous in the sense that agents' preferences are monotone with respect to it. 
A given financial market is viable if and only if a \emph{sublinear} pricing functional exists that is consistent with the given asset prices. 

The properties of the common order are reflected in expected equilibrium returns.  When the common order is given  by the expected value under some common prior,  expected returns under that prior have to be equal in equilibrium, and thus,  Fama's Efficient Market Hypothesis results. 
If  the common order is determined by  the almost sure order  under some common prior, we obtain the 
weak form of the efficient market hypothesis that states that  expected returns are equal under some  
(martingale) measure that shares the same null sets as the common prior. 

In situations of Knightian uncertainty, it might be too demanding to impose a common prior for all agents. When Knightian uncertainty is described by a class of priors,  it is necessary  to replace the linear (martingale) expectation by a  sublinear expectation. It is then no longer possible to reach the conclusion that expected returns are equal under some probability measure. Knightian uncertainty might thus be an explanation for empirical violations of the Efficient Market Hypothesis.  In particular, there is always a range of economically justifiable arbitrage--free prices. 
In this sense,  Knightian uncertainty shares  similarities with markets with friction
or that are incomplete, but the economic reason for the 
price indeterminacy is different. 


\begin{appendix}
	
	\label{s.appendix}

	\section{Proof of Proposition \lowercase{\ref{p.convexdual}}}
	\label{ss.super-replication}
	
	We separate the proof in several steps.
	Recall that the super-replication functional $\cD$ is defined in \eqref{e.v}.
	
	\begin{lem}
		\label{l.general} Assume that the financial market is strongly free of arbitrage. Then,
		$\cD$  is convex, lower semi--continuous and 
		$\cD(X)>-\infty$ for every $X \in \cH$.
	\end{lem}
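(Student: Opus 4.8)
The statement splits into three parts, and I would treat them in the order convexity, lower semicontinuity, $\cD>-\infty$, the last being where strong absence of arbitrage finally enters. Throughout it is convenient to introduce the convex cone $\cC:=\cI-\cH_+=\{\ell-e:\ell\in\cI,\ e\in\cH_+\}$, which contains $0$ and is downward closed for $\le$. Two elementary facts are recorded first: $\cD$ is monotone (if $X\le Y$, every super-replicating datum of $Y$ also super-replicates $X$, so $\cD(X)\le\cD(Y)$) and translation invariant ($\cD(X+c)=\cD(X)+c$ directly from \eqref{e.v}). Convexity is then immediate: given $c_X>\cD(X)$ and $c_Y>\cD(Y)$ with witnessing sequences $(\ell_n^X,e_n^X)$ and $(\ell_n^Y,e_n^Y)$, the combinations $\lambda\ell_n^X+(1-\lambda)\ell_n^Y\in\cI$ and $\lambda e_n^X+(1-\lambda)e_n^Y\in\cH_+$ still tend to $0$ and super-replicate $\lambda X+(1-\lambda)Y$ at cost $\lambda c_X+(1-\lambda)c_Y$; taking infima gives $\cD(\lambda X+(1-\lambda)Y)\le\lambda\cD(X)+(1-\lambda)\cD(Y)$.

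For lower semicontinuity I would show that each sublevel set $\{X:\cD(X)\le c\}$ is $\tau$-closed; since $\tau$ is metrizable this is equivalent to the claimed property. The key reformulation is $\cD(X)\le c\iff X-c\in\overline{\cC}$. The forward direction is clear: $\cD(X)\le c$ means that for every $\delta>0$ the cost $c+\delta$ is feasible, so $(c+\delta)+e_n+\ell_n\ge X$ yields $X-c-\delta-e_n\in\cC$ with $e_n\to0$, whence $X-c-\delta\in\overline{\cC}$ and, letting $\delta\downarrow0$, $X-c\in\overline{\cC}$. Thus $\{X:\cD(X)\le c\}=c+\overline{\cC}$ is a translate of a closed set and is closed. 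The delicate point — and the main obstacle of the whole lemma — is the converse inclusion, namely turning a topological approximation $\ell_m-p_m\to X-c$ into a super-replication with \emph{genuinely nonnegative} vanishing errors $e_n\in\cH_+$. The naive error $(X-c)-(\ell_m-p_m)$ tends to $0$ but need not be nonnegative, so one must dominate this null sequence by a nonnegative null sequence. This is exactly where a compatibility between $\le$ and $\tau$ (a normal, or locally solid, positive cone, as is available in the concrete spaces of Section \ref{s.examples}) is needed: with such a property one replaces the error by a nonnegative majorant and closes the argument.

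Finally, $\cD(X)>-\infty$ is where strong absence of arbitrage is invoked, through Proposition~\ref{p.viable}, which guarantees $\cD(R)>0$ for every $R\in\cR$. Suppose for contradiction that $\cD(X_0)=-\infty$ for some $X_0$. Since $\cD(0)\le0$ trivially, convexity gives $\cD(\lambda X_0)\le\lambda\cD(X_0)+(1-\lambda)\cD(0)=-\infty$ for all $\lambda\in(0,1]$; letting $\lambda\downarrow0$ and using the lower semicontinuity just established along $\lambda X_0\to0$ forces $\cD(0)=-\infty$. Translation invariance then gives $\cD(k)=\cD(0)+k=-\infty$ for every constant $k$, i.e. arbitrarily large positive constants are super-replicable from zero with vanishing risk. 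Choosing a relevant contract dominated by a constant (the generic situation, e.g. when relevant claims are order bounded or constants are themselves relevant) and using monotonicity of $\cD$ produces $\cD(R)=-\infty\le0$, contradicting $\cD(R)>0$. Hence $\cD(X)>-\infty$ for all $X$. I expect the normal-cone/solidity step inside the lower semicontinuity argument to be the only genuinely technical obstacle, with convexity and the $-\infty$ bound following quickly from it together with Proposition~\ref{p.viable}.
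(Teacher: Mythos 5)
Your convexity argument and your treatment of $\cD>-\infty$ are both sound and essentially match the paper. For the finiteness part the paper runs the segment $X_\lambda=X+\lambda(1-X)$ toward the constant $1$, which it asserts is relevant, and combines lower semicontinuity with $\cD(1)\in(0,1]$ from Proposition~\ref{p.viable}; you instead run toward $0$, deduce $\cD(0)=-\infty$, and then use translation invariance together with a relevant claim dominated by a constant. The two routes are equivalent, and the hypothesis you hedge on (``constants are themselves relevant'') is exactly the one the paper invokes without comment.

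The genuine gap is in the lower semicontinuity step, and you have located it correctly but not closed it. You reduce closedness of the sublevel set $\{\cD\le c\}$ to the identity $\{\cD\le c\}=c+\overline{\cC}$ with $\cC=\cI-\cH_+$, prove only the inclusion $\{\cD\le c\}\subseteq c+\overline{\cC}$, and then conclude closedness --- but that conclusion needs the reverse inclusion, which is precisely the ``delicate point'' you defer to an unstated normality/local-solidity assumption. As written, the lemma is therefore not proved under its stated hypotheses. For comparison, the paper argues sequentially: given $X_k\to X$ with $\cD(X_k)\le c$, it diagonalizes to get $\tilde e_k\in B_{1/k}(0)$ and $\tilde\ell_k\in\cI$ with $c+\frac1k+\tilde e_k+(X-X_k)+\tilde\ell_k\ge X$, and treats $\frac1k+\tilde e_k+(X-X_k)$ as the new vanishing error. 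That quantity need not lie in $\cH_+$, so the paper is implicitly performing (or skipping) exactly the domination of a signed null sequence by a nonnegative null sequence that you flag; your diagnosis of the obstacle is accurate and indeed sharper than the paper's own treatment. To turn your proposal into a proof you must either carry out that domination step (e.g.\ by passing to the positive part of the error in a setting where the lattice operations are $\tau$-continuous, as in all of the paper's concrete examples) or adopt the reading of \eqref{e.v} in which the error sequence is only required to converge to zero, in which case both your argument and the paper's close immediately.
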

	\begin{proof}
		The convexity of $\cD$ follows immediately from the definitions.
		To prove lower semi-continuity, consider a sequence
		$X_k\stackrel{\tau}{\rightarrow}X$ with $\cD(X_k)\le c$.
		Then, by definition, for every $k$ there exists a sequence  
		$\{e_{k,n}\}_{n=1}^{\infty}\subset\cH_+$ with $e_{k,n}\stackrel{\tau}{\rightarrow} 0$ 
		for $n\rightarrow\infty$ and a sequence $\{\ell_{k,n}\}_{n=1}^{\infty}\subset\cI$ 
		such that
		$c+\frac{1}{k}+e_{k,n}+\ell_{k,n}\ge X_k$, for every $k,n$.
		Let $B_{r}(0)$ be the ball of radius $r$ centered around zero
		in the metric compatible with $\tau$.  Choose $n=n(k)$ such that 
		$e_{k,n}\in B_{\frac{1}{k}}$
		and set  $\tilde{e}_k:=e_{k,n(k)}$, $\tilde{\ell}_k:=\ell_{k,n(k)}$.
		Then, $ c+\frac{1}{k}+\tilde{e}_k+(X-X_k)+\tilde{\ell}_k\ge X$
		and $\frac{1}{k}+\tilde{e}_k+(X-X_k)\stackrel{\tau}{\rightarrow}0$ as $k\rightarrow\infty$.
		Hence, $\cD(X)\le c$.  This proves that $\cD$ is lower semi-continuous.
		
		The constant claim $1$ is relevant and by Proposition \ref{p.viable},
		$\cD(1) \in (0,1]$; in particular, it is finite.
		Towards a counter-position, suppose that there exists $X\in\cH$ such that $\cD(X)=-\infty$. 
		For $\lambda \in [0,1]$, set $X_\lambda:=X+\lambda(1-X)$.
		The convexity of $\cD$ implies that $\cD(X_\lambda)=-\infty$ for every $\lambda\in [0,1)$. Since $\cD$ is lower semi-continuous,  
		$0<\cD(1)\le\lim_{\lambda\to 1} \cD(X_\lambda)=-\infty$,
		a contradiction.
	\end{proof}

	\begin{lem}
		\label{l.convex} Assume that the financial market is strongly free of arbitrage. The super-replication functional $\cD$
		is a sublinear expectation with full-support. Moreover,
		$\cD(c)=c$ for every $c\in\R$, and
		\begin{equation}
			\label{e.ell}
			\cD(X+ \ell) \le \cD(X), \quad \forall \ \ell \in \cI, \ X \in \cH.
		\end{equation}
		In particular, $\cD$ has the martingale property.
	\end{lem}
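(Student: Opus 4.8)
The plan is to verify each asserted property of $\cD$ directly from its definition in \eqref{e.v}, treating the sublinearity, normalization, the inequality \eqref{e.ell}, and the full-support property as separate routine checks. Throughout, I would repeatedly use the structure of the defining set: $c$ is admissible for $X$ precisely when there exist net trades $\ell_n \in \cI$ and nonnegative $e_n \to 0$ with $c + e_n + \ell_n \ge X$.

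First I would establish positive homogeneity. Given $\lambda > 0$, if $c + e_n + \ell_n \ge X$, then scaling by $\lambda$ and using that $\cI$ is a cone (so $\lambda \ell_n \in \cI$) and that $\lambda e_n \to 0$ remains nonnegative gives $\lambda c + \lambda e_n + \lambda \ell_n \ge \lambda X$; this yields $\cD(\lambda X) \le \lambda \cD(X)$, and the reverse follows by applying the same to $\lambda^{-1}$. For subadditivity, I would take admissible representations for $X$ and $Y$ separately (with witnesses $c^X, \ell^X_n, e^X_n$ and $c^Y, \ell^Y_n, e^Y_n$) and add them: since $\cI$ is a convex cone it is closed under addition, $\ell^X_n + \ell^Y_n \in \cI$, and $e^X_n + e^Y_n \to 0$ is nonnegative, so $(c^X + c^Y) + (e^X_n + e^Y_n) + (\ell^X_n + \ell^Y_n) \ge X + Y$, giving $\cD(X+Y) \le \cD(X) + \cD(Y)$. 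Monotonicity with respect to $\le$ is immediate, since any representation super-replicating $Y$ also super-replicates $X$ whenever $X \le Y$.

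Next I would prove $\cD(c) = c$ for real $c$. The inequality $\cD(c) \le c$ is trivial by taking $\ell_n = e_n = 0$. For $\cD(c) \ge c$, I would argue by translation-invariance combined with absence of arbitrage: once I show $\cD(X + c) = \cD(X) + c$ for constants $c$ (which follows directly from the definition, since $0 \in \cI$ and adding a constant to the budget shifts the super-replicated target by the same constant), the claim $\cD(c) \ge c$ reduces to $\cD(0) \ge 0$, and $\cD(0) \ge 0$ holds because $\cD(0) < 0$ would contradict the strong no-arbitrage characterization of Proposition \ref{p.viable} (taking any relevant $R$ and using homogeneity). The inequality \eqref{e.ell} is perhaps the cleanest: if $c + e_n + \ell_n \ge X + \ell$ for a fixed $\ell \in \cI$, then $c + e_n + (\ell_n - \ell) \ge X$ — wait, this requires $\ell_n - \ell \in \cI$, which need not hold for a mere cone. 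Instead I would run it the other direction: any representation for $X$ gives $c + e_n + (\ell_n + \ell) \ge X + \ell$ with $\ell_n + \ell \in \cI$, so $\cD(X + \ell) \le \cD(X)$, which is exactly \eqref{e.ell}. Taking $X = 0$ then gives $\cD(\ell) \le \cD(0) = 0$, the martingale property.

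The main obstacle I anticipate is the full-support property, $\cD(R) > 0$ for every $R \in \cR$, since this is the one place where strong absence of arbitrage is genuinely needed rather than just the cone structure. Fortunately this has already been isolated as Proposition \ref{p.viable}, which states exactly the equivalence between strong freedom from arbitrage and $\cD(R) > 0$ on $\cR$, so under the standing hypothesis of the lemma I would simply invoke it. The only subtlety to handle carefully is the interplay with Lemma \ref{l.general}, which guarantees $\cD > -\infty$ everywhere; this finiteness is what makes the sublinearity manipulations meaningful (avoiding $\infty - \infty$ issues), so I would cite it before performing the additive arguments above.
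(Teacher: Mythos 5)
Your proposal is correct and follows essentially the same route as the paper: direct verification of monotonicity, translation-invariance, subadditivity and positive homogeneity from the definition of $\cD$ using the cone structure of $\cI$, the observation that $\cD(0)<0$ would violate strong absence of arbitrage (so $\cD(c)=c$ by translation-invariance), the one-directional argument $c+e_n+(\ell_n+\ell)\ge X+\ell$ for \eqref{e.ell}, and an appeal to Proposition \ref{p.viable} for full support. Your explicit remark that Lemma \ref{l.general} is needed to rule out $\infty-\infty$ in the additive steps matches the paper's use of $\cD>-\infty$ in the subadditivity argument, and your self-correction on the direction of \eqref{e.ell} lands exactly on the paper's argument.
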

	\begin{proof} 
		We prove this result in two steps.
		
		{\emph{Step 1.}} In this step we prove that $\cD$
		is a sublinear expectation.
		Let $X,Y \in \cH$ such that $X\oleq Y$. Suppose that there are $c \in \R$, $\{\ell_n\}_{n=1}^\infty \subset \cI$ and $\{e_n\}_{n=1}^{\infty} \subset \cH_+$ with $e_{n}\stackrel{\tau}{\rightarrow}0$ satisfying, $Y\oleq c + e_n+ \ell_n$.  Then, from the transitivity of 
		$\oleq$, we also have $X\oleq c + e_n+ \ell_n$.
		Hence, $\cD(X) \le \cD(Y)$ and consequently
		$\cD$ is monotone with respect to $\oleq$.
		
		Translation-invariance,
		$\cD(c+ g)= c + \cD(g)$, follows directly
		from the definitions.
		
		We next show that $\cD$
		is sub-additive. Fix $X,Y \in \cH$.
		Suppose that either $\cD(X)=\infty$ or 
		$\cD(Y)=\infty$.  Then, 
		since by Lemma \ref{l.general}
		$\cD>-\infty$, we have $\cD(X)+ \cD(Y)  = \infty$
		and the sub-additivity follows directly.
		Now we consider the case
		$\cD(X), \cD(Y) < \infty$. 
		Hence,
		there are  $c_X, c_Y \in \R$, $\{\ell_n^X\}_{n=1}^{\infty},\{\ell_n^Y\}_{n=1}^\infty \subset \cI$ and 
		$\{e^X_n\}_{n=1}^{\infty},\{e^Y_n\}_{n=1}^{\infty} \subset \cH_+$ with 
		$e^X_n,e^Y_{n}\stackrel{\tau}{\rightarrow}0$ satisfying,
		$$
		c_X + \ell_n^X +e^X_n\ogeq  X ,\quad
		c_Y + \ell_n^Y+e^Y_n \ogeq Y .
		$$
		Set  $\bar c: = c_X+c_Y$,
		$\bar \ell_n:= \ell_n^X+\ell_n^Y$, $\bar e_n:= e^X_n+e^Y_n$ .
		Since $\cI$ is a positive
		cone, $\{\bar \ell_n\}_{n=1}^{\infty} \subset \cI$, $\bar e_{n}\stackrel{\tau}{\rightarrow}0$ and
		$$
		\bar c +\bar{e}_n +\bar \ell_n  \ogeq   X +Y\quad
		\Rightarrow
		\quad
		\cD(X+Y) \le \bar c.
		$$
		Since this holds for any such $c_X, c_Y$,
		we conclude that
		$$
		\cD(X+Y) \le  \cD(X) + \cD(Y).
		$$
		
		Finally we show that $\cD$ is positively homogeneous
		of degree one. Suppose that $c+e_n+\ell_n \ogeq  X$ for some
		constant $c$, $\{\ell_n\}_{n=1}^\infty \subset \cI$ and $\{e_n\}_{n=1}^{\infty} \subset \cH_+$ with $e_{n}\stackrel{\tau}{\rightarrow}0$.
		Then, for any $\lambda >0$ and for any $n\in\N$, 
		$\lambda c + \lambda e_n+ \lambda \ell_n \ogeq \lambda X$.
		Since $\lambda \ell_n \in \cI$ and $\lambda e_{n}\stackrel{\tau}{\rightarrow}0$, this implies that
		\begin{equation}
			\label{e.ph}
			\cD(\lambda X ) \le \lambda \ \cD(X), \quad\ \lambda >0, \ X \in \cH.
		\end{equation}
		Notice that above holds trivially
		when $\cD(X)= +\infty$.
		Conversely, if $\cD(\lambda X )=+\infty$ we are done. Otherwise,
		we use \reff{e.ph} with $\lambda X$ and $1/\lambda$,
		$$
		\cD(X)= \cD\left(\frac1{\lambda} \lambda X\right) \le
		\frac1{\lambda} \cD(\lambda X), \quad
		\Rightarrow
		\quad
		\lambda \cD(X) \le \cD(\lambda X).
		$$
		Hence, $\cD$ positively homogeneous
		and it is a sublinear expectation.\\
		
		{\emph{Step 2.}} In this step, 
		we assume that the financial market is
		strongly free of arbitrages. 
		Since $0\in\cI$, we have $\cD(0)\leq 0$. 
		If the inequality is strict we obviously have a free lunch with vanishing risk, 
		hence $\cD(0)= 0$ and from translation-invariance the same 
		applies to every $c\in\R$. Moreover, by Proposition \ref{p.viable}, 
		$\cD$ has full support. 
		Thus, we only need to prove \reff{e.ell}.
		
		Suppose that $X \in \cH, \ell \in \cI$ and $c+e_n+\ell_n^X \ogeq X$. Since
		$\cI$ is a convex cone, $\ell_n^X+\ell \in \cI$ 
		and $c+e_n+(\ell + \ell_n^X)\ogeq X+\ell$.
		Therefore,
		$\cD(X+\ell) \le c$.  Since this holds for all
		such constants, we conclude that $\cD(X+\ell) \le \cD(X)$
		for  all $X \in \cH$. In particular  $\cD(\ell) \le 0$ and the martingale property is satisfied.
	\end{proof}

	\begin{rem}
		\label{rem:classicSH}
		Note that for $\cH=(\bdd, \|\cdot\|_\infty)$, the definition of $\cD$ reduces to the classical one:
		\begin{equation}\label{e.v.classical}
			\cD(X):=\inf \left\{\ c\in \R\ :\
			\exists \ \ell \in \cI,\ {\mbox{such that}}\
			c+\ell \ogeq X\ \right\}.
		\end{equation}
		
		Indeed, if $c+\ell \ogeq X$ for some $c$ and $\ell$, one can use the constant sequences $\ell_n \equiv\ell$ and $e_n\equiv 0$ to get that $\cD$ in \eqref{e.v} is less or equal than the one in \eqref{e.v.classical}. 
		For the converse inequality observe that if  $c+e_n +\ell_n \ge X$ for some $c,\ell_n$ and $e_n$ with $\|e_n\|_\infty\rightarrow 0$, then the infimum in \eqref{e.v.classical} is less or equal than $c$. The thesis follows. Lemma \ref{l.general} is in line with the well known fact that the classical super-replication functional in $\bdd$ is Lipschitz continuous with respect to the sup-norm topology.
	\end{rem}
	
	The results of Lemma \ref{l.general} and Lemma \ref{l.convex} imply that
	the super-replication functional defined in \eqref{e.v} is a proper convex function
	in the language of convex analysis, compare, e.g., \cite{Rockafellar}. By the 
	classical Fenchel-Moreau theorem, we have the
	following dual representation of $\cD$,
	\begin{align*}
		\cD(X)&= \sup_{\varphi \in \cH^\prime}\ 
		\left\{ \varphi(X) - \cD^*(\varphi)\right\}, \quad X \in \cH,\quad{\mbox{where}}\\
		\cD^*(\varphi)&= \sup_{Y \in \cH}\ 
		\left\{ \varphi(Y) - \cD(Y)\right\}, \quad \varphi \in \cH^\prime.
	\end{align*}
	Since $\varphi(0)=\cD(0)=0$, 
	$\cD^*(\varphi) \ge \varphi(0)
	-\cD(0)=0$ for every $\varphi \in \cH^\prime$.  
	However, it may take the
	value plus infinity. Set, 
	$$
	dom(\cD^*):= \left\{\ \varphi \in \cH^\prime\ : 
	\ \cD^*(\varphi)< \infty\right\}.
	$$
	
	\begin{lem}
		\label{l.pzero} We have
		\begin{equation}
			\label{e.domain}
			dom(\cD^*)= \left\{\ \varphi \in  \cH^\prime_+\ : 
			\ \cD^*(\varphi)=0\right\}
			= \left\{\ \varphi \in  \cH^\prime_+\ : 
			\ \varphi(X) \le \cD(X ), \ \ \forall \ X \in \cH \right\}  .
		\end{equation}
		In particular,
		\begin{equation*}
			\cD(X)= \sup_{\varphi \in  dom(\cD^*)}\ 
			\varphi(X), \quad X \in \cH.
		\end{equation*}
		Furthermore, there are free lunches with vanishing risk 
		in the financial market, whenever $ dom(\cD^*)$ is empty.
	\end{lem}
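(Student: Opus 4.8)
The plan is to compute the convex conjugate $\cD^*$ explicitly from the positive homogeneity of $\cD$ (Lemma \ref{l.convex}) and then read the three descriptions of $dom(\cD^*)$ off it. The key point is that a positively homogeneous $\cD$ has a conjugate taking only the values $0$ and $+\infty$. We already know $\cD^*(\varphi)\ge \varphi(0)-\cD(0)=0$. If $\varphi(X)\le\cD(X)$ for every $X\in\cH$, then $\varphi(Y)-\cD(Y)\le 0$ for all $Y$, so $\cD^*(\varphi)=0$. Otherwise $\varphi(X_0)>\cD(X_0)$ for some $X_0$, and scaling along the ray $\{\lambda X_0:\lambda>0\}$ together with $\cD(\lambda X_0)=\lambda\cD(X_0)$ sends $\varphi(\lambda X_0)-\cD(\lambda X_0)=\lambda\bigl(\varphi(X_0)-\cD(X_0)\bigr)\to+\infty$, so $\cD^*(\varphi)=+\infty$. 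This gives
\[
dom(\cD^*)=\{\varphi\in\cH^\prime:\cD^*(\varphi)=0\}=\{\varphi\in\cH^\prime:\varphi(X)\le\cD(X)\ \forall X\in\cH\},
\]
which already matches the second and third sets apart from the positivity requirement.

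It remains to verify that every functional dominated by $\cD$ automatically lies in $\cH^\prime_+$. I would deduce this from the monotonicity of $\cD$ and from $\cD(0)\le 0$ (both recorded in Lemma \ref{l.convex}): for $X\ge 0$ we have $-X\le 0$, whence $\cD(-X)\le\cD(0)\le 0$, and therefore $\varphi(X)=-\varphi(-X)\ge-\cD(-X)\ge 0$. Thus $\varphi\ge 0$ and the three sets coincide. The representation $\cD(X)=\sup_{\varphi\in dom(\cD^*)}\varphi(X)$ then follows immediately from the Fenchel--Moreau identity $\cD(X)=\sup_{\varphi\in\cH^\prime}\{\varphi(X)-\cD^*(\varphi)\}$ recorded just before the lemma, since $\cD^*$ vanishes on $dom(\cD^*)$ and equals $+\infty$ off it.

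For the final assertion I would argue by contraposition. Suppose $dom(\cD^*)=\emptyset$ but, for contradiction, the market were strongly free of arbitrage. Then Lemma \ref{l.general} gives $\cD(X)>-\infty$ for every $X$, and since trivially $\cD(1)\le 1<\infty$, the functional $\cD$ is proper, lower semi--continuous and convex. Such a functional satisfies $\cD=\cD^{**}$, which forces its conjugate to be finite somewhere, i.e. $dom(\cD^*)\neq\emptyset$, a contradiction. Hence $dom(\cD^*)=\emptyset$ implies that the market is not strongly free of arbitrage, that is, it admits a free lunch with vanishing risk.

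The only genuinely non-mechanical step is the positivity upgrade in the second paragraph, where the order structure enters through the monotonicity of $\cD$ rather than through any property of the dual pairing; everything else is the standard behaviour of conjugates of sublinear functionals. One must also be careful that the final clause rests on properness from Lemma \ref{l.general}, and not on the representation itself, since invoking the representation would be circular precisely in the case $dom(\cD^*)=\emptyset$.
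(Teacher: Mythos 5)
Your proposal is correct and follows essentially the same route as the paper: positive homogeneity of $\cD$ forces $\cD^*$ to take only the values $0$ and $+\infty$ (your scaling along a ray where $\varphi(X_0)>\cD(X_0)$ is the contrapositive of the paper's step of letting $\lambda\to\infty$ in $\varphi(X)\le\cD(X)+\tfrac{1}{\lambda}\cD^*(\varphi)$), positivity of $\varphi$ comes from monotonicity of $\cD$ and $\cD(-X)\le\cD(0)\le 0$ exactly as in the paper, and the final clause is the same properness argument via Lemma \ref{l.general} and Proposition \ref{p.viable}. Your explicit remark that the last clause must rest on properness rather than on the biconjugate identity itself is a slightly more careful phrasing of what the paper does implicitly, but it is not a different argument.
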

	\begin{proof} Clearly the two sets on the right of \reff{e.domain}
		are equal and included in $dom(\cD^*)$.
		The definition of $\cD^*$ implies that
		$$
		\varphi(X) \le \cD(X ) + \cD^*(\varphi), \quad
		\forall \ X \in \cH, \ \varphi \in \cH^\prime.
		$$
		By  homogeneity,
		$$
		\varphi(\lambda X) \le \cD(\lambda  X) + \cD^*(\varphi),\quad
		\Rightarrow
		\quad
		\varphi(X) \le \cD(X) + \frac1{\lambda} \cD^*(\varphi),
		$$
		for every $\lambda >0$ and $X \in \cH$.  
		Suppose that $\varphi \in dom(\cD^*)$.
		We then let $\lambda$ go to infinity to arrive at
		$\varphi(X) \le \cD(X)$ for all $X \in \cB_b$.  Hence,
		$\cD^*(\varphi)=0$.   
		
		Fix $X \in \cH_+$.
		Since $\oleq$ is monotone
		with respect to the pointwise order, $-X \oleq 0$.
		Then, by the monotonicity of $\cD$,
		$\varphi(-X) \le \cD(-X)\le \cD(0)\leq 0$.  Hence,
		$\varphi \in \cH^\prime_+$.
		
		\vspace{3pt}
		
		Now suppose that $dom(\cD^*)$ is empty or, equivalently, 
		$\cD^*\equiv \infty$.  Then, the dual
		representation implies that $\cD \equiv -\infty$. 
		In view of Proposition \ref{p.viable},
		there are free lunches with vanishing risk
		in the financial market.
	\end{proof}
	
	We next show that, under the assumption of absence of free 
	lunch with vanishing risk with respect to 
	any  $\cR$, the set $ dom(\cD^*)$ is equal to
	$\cQ_{ac}$ defined in Section \ref{s.setup}.
	Since any relevant set $\cR$ by hypothesis
	contains $\cR_u$ defined in \reff{e.ru},
	to obtain this conclusion it would be 
	sufficient to assume the
	absence of free 
	lunch with vanishing risk with respect to 
	any  $\cR_u$.

	\begin{lem}
		\label{l.emm}
		Suppose the financial market is strongly free of arbitrage with respect to $\cR$.
		Then, $dom(\cD^*)$ is equal to the set of absolutely continuous martingale 
		functionals $\cQ_{ac}$.
	\end{lem}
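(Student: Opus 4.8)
The plan is to establish the two inclusions separately, relying on the dual description $dom(\cD^*)=\{\varphi\in\cH^\prime_+ : \varphi(X)\le\cD(X)\ \text{for all}\ X\in\cH\}$ obtained in Lemma \ref{l.pzero}, together with the structural properties of $\cD$ proved in Lemma \ref{l.convex}: under strong absence of arbitrage $\cD$ is a sublinear martingale expectation with $\cD(c)=c$ for every $c\in\R$ and $\cD(\ell)\le 0$ for every $\ell\in\cI$. Recall also that, by the very definition of $\cH^\prime_+$, every $\varphi\in\cH^\prime_+$ is monotone with respect to the common order $\oleq$.

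First I would prove $dom(\cD^*)\subseteq\cQ_{ac}$. Fix $\varphi\in dom(\cD^*)$, so that $\varphi\in\cH^\prime_+$ and $\varphi(X)\le\cD(X)$ for all $X$. Normalization follows from $\varphi(1)\le\cD(1)=1$ combined with $\varphi(-1)\le\cD(-1)=-1$, which together force $\varphi(1)=1$. The martingale property is immediate, since $\varphi(\ell)\le\cD(\ell)\le 0$ for every $\ell\in\cI$. Absolute continuity follows from positivity alone: if $Z$ is negligible then $0\oleq Z$ and $Z\oleq 0$, so $\varphi(Z)\ge 0$ and $\varphi(-Z)\ge 0$, whence $\varphi(Z)=0$. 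Hence $\varphi\in\cQ_{ac}$.

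For the reverse inclusion $\cQ_{ac}\subseteq dom(\cD^*)$, I would fix $\varphi\in\cQ_{ac}$ and verify $\varphi(X)\le\cD(X)$ for every $X\in\cH$. There is nothing to prove when $\cD(X)=+\infty$, so pick any $c$ admissible in the definition \eqref{e.v} of $\cD(X)$: there exist $\{\ell_n\}_{n=1}^\infty\subset\cI$ and $\{e_n\}_{n=1}^\infty\subset\cH_+$ with $e_n\stackrel{\tau}{\rightarrow}0$ and $c+e_n+\ell_n\ogeq X$. Applying the monotone linear functional $\varphi$ and using $\varphi(1)=1$ together with $\varphi(\ell_n)\le 0$ gives $\varphi(X)\le c+\varphi(e_n)+\varphi(\ell_n)\le c+\varphi(e_n)$. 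Since $\varphi\in\cH^\prime$ is $\tau$-continuous and $e_n\stackrel{\tau}{\rightarrow}0$, we have $\varphi(e_n)\to 0$, so $\varphi(X)\le c$. Taking the infimum over all admissible $c$ yields $\varphi(X)\le\cD(X)$, and therefore $\varphi\in dom(\cD^*)$.

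The algebraic steps relying on translation invariance and linearity are routine; the one place demanding care is the passage through the approximating sequence in the reverse inclusion, where the monotonicity used for $\varphi$ must be that with respect to the common order $\oleq$ and not the pointwise order, and where the vanishing of $\varphi(e_n)$ must be drawn from $\tau$-continuity of $\varphi$ rather than from any order property of the $e_n$. The conceptual heart of the argument is that positivity of $\varphi$ relative to $\oleq$ simultaneously yields monotonicity, used in the reverse inclusion, and absolute continuity, used in the forward inclusion; consequently no verification of absolute continuity beyond what membership in $\cH^\prime_+$ already encodes is needed.
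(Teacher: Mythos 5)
Your proposal is correct and follows essentially the same route as the paper: both directions rest on the characterization $dom(\cD^*)=\{\varphi\in\cH^\prime_+:\varphi\le\cD\}$ from Lemma \ref{l.pzero} together with $\cD(c)=c$ and $\cD(\ell)\le 0$ from Lemma \ref{l.convex}, and the reverse inclusion is verified by applying $\varphi$ to the superhedging inequality and letting $\varphi(e_n)\to 0$ by $\tau$-continuity, exactly as in the paper. The only cosmetic difference is that you derive absolute continuity directly from positivity of $\varphi$ on negligible claims, where the paper defers to the argument of Lemma \ref{l.ez}; the substance is identical.
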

	\begin{proof} The fact that  $dom(\cD^*)$ is non-empty follows from Lemma \ref{l.convex} and Lemma \ref{l.pzero}. Fix an arbitrary $\varphi \in dom(\cD^*)$.
		By Lemma \ref{l.convex}, $\cD(c)=c$ for every constant $c\in\R$.
		In view of the dual representation of Lemma \ref{l.pzero},
		$$
		c \varphi(1)= \varphi(c) \le \cD(c)=c, 
		\quad \forall c \in \R.
		$$
		Hence, $\varphi(1)=1$.  
		
		We continue by proving the monotonicity property.
		Suppose that 
		$X\in \cP$. Since $0\in\cI$, we obviously have $\cD(-X)\leq 0$. 
		The dual representation implies that $\varphi(-X) \le \cD(-X)\leq 0$. 
		Thus, $\varphi(X) \ge 0$.
		
		We now prove the supermartingale property. Let $\ell \in \cI$. 
		Obviously $\cD(\ell)\leq 0$. By the dual representation,
		$\varphi(\ell ) \le \cD(\ell) \le 0$.
		Hence $\varphi$ is a martingale functional.
		The absolute continuity follows
		as in Lemma \ref{l.ez}.
		Hence, $\varphi \in \cQ_{ac}$.

		To prove the converse, fix an arbitrary  $\varphi \in \cQ_{ac}$.  
		Suppose that $X \in \cH$, $c \in \R$, $\{\ell_n\}_{n=1}^\infty \subset \cI$ and $\{e_n\}_{n=1}^{\infty} \subset \cH_+$ with $e_{n}\stackrel{\tau}{\rightarrow}0$ satisfy,
		$c + e_n + \ell_n \ogeq  X$.  From the properties
		of $\varphi$,
		$$
		0 \le \varphi(c + e_n + \ell_n - X) = \varphi(c +e_n -X) + \varphi(\ell_n)
		\le c - \varphi(X-e_n).
		$$
		Since $e_{n}\stackrel{\tau}{\rightarrow}0$ and $\varphi$ is continuous, $\varphi(X) \le \cD(X)$ for every $X \in \cH$.
		Therefore, $\varphi \in dom(\cD^*)$.
	\end{proof}
	
	\begin{proof}[Proof of Proposition \ref{p.convexdual}]
		It follows directly from Lemma \ref{l.pzero} and Lemma \ref{l.emm}.
	\end{proof}
	
	\vspace{5pt}
	We have the following immediate corollary, which  
	proves the first part of the Fundamental Theorem of Asset Pricing in this context.
	\begin{cor}
		\label{c.ftap}
		The financial market
		is strongly free
		of arbitrage if and only $\cQ_{ac}\neq\emptyset$ and for any $R\in\cR$, there exists $\varphi_R\in\cQ_{ac}$ such that $\varphi_R(R)>0$.
		
	\end{cor}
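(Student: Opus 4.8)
The plan is to read off the forward implication directly from the duality established earlier and to prove the converse by a short direct computation using the defining properties of martingale functionals; the corollary is genuinely ``immediate'' once Proposition \ref{p.convexdual} is in hand.

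For the forward direction, I would assume the market is strongly free of arbitrage. Proposition \ref{p.viable} then gives $\cD(R) > 0$ for every $R \in \cR$, while Proposition \ref{p.convexdual} yields the representation $\cD(X) = \sup_{\varphi \in \cQ_{ac}} \varphi(X)$. In particular $\cQ_{ac}$ cannot be empty, since otherwise the supremum over the empty family would be $-\infty$, contradicting $\cD(1) \in (0,1]$ from Lemma \ref{l.general}. Moreover, fixing any $R \in \cR$, the inequality $\sup_{\varphi \in \cQ_{ac}} \varphi(R) = \cD(R) > 0$ forces the existence of at least one $\varphi_R \in \cQ_{ac}$ with $\varphi_R(R) > 0$: a supremum that is strictly positive must be strictly positive for at least one member of the family, for if $\varphi(R) \le 0$ held for all $\varphi \in \cQ_{ac}$ the supremum would be nonpositive. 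This delivers exactly the two asserted conditions.

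For the converse, I would assume $\cQ_{ac} \neq \emptyset$ and that every $R \in \cR$ admits some $\varphi_R \in \cQ_{ac}$ with $\varphi_R(R) > 0$, and argue by contradiction. Suppose a free lunch with vanishing risk exists, witnessed by $R^* \in \cR$, net trades $\{\ell_n\}_{n=1}^\infty \subset \cI$, and nonnegative $\{e_n\}_{n=1}^\infty \subset \cH_+$ with $e_n \stackrel{\tau}{\rightarrow} 0$ and $e_n + \ell_n \ge R^*$. Choosing $\varphi := \varphi_{R^*}$ and using that $\varphi \in \cH^\prime_+$ is positive on the nonnegative cone, we get $\varphi(e_n + \ell_n - R^*) \ge 0$. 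Expanding by linearity and invoking the martingale property $\varphi(\ell_n) \le 0$ gives $\varphi(R^*) \le \varphi(e_n) + \varphi(\ell_n) \le \varphi(e_n)$. Since $\varphi$ is $\tau$-continuous and $e_n \stackrel{\tau}{\rightarrow} 0$, the right-hand side tends to zero, whence $\varphi(R^*) \le 0$, contradicting $\varphi_{R^*}(R^*) > 0$.

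The substance of the argument is entirely front-loaded into Proposition \ref{p.convexdual}: the delicate Fenchel--Moreau duality and the identification $dom(\cD^*) = \cQ_{ac}$ from Lemma \ref{l.emm} are what make the forward direction work. Given those, there is no real obstacle remaining; the only point that warrants being stated explicitly, rather than left implicit, is the passage from a strictly positive supremum to the existence of a witnessing functional $\varphi_R$, so that both the nonemptiness of $\cQ_{ac}$ and the selection of $\varphi_R$ are extracted cleanly.
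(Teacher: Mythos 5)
Your proposal is correct and follows essentially the same route as the paper: the forward direction reads off $\cQ_{ac}\neq\emptyset$ and the existence of $\varphi_R$ from $\cD(R)>0$ (Proposition \ref{p.viable}) together with the dual representation $\cD=\sup_{\varphi\in\cQ_{ac}}\varphi$ (Proposition \ref{p.convexdual}, i.e.\ Lemmas \ref{l.pzero} and \ref{l.emm}), and the converse is the same contradiction argument applying $\varphi_{R^*}$ to $e_n+\ell_n\ge R^*$ and using positivity, the martingale property, and continuity to force $\varphi_{R^*}(R^*)\le 0$. Your explicit justification of the non-emptiness of $\cQ_{ac}$ and of the passage from a strictly positive supremum to a witnessing functional is a welcome clarification but not a different argument.
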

	\begin{proof} 
		By contradiction, suppose that there exists $R^*$ such that $e_n + \ell_n \ogeq  R^*$ with $e_n\stackrel{\tau}{\rightarrow}0$. Take $\varphi_{R^*}$ such that $\varphi_{R^*}(R^*)>0$ and observe that
		$0<\varphi_{R^*}(R^*) \le \varphi(e_n + \ell_n) \le \varphi(e_n)$. Since $\varphi\in\cH'_+$, $\varphi(e_n)\to 0$ as $n\to\infty$, which is a contradiction.
		
		In the other direction, assume that the financial market is strongly free of arbitrage.
		By Lemma \ref{l.emm}, $dom(\cD^*)=\cQ_{ac}$. Let $R \in \cR$ and note that, by Proposition \ref{p.viable}, $\cD(R)>0$.
		It follows that there exists
		$\varphi_R \in dom(\cD^*)=\cQ_{ac}$ satisfying  $\varphi_R(R)>0$. 
	\end{proof}
	
	\begin{rem}
		The set of positive functionals
		$\cQ_{ac} \subset \cH^\prime_+$ is the analogue of
		the set of local martingale measures
		of the classical setting.  
		Indeed, all elements of $\varphi \in \cQ_{ac}$
		can be regarded as  supermartingale ``measures'',
		since $\varphi(\ell) \le 0$ for every $\ell \in \cI$.
		Moreover, the property $\varphi(Z)=0$ for every $Z \in \cZ$
		can be regarded as absolute 
		continuity with respect to null sets.  
		The full support property is our analog to the  converse absolute 
		continuity. 
		However, the full-support property cannot be achieved by a single
		element of $\cQ_{ac}$.

		\cite{BN13} study arbitrage for  a set of priors $\cM$. 
		The absolute continuity and the full support properties then translate to
		the statement that ``$\cM$ and $\cQ$ have the same polar sets''. 
		In the paper by  \cite{BFM},
		a class of relevant sets $\cS$ is given and the two properties 
		can summarised by the statement
		``the set $\cS$ is  not contained in the polar sets of $\cQ$''.
		
		Also, when $\cH=\bdd$, $\cH^\prime$ is the class of bounded additive measures $ba$.  It is a classical
		question whether one can restrict
		$\cQ$ to the set of countable additive
		measures $ca_r(\Om)$. 
		In several of the examples
		described in Section \ref{s.discussion} and \ref{s.examples} this is proved.
		However, there are 
		examples for which this is not true.
	\end{rem}
\section{Extensions}
\label{s.eh}
For the notation and definitions, we refer to the main text. 
Let $\cB(\Om,\cF)$ be the set of all $\cF$ measurable real-valued functions on $\Om$.
Any Banach space 
contained in $\cB(\Om,\cF)$ satisfies the requirements for $\cH$. 
In our examples, we used the spaces 
$\cL^1(\Om,\cF,\P)$, $\cL^2(\Om,\cF,\P)$, $\cL^1(\Om,\cF,\cM)$ and   $\bdd(\Om,\cF)$,
the set of all bounded
functions in $\cB(\Om,\cF)$, with the supremum norm.
In the latter case, the super-hedging functional enjoys 
several properties  as discussed in Remark 2.3 in the main text .

Since we require that $\cI\subset\cH$, in the case of $\cH=\bdd(\Om,\cF)$
this means that all the trading instruments are bounded. This could be restrictive in some applications
and  we now provide another example that overcomes this difficulty.
To define this set,
fix $L^* \in \cB(\Om,\cF)$ with $L^*(\omega)\ge 1$ for every $\omega\in\Omega$.
Consider the linear space
$$
\bdl:=\left\{ X \in \cB(\Om,\cF)\ :\
\exists\ \alpha \in \R^+
{\text{ such that }}
|X(\omega)| \le \alpha L^*(\omega)\ \forall \omega\in\Omega \right\}
$$
equipped with the norm,
$$
\| X \|_\ell:= \inf\{ \alpha \in \R^+\ :\
|X(\omega)| \le \alpha L^*(\omega)\ \forall \omega\in\Omega\}= \left\| \frac{X}{L^*}\right\|_\infty.
$$ 
We denote the topology induced by this norm by $\tau_\ell$.
Then, $\bdl(\Om,\cF)$ with $\tau_\ell$
is a Banach space and satisfies our assumptions.
Note that if $L^*=1$, then $\bdl(\Om,\cF)=\bdd(\Om,\cF)$. 

Now, suppose that
\begin{equation}
	\label{e.ells}
	L^*(\om):= c^*+ \hat \ell(\om), \quad
	\ \om \in \Om,
\end{equation}
for some $c^*>0$, $\hat \ell \in \cI$.
Then,
one can define the
super-replication functional 
as in Remark A.3 of the main paper.

Another important extension is to relax the assumption that the consumption sets is equal to the entire space $\cH$.
This hypothesis is taken in the classical literature as well as in the main body of the paper but it might be restrictive in some applications. We show here that, within our framework, we may accommodate a smaller consumption set, in particular, we are able to restrict to consumption sets bounded from below. Let us fix the lower bound to be $0$ for the sake of discussion, which also correspond to the most relevant case of non-negative consumption. Consider
a market $(H, \tau, \oleq, \cI, \cR)$ where the topology is generated by open intervals with respect to the (strict) order; by definition, the set $O := \{X > 0\}$ is open. Given
a preference relation on $O$ we can extended it to the whole space by treating all elements of $\{X \oleq 0\}$ as indifferent and $X \prec Y$ if $X\notin O$ and $Y\in O$. Since the
preferences in our definition of $\cA$ are only required to be $\tau$-lower semi-continuity, this class satisfies all the required properties. In particular, the  class of linear agents constructed in the proofs of Theorem 2.1 and 2.2 in the main text may be modified accordingly: for a given linear continuous functional $\varphi$, we may set utility
to minus infinity on the complement of $O$. The induced preference relation is in $\cA$. Additionally, agents with power utilities
\[
U(X)=\E\left[\frac{(X+c)^{1-\gamma}}{1-\gamma}\right]
\] 
with a constant $c$ can be included in $\cA$. Restricting consumption to be positive may result in the failure of extendability of the pricing functional\footnote{We thank an anonymous referee for pointing out this aspect.}, therefore the classical theory of \cite{HarrisonKreps79,Kreps81} does not apply. On the contrary, since we do not insist on a single representative agent, this aspect does not affect the results of Section 2 of the main text.

\section{No Arbitrage versus  No Free-Lunch-with-Vanishing-Risk}
\label{s.versus}

Let $(\cH,\tau, \oleq,\cI,\cR)$ be a financial market.
An arbitrage opportunity is always a free lunch
with vanishing risk. The purpose of this section is to investigate when these two notions are equivalent.

\subsection{Attainment}
\label{ss.attainment}
\begin{dfn}
	\label{d.attainment}
	
	We say that a financial market has the \emph{attainment property},  if
	for every $X \in \cH$ there exists a minimizer in Equation (5.1) of the main text, i.e., there exists
	$\ell_X \in \cI$  satisfying,
	$$
	\cD(X) + \ell_X \ogeq X.
	$$
	
\end{dfn}

\begin{prp}
	\label{p.versus}  
	Suppose that a financial market has the attainment property. 
	Then,
	it is strongly free of arbitrage if and only if it has no arbitrages.
\end{prp}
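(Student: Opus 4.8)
The plan is to prove the two implications separately, isolating the harder one as the only place where the attainment property is needed. The easy direction asserts that strong absence of arbitrage implies the absence of arbitrage, and it requires neither attainment nor the duality machinery. Indeed, any arbitrage, i.e.\ any $\ell \in \cI$ with $\ell \ogeq R^*$ for some $R^* \in \cR$, is automatically a free lunch with vanishing risk: taking the constant sequences $\ell_n \equiv \ell$ and $e_n \equiv 0$ (which trivially converges to $0$ in $\tau$) yields $e_n + \ell_n \ogeq R^*$ for every $n$. Hence the existence of an arbitrage forces a free lunch with vanishing risk, and the contrapositive gives the implication.

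For the converse, I would argue by contraposition using the characterization in Proposition \ref{p.viable}. Suppose the market is \emph{not} strongly free of arbitrage. Then Proposition \ref{p.viable} furnishes a relevant contract $R^* \in \cR$ with $\cD(R^*) \le 0$. Applying the attainment property to $R^*$, there is a net trade $\ell_{R^*} \in \cI$ with
$$
\cD(R^*) + \ell_{R^*} \ogeq R^*;
$$
in particular, the mere existence of such a minimizer in \eqref{e.v} guarantees that $\cD(R^*)$ is a genuine real number.

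It then remains to upgrade this super-replication relation into an actual arbitrage. Using that $\oleq$ is compatible with the vector space operations, I would subtract the constant $\cD(R^*)$ from both sides to obtain $\ell_{R^*} \ogeq R^* - \cD(R^*)$. Since $\cD(R^*) \le 0$, the constant $-\cD(R^*)$ is nonnegative (the pre-order is consistent with the order on the reals for constants), so $R^* - \cD(R^*) \ogeq R^*$; transitivity of $\oleq$ then yields $\ell_{R^*} \ogeq R^*$ with $R^* \in \cR$. Thus $\ell_{R^*} \in \cI$ is an arbitrage, which completes the contrapositive and hence the proof.

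I expect the only delicate point to be ensuring that $\cD(R^*)$ is finite rather than $-\infty$, so that the translation by the constant $\cD(R^*)$ is legitimate and the resulting inequality is meaningful. The attainment hypothesis handles this automatically, since a minimizer in \eqref{e.v} can exist only when the infimum defining $\cD(R^*)$ is attained, and in particular finite. Everything else reduces to the definition of arbitrage together with elementary manipulations of the order relation, so the substance of the argument lies entirely in combining Proposition \ref{p.viable} with attainment.
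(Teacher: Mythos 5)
Your proposal is correct and follows essentially the same route as the paper: the easy direction is that an arbitrage is trivially a free lunch with vanishing risk, and the substantive direction combines Proposition \ref{p.viable} (strong absence of arbitrage $\Leftrightarrow$ $\cD(R)>0$ on $\cR$) with the attainment property to convert $\cD(R^*)\le 0$ into a genuine arbitrage $\ell_{R^*}\ogeq R^*$. The only difference is that you phrase the key direction as a contrapositive and spell out the finiteness of $\cD(R^*)$, which the paper leaves implicit.
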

\begin{proof} 
	Let $R^* \in \cR$.  By hypothesis, there exist
	$\ell \in \cI^*$ so that
	$\cD(R^*) + \ell^* \ogeq R^* $.
	If the market has no arbitrage, then we conclude that $\cD(R^*)>0$.
	In view of Proposition 5.1 of the main text, this proves 
	that the financial market is also strongly free of arbitrage.
	Since no arbitrage is weaker condition, 
	they are equivalent.
\end{proof}

\subsection{Discrete Time Markets with Finite Horizon}
\label{ss.fdc}

In this subsection and in the next section, we restrict 
ourselves to arbitrage considerations
in finite discrete-time markets.

We start by introducing a discrete filtration 
$\F:=(\cF_t)_{t=0}^T$ on subsets of $\Om$.
Let $S=(S_t)_{t=0}^T$ be an adapted
stochastic process{\footnote{ When working
		with $N$ stocks, a canonical choice
		for $\Om$ would be 
		$$
		\Om=\{ \om=(\om_0,\ldots,\om_T)
		\ :\ \om_i\in [0,\infty)^N, \ i=0,\ldots,T\ \}.
		$$
		Then, one may take
		$S_t(\om)=\om_t$ and $\F$ to be
		the filtration generated by $S$.}}$^{,}${\footnote{
		Note that we do not specify any probability measure.}}
with values in $\R^M_+$ for some $M$.
For every $\ell \in \cI$ 
there exist predictable integrands
$H_t\in \cB_b(\Omega,\cF_{t-1})$ for all 
$t=1,\ldots,T$ such that,
$$
\ell= \left(H \cdot S\right)_T:= \sum_{t=1}^{T} H_t\cdot \Delta S_t, \quad
{\text{where}}
\quad
\Delta S_t:=(S_{t}-S_{t-1}).
$$
Denote by $\ell_t:=(H\cdot S)_t$ for $t\in\cI$ and $\ell:=\ell_T$. 

Set $\hat{\ell}=\sum_{k,i} S_k^i -S_0^i$.
Then, one can directly show that  
with an appropriate $c^*$, we have  $L^*:=c^*+\hat{\ell} \ge 1$.  Define $\bdl$
using $\hat{\ell}$, set $\cH=\bdl$
and denote by $\cI_\ell$ 
the subset of $\cI$ with $H_t$ \emph{bounded} for every  $t=1,\ldots,T$.

We next prescribe the equivalence relation and the relevant sets.
Our starting point is the set of negligible sets $\cZ$
which we assume is given. 
We also make the following structural assumption.

\begin{asm}
	\label{a.discrete}  Assume that the trading
	is allowed only at finite time points labeled
	through $1,2,\ldots,T$.
	Let $\cI$ be 
	given as above and let
	$\cZ$ be a lattice which is closed 
	with respect to pointwise convergence.
	
	We also assume that
	$\cR=\cP^+$ 
	and the pre-order is given by,
	$$
	X\oleq Y\quad 
	\Leftrightarrow
	\quad \exists Z\in\cZ \ \ 
	{\text{ such that }}\ \ 
	X\pointleq Y+Z,
	$$
	where $\pointleq$ denotes the pointwise order of functions. 
	In particular, $X \in \cP$ if and only if there exists $Z\in \cZ$ such that $Z \pointleq X$.
	
\end{asm}

For an  example of the above structure, refer  to Example 2.6 of the main text.  In that example,
$\cZ$ consists of the  polar sets 
of a given  class $\cQ$ of probabilities. 
Then, in this context all inequalities
should be understood as $\cQ$ quasi-surely.
Also note also that the assumptions on $\cZ$ are trivially satisfied when $\cZ=\{0\}$. 
In this latter case, inequalities are pointwise.

Observe that in view
of the definition of $\oleq$ and the fact 
$\cR=\cP^+ $, 
$\ell \in \cI$ is an arbitrage if and only if 
there is $R^* \in \cP^+$ and $Z^* \in \cZ$,
so that
$\ell \pointgeq  R^*+ Z^*$.  
Hence, $\ell \in \cI$ is an arbitrage if and only
if $\ell \in \cP^+$.
We continue by showing the equivalence of the existence of an arbitrage
to the existence of a one-step arbitrage. 

\begin{lem}
	\label{lem: one step} Suppose that Assumption \ref{a.discrete} holds.  Then, there  
	exists arbitrage if and only if there exists 
	$t\in\{1,\ldots, T\}$, $h\in \cB_b(\Omega,\cF_{t-1})$ such that 
	$\ell:= h\cdot \Delta S_{t}$ 
	is an arbitrage.
\end{lem}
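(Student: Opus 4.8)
The plan is to establish the two implications separately; the forward implication is immediate, and the reverse one I would prove by induction on the number of trading dates $T$. For the easy direction, if some $h\in\cB_b(\Omega,\cF_{t-1})$ makes $\ell:=h\cdot\Delta S_t$ an arbitrage, then setting all other integrands equal to zero exhibits $\ell$ as a net trade in $\cI$, so a one-step arbitrage is in particular an arbitrage. Throughout I would rely on two facts recorded just above the lemma and in Assumption \ref{a.discrete}: first, a net trade is an arbitrage precisely when it belongs to $\cP^+$; second, $\cZ$ enjoys an order-ideal (solidity) property, namely that $0\pointleq W\pointleq V$ with $V\in\cZ$ forces $W\in\cZ$. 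The latter follows because $\cZ$ is a lattice, so $|V|\in\cZ$, while the consistency of $\oleq$ with $\pointleq$ gives $0\oleq W\oleq|V|\oleq0$. I would use this solidity repeatedly to convert pointwise inequalities into statements modulo $\cZ$.

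For the reverse direction I argue by induction on $T$, the case $T=1$ being trivial since any arbitrage is then already a one-step trade $H_1\cdot\Delta S_1$. For the inductive step, let $\ell=\sum_{t=1}^{T}H_t\cdot\Delta S_t\in\cP^+$ with each $H_t\in\cB_b(\Omega,\cF_{t-1})$, fix $Z^*\in\cZ$ with $\ell\pointgeq Z^*$, and set $\ell_{T-1}:=\sum_{t=1}^{T-1}H_t\cdot\Delta S_t$, so that $\ell=\ell_{T-1}+H_T\cdot\Delta S_T$. The central device is to truncate the last integrand on the event where accumulated wealth is negative: put $B:=\{\omega\in\Omega:\ell_{T-1}(\omega)<0\}\in\cF_{T-1}$ and $h:=\mathbf 1_B H_T\in\cB_b(\Omega,\cF_{T-1})$. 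Then $h\cdot\Delta S_T=\mathbf 1_B\ell-\mathbf 1_B\ell_{T-1}\pointgeq\mathbf 1_B Z^*-\mathbf 1_B\ell_{T-1}$, where $-\mathbf 1_B\ell_{T-1}\pointgeq0$ (strictly positive on $B$) and $\mathbf 1_B Z^*\in\cZ$ by solidity; hence $h\cdot\Delta S_T\in\cP$ is a one-step trade at date $T$.

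I then split according to whether $h\cdot\Delta S_T$ is an arbitrage. If $h\cdot\Delta S_T\in\cP^+$ there is nothing more to do. Otherwise $h\cdot\Delta S_T\in\cZ$, and since $0\pointleq-\mathbf 1_B\ell_{T-1}\pointleq h\cdot\Delta S_T-\mathbf 1_B Z^*\in\cZ$, solidity yields $\mathbf 1_B\ell_{T-1}\in\cZ$; as $\ell_{T-1}\pointgeq\mathbf 1_B\ell_{T-1}$ this gives $\ell_{T-1}\in\cP$. Now a final dichotomy closes the argument: if $\ell_{T-1}\notin\cZ$ then $\ell_{T-1}\in\cP^+$ is a $(T-1)$-period arbitrage, and the induction hypothesis produces a one-step arbitrage $h'\cdot\Delta S_s$ with $s\le T-1$; if instead $\ell_{T-1}\in\cZ$, then $H_T\cdot\Delta S_T=\ell-\ell_{T-1}$ differs from the arbitrage $\ell$ only by a negligible claim and is therefore itself in $\cP^+$, a one-step arbitrage at date $T$.

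I expect the main difficulty to be bookkeeping rather than conceptual: each transition between a pointwise inequality and a relation modulo $\cZ$ must be justified through the order-ideal property, one must check that $\mathbf 1_B H_T$ is a genuinely admissible integrand (it is $\cF_{T-1}$-measurable and bounded because $H_T$ is, hence lies in $\cB_b(\Omega,\cF_{T-1})$), and one must verify that removing a negligible claim from an element of $\cP^+$ leaves it in $\cP^+$. Assumption \ref{a.discrete}, which makes $\cZ$ a lattice closed under pointwise convergence, is precisely what keeps all these manipulations legitimate.
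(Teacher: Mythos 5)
Your proof is correct and follows essentially the same route as the paper's: both arguments hinge on truncating the one-step integrand by the indicator of the event where the accumulated wealth is strictly negative, showing the resulting one-step trade lies in $\cP$, and then splitting according to whether it lies in $\cP^+$ or in $\cZ$ via the sandwich/lattice properties of $\cZ$ (the paper's Lemmas \ref{l.sandwich} and \ref{l.stability}); the only organizational difference is that you run a backward induction on the horizon $T$ where the paper works at the first time $\hat t$ at which the gains process enters $\cP^+$. One cosmetic point: your stated solidity ($0\pointleq W\pointleq V$ with $V\in\cZ$) does not literally cover the claim $\mathbf{1}_B Z^*\in\cZ$ since $Z^*$ need not have a sign, but the two-sided version $-|Z^*|\pointleq \mathbf{1}_B Z^*\pointleq |Z^*|$ together with Lemma \ref{l.sandwich} (i.e., Lemma \ref{l.stability}) closes this immediately.
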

\begin{proof}
	The sufficiency is clear.
	To prove the necessity, suppose
	that $\ell \in \cI$ is an arbitrage.  Then, there is a predictable process $H$
	so that $\ell= (H \cdot S)_T$.
	Also $\ell \in \cP^+$, hence, $\ell\notin\cZ$ and there exists $Z\in\cZ$ such that 
	$\ell\ogeq Z$.  
	Define 
	$$
	\hat{t}:=\min\{t\in\{1,\ldots, T\}\ :\
	(H\cdot S)_t \in \cP^+\ \}\leq T.
	$$
	First we study the case where $\ell_{\hat{t}-1}\in\cZ$. Define 
	$$\ell^*:= H_{\hat{t}}\cdot \Delta S_{\hat{t}},
	$$
	and observe that $\ell_{\hat{t}}=\ell_{\hat{t}-1}+\ell^*$. Since $\ell_{\hat{t}-1}\in\cZ$, 
	we have that $\ell^*\in\cP^+$ iff $\ell_{\hat{t}}\in\cP^+$ and consequently 
	the lemma is proved.

	Suppose now $\ell_{\hat{t}-1}\notin\cZ$. If $\ell_{\hat{t}-1}\pointgeq 0$, 
	then $\ell_{\hat{t}-1}\in\cP$ and, thus, also in $\cP^+$, which is not possible 
	from the minimality of $\hat{t}$. Hence the set $A:=\{\ell_{\hat{t}-1}<_{\Omega} 0\}$
	is non empty and $\cF_{\hat{t}-1}$-measurable.
	Define, $h:=H_{\hat{t}}\chi_A$ and $\ell^*:= h\cdot \Delta S_{\hat{t}}$.
	Note  that, 
	$$
	\ell^*=\chi_A(\ell_{\hat{t}}-\ell_{\hat{t}-1})\pointgeq  \chi_A\ell_{\hat{t}}
	\pointgeq \chi_A Z\in\cZ.
	$$
	This implies $\ell^*\in\cP$. Towards a contradiction, suppose that $\ell^*\in \cZ$. Then,
	$$
	\ell_{\hat{t}-1}\pointgeq \chi_A\ell_{\hat{t}-1}\geq \chi_A\left(Z-\ell^*\right)\in\cZ,
	$$
	Since, by assumption, $\ell_{\hat{t}-1}\notin\cZ$ we have $\ell_{\hat{t}-1}\in\cP^+$
	from which $\hat{t}$ is not minimal.
\end{proof}

The following is the main result of this section. For the proof we follow 
the approach of  \cite{KabanovStricker} 
which is also used in \cite{BN13}. 
We consider the financial market
$\Theta_* =(\bdl,\|\cdot\|_\ell,\pointleq,\cI,\cP^+)$ described above.
\begin{thm}
	\label{t.tfae}
	In a finite discrete time
	financial market satifying the Assumption \ref{a.discrete}, the following are equivalent:
	\begin{enumerate}
		
		\item \label{AP1} The financial market $\Theta_*$ has no arbitrages.
		\item \label{AP2} The attainment property  holds and $\Theta_*$ is free
		of arbitrage.
		\item \label{AP3} The financial market $\Theta_*$ is strongly free of arbitrages.
	\end{enumerate}
\end{thm}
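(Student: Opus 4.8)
The three statements are organised into the cycle (\ref{AP1})$\Rightarrow$(\ref{AP2})$\Rightarrow$(\ref{AP3})$\Rightarrow$(\ref{AP1}), of which two arrows are essentially free. For (\ref{AP3})$\Rightarrow$(\ref{AP1}), any arbitrage $\ell\in\cI$ with $\ell\ogeq R^*$ is a free lunch with vanishing risk, witnessed by the constant sequences $\ell_n\equiv\ell$ and $e_n\equiv 0$; hence the strong absence of arbitrage implies the absence of arbitrage. For (\ref{AP2})$\Rightarrow$(\ref{AP3}), statement (\ref{AP2}) provides exactly the two hypotheses of Proposition~\ref{p.versus} (the attainment property together with the absence of arbitrage), so that proposition applies directly and yields the strong absence of arbitrage.

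The real content is the arrow (\ref{AP1})$\Rightarrow$(\ref{AP2}). Its absence-of-arbitrage half is the hypothesis itself, so the only thing to prove is that absence of arbitrage forces the attainment property of Definition~\ref{d.attainment}. I plan to derive attainment from the closedness, under pointwise convergence, of the superhedging cone $\cC:=\cI-\cP$. By Remark~\ref{rem:classicSH}, in the present $\bdl$--framework the super-replication functional has the classical form $\cD(X)=\inf\{c\in\R: X-c\in\cC\}$; it is finite, being bounded above through the domination of $X$ by a multiple of $L^*=c^*+\hat\ell$, $\hat\ell\in\cI$, and bounded below as a consequence of the absence of arbitrage. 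Granting closedness of $\cC$, choose $c_n\downarrow\cD(X)$ with $X-c_n\in\cC$; since $X-c_n\to X-\cD(X)$ pointwise, closedness gives $X-\cD(X)\in\cC$, i.e.\ there is an optimal $\ell_X\in\cI$ with $\cD(X)+\ell_X\ogeq X$, which is exactly the attainment property.

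To prove that $\cC$ is closed I follow \cite{KabanovStricker}, in the form adapted to a set of negligible claims by \cite{BN13}. The first step is to reduce to a single period: Lemma~\ref{lem: one step} shows that the absence of arbitrage is equivalent to the absence of one-step arbitrage, which decouples the $T$ trading dates and allows an induction on $T$ whose inductive step is one-period closedness. For a single date $t$, with $\cF_{t-1}$--measurable integrands $h$ valued in $\R^M$, I take a pointwise-convergent sequence $h_n\cdot\Delta S_t-P_n\to\xi$ with $P_n\in\cP$ and partition $\Om$ according to whether $\liminf_n|h_n|$ is finite or infinite. On the region where the integrands stay bounded, a measurable selection extracts a pointwise limit $h_n\to h$, and since $\cZ$ is a lattice closed under pointwise convergence (Assumption~\ref{a.discrete}) the remainders converge to some $P\in\cP$, giving $\xi=h\cdot\Delta S_t-P\in\cC$. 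On the region where $|h_n|\to\infty$, I normalise $g_n:=h_n/|h_n|$, pass to a pointwise limit $g$ with $|g|=1$, and note that $g\cdot\Delta S_t\in\cP$; the absence of one-step arbitrage then forces $g\cdot\Delta S_t\in\cZ$, so one coordinate of the strategy can be eliminated without altering the hedged position modulo $\cZ$, and the argument is repeated on a strategy space of strictly smaller dimension.

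The crux, and the main obstacle, is this one-period lemma. The delicate points are the measurable selection of pointwise-convergent subsequences of the $\cF_{t-1}$--measurable integrands, the measurable splitting of $\Om$ into the bounded and the blow-up regions, and the verification that the dimension reduction induced by a direction $g$ with $g\cdot\Delta S_t\in\cZ$ is measurable and compatible with the $\cZ$--quasi-sure order, as well as that the limiting position remains in $\bdl$. These are precisely the places where replacing a fixed reference measure by the order-theoretic framework requires care; the assumptions in Assumption~\ref{a.discrete} that $\cZ$ be a lattice closed under pointwise convergence are what let the classical normalisation-and-induction scheme go through, and since the integrands take values in the finite-dimensional space $\R^M$ the reduction terminates after at most $M$ steps.
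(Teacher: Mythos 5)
Your proposal is correct and follows essentially the same route as the paper: both reduce everything to the implication (\ref{AP1})$\Rightarrow$(\ref{AP2}) via Proposition~\ref{p.versus}, and both obtain the attainment property from the closedness under pointwise convergence of the hedging cone $\cI-\cP$ (the paper writes it as $\cI-(\cL^0_+ +\cZ^-)$, which is the same set under Assumption~\ref{a.discrete}), proved by the Kabanov--Stricker normalisation-and-induction scheme adapted to the lattice $\cZ$ closed under pointwise limits. The delicate one-period measurable-selection and dimension-reduction steps you flag are exactly the ones the paper carries out in its proof of Theorem~\ref{t.tfae}.
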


\begin{proof}
	In view of Proposition \ref{p.versus}
	we only need to prove the implication $\ref{AP1} \Rightarrow \ref{AP2}$. 
	
	For $X\in\cH$ such that $\cD(X)$ is finite we have that
	\begin{equation*}
		c_n+\cD(H)+\ell_n\pointgeq X+Z_n,
	\end{equation*} 
	for some $c_n\downarrow 0$, $\ell_n\in\cI$ and $Z_n\in\cZ$. Note that since $\cZ$ is a lattice we assume, without loss of generality, that $Z_n=(Z_n)^-$ and denote by $\cZ^-:=\{Z^-\mid Z\in\cZ\}$.\\
	
	We show that $\cC:=\cI-(\cL^0_+(\Om,\cF)+\cZ^-)$ is closed under pointwise 
	convergence where $\cL^0_+(\Om,\cF)$ denotes the class of pointwise nonnegative random variables. Once this result is shown, by observing that $X-c_n-\cD(X)=W_n\in\cC$ converges pointwise to $X-\cD(X)$ we obtain the attainment property.

	We  proceed by induction on the number of time steps. 
	Suppose first $T=1$. 
	Let
	\begin{equation}
		\label{e.convergence}
		W_n=\ell_n-K_n-Z_n\rightarrow W,
	\end{equation} where $\ell_n\in\cI$, $K_n\pointgeq 0$ and $Z_n\in\cZ^-$. 
	We need to show $W\in\cC$. Note that any $\ell_n$ can be 
	represented as $\ell_n=H_1^n\cdot\Delta S_1$ with $H_1^n\in\cL^0(\Omega,\cF_0)$.
	
	Let $\Omega_1:=\{\omega\in\Omega\mid \liminf |H_1^n|<\infty\}$. 
	From Lemma 2 in 
	\cite{KabanovStricker} 
	there exist a sequence 
	$\{\tilde{H}_1^k\}$ such that $\{\tilde{H}_1^k(\omega)\}$ 
	is a convergent subsequence of 
	$\{H_1^k(\omega)\}$ for every $\omega\in\Omega_1$. 
	Let $H_1:=\liminf H_1^n\ \chi_{\Omega^1}$ and $\ell:=H_1\cdot\Delta S_1$.

	Note now that $Z_n\pointleq 0$, hence, if $\liminf |Z_n|=\infty$ we have $\liminf Z_n=-\infty$. 
	We show that we can choose $\tilde{Z}_n\in\cZ^-$, $\tilde{K}_n\pointgeq 0$ such that $\tilde{W}_n:=\ell_n-\tilde{K}_n-\tilde{Z}_n\rightarrow W$ and $\liminf \tilde{Z}_n$ is finite on $\Omega_1$. On $\{\ell_n\pointgeq W\}$ set $\tilde{Z}_n=0$ and $\tilde{K}_n=\ell_n-W$. On $\{\ell_n <_{\Om} W\}$ set
	
	$$
	\tilde{Z}_n=Z_n\vee(\ell_n-W) ,\quad \tilde{K}_n=K_n\chi_{\{Z_n=\tilde{Z}_n\}}.
	$$
	It is clear that $Z_n\pointleq \tilde{Z}_n\pointleq 0$. From Lemma \ref{l.sandwich} we have $\tilde{Z}_n\in \cZ$. Moreover, it is easily checked that  $\tilde{W}_n:=\ell_n-\tilde{K}_n-\tilde{Z}_n\rightarrow W$. Nevertheless,
	from the convergence of $\ell_n$ on $\Omega_1$ and $\tilde{Z}_n\pointgeq -(W-\ell_n)^+$, 
	we obtain $\{\omega\in\Omega_1\mid \liminf \tilde{Z}_n> -\infty\}=\Omega_1$. As a consequence also $\liminf \tilde{K}_n$ is finite on $\Omega_1$, otherwise we could not have that $\tilde{W}_n\rightarrow W$. Thus, by setting $\tilde{Z}:=\liminf\tilde{Z}_n$ and $\tilde{K}:=\liminf\tilde{K}_n$, we have $W=\ell-\tilde{K}-\tilde{Z}\in\cC$.\\
	
	On $\Omega_1^C$ we may take $G_1^n:=H_1^n/|H_1^n|$ 
	and let $G_1:=\liminf G_1^n\chi_{\Omega_1^C}$. Define, $\ell_G:=G_1\cdot\Delta S_1$. We now observe that,
	$$
	\{\omega\in\Omega_1^C\mid \ell_G(\omega)\le 0\}
	\subseteq\{\omega\in\Omega_1^C\mid \liminf Z_n(\omega)=-\infty\}.
	$$ 
	Indeed, if $\omega\in\Omega_1^C$ is such that $\liminf Z_n(\omega)>-\infty$, applying again Lemma 2 in \cite{KabanovStricker},
	we have that 
	$$
	\liminf_{n \to \infty} \ \frac{X(\omega)+Z_n(\omega)}{
		|H_1^n(\omega)|}= 0,
	$$ 
	implying $\ell_G(\omega)$ is nonnegative. 
	Set now $$\tilde{Z}_n:=Z_n\vee-(\ell_G)^-.$$ 
	From $Z_n\pointleq \tilde{Z}_n\pointleq 0$, again by Lemma \ref{l.sandwich}, $\tilde{Z}_n\in \cZ$. 
	By taking the limit for $n\rightarrow\infty$ we obtain
	$(\ell_G)^-\in\cZ$ and thus, $\ell_G\in\cP$. 
	Since the financial market has no arbitrages $G_1\cdot\Delta S_1=Z\in\cZ$ 
	and hence one asset is redundant. Consider a partition 
	$\Omega_2^i$ of $\Omega_1^C$ on which $G_1^i\neq0$. Since $\cZ$ is stable under multiplication (Lemma \ref{l.stability}), for any $\ell^*\in\cI$, there exists $Z^*\in\cZ$ and $H^*\in\cL^0(\Omega_2^i,\cF_0)$ with $(H^*)^i=0$, such that $\ell^*=H^*\cdot \Delta S_1+Z^*$ on $\Omega_2^i$. Therefore, the term $\ell_n$ in \eqref{e.convergence} is 
	composed of trading strategies involving only $d-1$ assets. Iterating the procedure up to 
	$d$-steps we have the conclusion.

	Assuming now that \ref{e.convergence} holds for markets with $T-1$ periods, with the same argument we show that we can extend to 
	markets with $T$ periods. Set again  $\Omega_1:=\{\omega\in\Omega\mid \liminf |H_1^n|<\infty\}$. Since on $\Omega_1$ we have that, $$W_n-H_1^n\cdot\Delta S_1=\sum_{t=2}^T H_t^n\cdot \Delta S_t-K_n-Z_n\rightarrow W-H_1\cdot\Delta S_1.$$ The induction hypothesis allows to conclude that $W-H_1\cdot S_1\in\cC$ and therefore $W\in\cC$. On $\Omega_1^C$ we may take $G_1^n:=H_1^n/|H_1^n|$ 
	and let $G_1:=\liminf G_1^n\chi_{\Omega_1^C}$. Note that $W_n/|H_1^n|\rightarrow 0$ and hence $$
	\sum_{t=2}^T \frac{H_t^n}{|H_1^n|}\cdot \Delta S_t-\frac{K_n}{|H_1^n|}-\frac{Z_n}{|H_1^n|}\rightarrow -G_1\cdot\Delta S_1.$$
	Since $\cZ$ is stable under multiplication $\frac{Z_n}{|H_1^n|}\in\cZ$ and hence, by inductive hypothesis, there exists $\tilde{H}_t$ for $t=2,\ldots, T$ and $\tilde{Z}\in\cZ$ such that
	$$\tilde{\ell}:=G_1\cdot \Delta S_1+ \sum_{t=2}^T\tilde{H}_t\cdot \Delta S_t\pointgeq \tilde{Z}\in\cZ.$$
	The No Arbitrage condition implies that $\tilde{\ell}\in\cZ$. Once again, this means that one asset is redundant and, by considering a partition 
	$\Omega_2^i$ of $\Omega_1^C$ on which $G_1^i\neq0$, we can rewrite the term $\ell_n$ in \eqref{e.convergence} with $d-1$ assets. Iterating the procedure up to 
	$d$-steps we have the conclusion.
\end{proof}

The above result is consistent
with the fact that in classical
``probabilistic'' model for finite discrete-time markets only the no-arbitrage condition
and not the no-free lunch
condition has been 
utilized.

\section{Countably Additive Measures}
\label{s.ca}

In this section, we show that in general finite discrete time markets,
it is possible to characterize viability through countably additive functionals.
Also in this section, $\pointleq$ denotes the pointwise order for functions.
We prove this result by combining 
some results from \cite{Burzoni-et-al} which we collect in 
Section \ref{ss.ftm}. We refer to that paper for the precise technical requirements for $(\Omega,\F,S)$, 
we only point out that, in addition to the previous setting, $\Om$ needs to be a Polish space.

We let  $\cQ^{ca}$ be the set of countably additive
positive  probability measures $\Q$, with finite support, such that
$S$ is a $\Q$-martingale and $\cZ^-:=\{-Z^-\mid Z\in \cZ\}$.  For $X \in \cH$, set
$$
\cZ(X):= \left\{ Z \in \cZ^-\ :\ \exists \ell \in \cI \quad
{\text{such that}}\quad \cD(X)+\ell \pointgeq X+Z\ \right\},
$$
which is always non-empty when $\cD(X)$, e.g. $\forall X\in\bdd$.
By the lattice property of $\cZ$, if $\cD(X)+\ell \pointgeq X+Z$ the same is true if we take 
$Z=Z^-$. From Theorem \ref{t.tfae} we know
that, under no arbitrage, the attainment property holds and, hence, 
$\cZ(X)$ is non-empty for every $X \in \cH$.
For $A \in \cF$, we define
\begin{align*}
	\cD_A(X)&:= \inf \left\{ c \in \R\ :
	\exists \ell \in \cI  \quad
	{\text{such that}}\quad c+\ell(\om) \ge X(\om), \ \forall \om \in A\ 
	\right\}\\
	\cQ^{ca}_A&:= \left\{ \Q \in \cQ^{ca}\ :\
	\Q(A)=1\ \right\}.
\end{align*}

We need the following technical result in the 
proof of the main Theorem. 

\begin{prp}
	\label{p.technical}
	Suppose Assumption \ref{a.discrete} holds
	and  the financial market has no arbitrages.  Then, for every
	$X \in \cH$ and $Z \in \cZ(X)$, there exists
	$A_{X,Z}$ such that
	\begin{equation}\label{eq:technical set}
		A_{X,Z} \subset \{ \ \om \in \Om \ :\
		Z(\om)=0 \ \},
	\end{equation}
	and
	$$
	\cD(X)= \cD_{A_{X,Z}}(X) = \sup _{\Q \in \cQ^{ca}_{A_{X,Z}}}\ \E_\Q[X].
	$$
\end{prp}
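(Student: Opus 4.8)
The plan is to reduce the quasi-sure super-replication problem to a purely pointwise one on the set $N:=\{\,\om\in\Om:Z(\om)=0\,\}$ and then to invoke the finite-discrete-time pointwise fundamental theorem collected in Appendix \ref{ss.ftm}. First I would exploit the membership $Z\in\cZ(X)$: by definition there is $\ell^*\in\cI$ with $\cD(X)+\ell^*\pointgeq X+Z$, and since $Z\in\cZ^-$ we have $Z\pointleq 0$ with $\{Z<0\}=\Om\setminus N$. Restricting this inequality to $N$ yields the pointwise super-hedge $\cD(X)+\ell^*\pointgeq X$ on $N$, so that $\cD_N(X)\le\cD(X)$; monotonicity of $A\mapsto\cD_A(X)$ in $A$ then gives $\cD_A(X)\le\cD_N(X)\le\cD(X)$ for every $A\subseteq N$. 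The elementary half of the duality is obtained in the usual way: for any martingale measure $\Q$ concentrated on $A$ and any pointwise super-hedge $c+\ell\pointgeq X$ on $A$, taking $\Q$-expectations and using $\E_\Q[\ell]=0$ gives $\E_\Q[X]\le c$, hence $\sup_{\Q\in\cQ^{ca}_A}\E_\Q[X]\le\cD_A(X)$.

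The core of the argument is the construction of $A_{X,Z}$ and the two remaining inequalities. Here I would apply the pointwise super-hedging duality imported from \cite{Burzoni-et-al} (Appendix \ref{ss.ftm}) to the market \emph{localized} on $N$. Because the global market has no arbitrage and, by Theorem \ref{t.tfae}, enjoys the attainment property, the localized market on $N$ carries no one-point arbitrage; the cited machinery then produces the associated efficient set $A_{X,Z}\subseteq N\subseteq\{Z=0\}$ (so that \eqref{eq:technical set} holds), which supports the finitely supported countably additive martingale measures and on which
$$
\cD_{A_{X,Z}}(X)=\sup_{\Q\in\cQ^{ca}_{A_{X,Z}}}\E_\Q[X],
$$
while $\cD_N(X)=\cD_{A_{X,Z}}(X)$ because the complementary region $N\setminus A_{X,Z}$ can be dominated by net trades and therefore imposes no additional super-replication cost.

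It remains to upgrade $\cD_N(X)\le\cD(X)$ to equality, i.e.\ to prove $\cD(X)\le\cD_{A_{X,Z}}(X)$. I would derive this from the global duality $\cD(X)=\sup_{\varphi\in\cQ_{ac}}\varphi(X)$ of Proposition \ref{p.convexdual} together with the fact—made rigorous by the countably additive theorem of the appendix—that every relevant martingale measure is \emph{carried} by the efficient set: absolute continuity forces $\varphi(Z)=0$, and since $\varphi\ge 0$ while $Z<0$ on $\Om\setminus N$, the mass off $N$ vanishes; on $N\setminus A_{X,Z}$ the defining one-point arbitrages $\ell'\pointgeq 0$ satisfy both $\varphi(\ell')\ge 0$ and $\varphi(\ell')\le 0$, annihilating them, so the measure is in fact concentrated on $A_{X,Z}$. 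Testing a pointwise super-hedge $c+\ell\pointgeq X$ on $A_{X,Z}$ against such a measure and using the martingale property then gives $\E_\Q[X]\le c$, hence $\cD(X)\le\cD_{A_{X,Z}}(X)$. Chaining everything closes the identity $\cD(X)=\cD_N(X)=\cD_{A_{X,Z}}(X)=\sup_{\Q\in\cQ^{ca}_{A_{X,Z}}}\E_\Q[X]$.

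The step I expect to be the main obstacle is exactly this last passage: moving from the abstract, possibly merely finitely additive functionals of Proposition \ref{p.convexdual} to genuinely countably additive, finitely supported martingale measures concentrated on $A_{X,Z}$, and verifying that the efficient set can be chosen inside $\{Z=0\}$. The difficulty is that ``being carried by $N$'' is transparent for a countably additive $\Q$ (where $\Q(Z)=0$ with $Z<0$ off $N$ immediately gives $\Q(N^c)=0$) but is delicate for a general element of $\cH'=ba$, since $Z$ need not be bounded away from $0$ on $N^c$. This is precisely where the Polish-space structure, the closedness of $\cI-(\cL^0_+(\Om,\cF)+\cZ^-)$ under pointwise convergence established in the proof of Theorem \ref{t.tfae} (following \cite{KabanovStricker}), and the measurable-selection arguments of \cite{Burzoni-et-al} are indispensable, so the localization must be carried out at the level of the countably additive fundamental theorem rather than on the dual functionals directly.
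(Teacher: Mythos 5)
There is a genuine gap at the step you yourself flag as the main obstacle, and the paper resolves it by a primal argument that your proposal does not contain. The crux is the inequality $\cD(X)\le \cD_{A_{X,Z}}(X)$: you must show that a \emph{pointwise} super-hedge on a subset of $\{Z=0\}$ at cost $c<\cD(X)$ can be upgraded to a \emph{quasi-sure} super-hedge at the same cost, contradicting the definition of $\cD(X)$. Your route goes through the dual representation $\cD(X)=\sup_{\varphi\in\cQ_{ac}}\varphi(X)$ and the claim that every $\varphi\in\cQ_{ac}$ is carried by $A_{X,Z}$; but for a merely finitely additive $\varphi$ the only available information is $\varphi|_{\cZ}=0$, so concluding $\varphi\bigl((c+\ell-X)^-\chi_{\Omega\setminus A_{X,Z}}\bigr)=0$ requires knowing that $(c+\ell-X)^-\chi_{\Omega\setminus A_{X,Z}}$ is itself negligible --- which is exactly the statement to be proved, not a consequence of measurable selection. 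The paper bypasses the dual side entirely: given $c<\cD(X)$ and $\tilde\ell\in\cI$ with $c+\tilde\ell\ge X$ on $\{Z=0\}$, it sets $\tilde Z:=(c+\tilde\ell-X)^-\chi_{\{Z<0\}}$ and proves $\tilde Z\in\cZ$ by sandwiching, $nZ\pointleq \tilde Z\vee(nZ)\pointleq 0$, invoking Lemma \ref{l.sandwich} and the closedness of $\cZ$ under pointwise convergence (here $\{\tilde Z<0\}\subset\{Z<0\}$ is used to get pointwise convergence of $\tilde Z\vee(nZ)$ to $\tilde Z$). Then $c+\tilde\ell\pointgeq X+\tilde Z$ is a quasi-sure super-hedge at cost $c$, a contradiction.

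A second, related gap is your assertion that $\cD_{\{Z=0\}}(X)=\cD_{A_{X,Z}}(X)$ ``because the complementary region can be dominated by net trades and therefore imposes no additional super-replication cost.'' Passing from $\{Z=0\}$ to the efficient set $\{Z=0\}^*$ requires showing that the one-point-arbitrage strategies $\ell^t_i$ produced by Lemma \ref{conditional:splitting} are actually \emph{negligible} under the no-arbitrage hypothesis (the paper's Step 2 proves $\ell^t_i\in\cZ$ by induction, again via the sandwich/closure argument), so that $\hat Z:=Z-\sum_{t,i}\chi_{\{Z=0\}}(\ell^t_i)^+$ lies in $\cZ$ and the Step 1 mechanism can be rerun with $\hat Z$ in place of $Z$. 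Only after these two reductions does Proposition \ref{thm: duality_no_option} deliver the duality $\cD_{A_{X,Z}}(X)=\sup_{\Q\in\cQ^{ca}_{A_{X,Z}}}\E_\Q[X]$. Your easy inequalities ($\cD_{\{Z=0\}}(X)\le\cD(X)$ and weak duality) are correct and match the paper, but the two hard passages above are the substance of the proof and are missing.
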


Before proving this result,
we state the main result of this section.

\begin{thm}
	\label{t.FTAP}
	Suppose Assumption \ref{a.discrete} holds.
	Then,  the financial market has no arbitrages
	if and only if for every $(Z,R) \in \cZ^- \times \cP^+$
	there exists $\Q_{Z,R} \in \cQ^{ca}$ satisfying
	\begin{equation}\label{eq: martingale ZR}
		\E_{\Q_{Z,R}}[R]>0
		\quad
		{\text{and}}
		\quad
		\E_{\Q_{Z,R}}[Z]=0.
	\end{equation}
\end{thm}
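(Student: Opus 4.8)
The plan is to prove the two implications separately. The forward implication (no arbitrage implies the measure condition) rests on the duality of Proposition \ref{p.technical}, while the converse is an elementary consequence of the martingale property of the measures in $\cQ^{ca}$.

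For the forward direction, assume the market has no arbitrages. By Theorem \ref{t.tfae} this is equivalent to strong freedom of arbitrage together with the attainment property, so by Proposition \ref{p.viable} we have $\cD(R) > 0$ for every $R \in \cP^+$, and $\cZ(R)$ is non-empty. Fix a pair $(Z,R) \in \cZ^- \times \cP^+$. The arbitrary $Z$ need not itself lie in $\cZ(R)$, which is the family over which Proposition \ref{p.technical} is stated, so I first replace it: pick any $Z_0 \in \cZ(R)$, with witness $\ell_0 \in \cI$ satisfying $\cD(R) + \ell_0 \pointgeq R + Z_0$, and set $Z' := Z \wedge Z_0$. Since $\cZ$ is a lattice and both $Z, Z_0 \pointleq 0$, we have $Z' \in \cZ^-$, and from $Z' \pointleq Z_0$ we get $\cD(R) + \ell_0 \pointgeq R + Z_0 \pointgeq R + Z'$, so $Z' \in \cZ(R)$. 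Crucially, because $Z \pointleq 0$, the identity $Z' = Z \wedge Z_0$ forces $\{Z' = 0\} \subseteq \{Z = 0\}$. I then apply Proposition \ref{p.technical} with $X = R$ and this $Z'$ to obtain a set $A := A_{R,Z'} \subseteq \{Z' = 0\} \subseteq \{Z = 0\}$ with
$$\cD(R) = \sup_{\Q \in \cQ^{ca}_A} \E_\Q[R].$$
Since $\cD(R) > 0$, there is some $\Q \in \cQ^{ca}_A$ with $\E_\Q[R] > 0$. As $\Q$ has finite support and $\Q(A) = 1$, its support lies in $A \subseteq \{Z = 0\}$, whence $\E_\Q[Z] = 0$; this $\Q$ is the desired $\Q_{Z,R}$.

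For the converse I argue by contraposition. Suppose an arbitrage exists; by the characterization recorded under Assumption \ref{a.discrete} there are $\ell \in \cI$, $R^* \in \cP^+$ and $Z^* \in \cZ$ with $\ell \pointgeq R^* + Z^*$. Consider the pair $(Z,R) := (-(Z^*)^-, R^*)$, noting $-(Z^*)^- \in \cZ^-$. Any $\Q \in \cQ^{ca}$ with $\E_\Q[-(Z^*)^-] = 0$ must, by finite support and $-(Z^*)^- \pointleq 0$, be supported on $\{(Z^*)^- = 0\} = \{Z^* \pointgeq 0\}$; on that set $\ell \pointgeq R^* + Z^* \pointgeq R^*$. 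Because $S$ is a $\Q$-martingale and $\ell$ is a martingale transform whose integrands are finite on the finite support of $\Q$, we have $\E_\Q[\ell] = 0$, and hence $0 = \E_\Q[\ell] \geq \E_\Q[R^*]$. Thus no $\Q \in \cQ^{ca}$ satisfies both $\E_\Q[R^*] > 0$ and $\E_\Q[-(Z^*)^-] = 0$, so the measure condition fails for this pair.

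The individual steps are short once Proposition \ref{p.technical} is in hand, so the real weight of the argument lies in that proposition rather than here. Within the present proof, the point requiring genuine care is the first one: the given $Z \in \cZ^-$ is arbitrary and may fail to be superhedgeable against $R$, so one must pass to $Z \wedge Z_0$ and exploit the lattice structure of $\cZ$ together with the sign of $Z$ to keep the resulting support inside $\{Z = 0\}$. The remaining support and martingale-transform bookkeeping is routine given the finite-support structure of $\cQ^{ca}$.
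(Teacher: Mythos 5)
Your proof is correct and follows essentially the same route as the paper's: the forward direction combines the given $Z$ with an element of $\cZ(R)$ to produce a member of $\cZ(R)$ whose zero set sits inside $\{Z=0\}$ and then invokes Proposition \ref{p.technical} together with $\cD(R)>0$, while the converse integrates the arbitrage inequality against the hypothesized measure. The only (immaterial) difference is that you take the lattice minimum $Z\wedge Z_0$ where the paper takes the sum $Z+Z_R$; both yield an element of $\cZ(R)$ with $\{\,\cdot=0\}\subseteq\{Z=0\}$.
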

\begin{proof}
	Suppose that the financial market has no arbitrages.  Fix
	$(Z,R) \in \cZ^- \times \cP^+$ and $Z_R \in \cZ(R)$. Set  $Z^*:=Z_R+Z \in \cZ(R)$.
	By Proposition \ref{p.technical},
	there exists $A_*:=A_{R,Z^*}$ satisfying
	the properties listed there.  In particular,
	$$
	0< \cD(R) = \sup_{\Q \in \cQ^{ca}_{A_*}}\ \E_\Q[R].
	$$
	Hence, there is $\Q^* \in  \cQ^{ca}_{A_*}$ so that
	$\E_{\Q^*}[R]>0$.  Moreover, since $Z_R,Z \in \cZ^-$,
	$$
	A_* \subset \{ Z^*=0\} = \{Z_R=0 \} \cap \{Z=0\}.
	$$
	In particular, $\E_{\Q^*}[Z]=0$.
	
	To prove the opposite implication, suppose that there exists $R\in\cP^+$,  
	$\ell \in \cI$ and $Z\in \cZ$ such that $\ell \pointgeq R+Z$.
	Then, it is clear that $\ell \pointgeq R-Z^-$. Let $\Q^*:=\Q_{-Z^-,R}\in\cQ^{ca}$ 
	satisfying \eqref{eq: martingale ZR}. By integrating
	both sides against $\Q^*$, we obtain
	$$
	0 = \E_{\Q^*}[\ell] \ge\E_{\Q^*}[R-Z^-] =
	\E_{\Q^*}[R]>0.
	$$
	which is a contradiction. Thus, there are no arbitrages.
\end{proof}
\vspace{5pt}

We continue with 
the proof of Proposition \ref{p.technical}.

\begin{proof}[Proof of Proposition \ref{p.technical}]
	Since there are no arbitrages,
	by Theorem \ref{t.tfae} we have the attainment property.
	Hence, for a given $X\in\cH$, the set $\cZ(X)$ is non-empty.\\
	
	\emph{Step 1.} We show that, for any $Z\in\cZ(X)$, $\cD(X)=\cD_{\{Z=0\}}(X)$.\\
	Note that, since $\cD(X)+\ell\pointgeq X+Z$, for some $\ell\in\cI$, the inequality 
	$\cD_{\{Z=0\}}(X)\le\cD(X)$ is always true. 
	Towards a contradiction, suppose that the inequality is strict, 
	namely, there exist $c<\cD(X)$ and $\tilde{\ell}\in \cI$ such that 
	$c+\tilde{\ell}(\omega)\geq X(\omega)$ for any $\omega\in\{Z=0\}$. 
	We show that 
	$$
	\tilde{Z}:=(c+\tilde{\ell}-X)^-\chi_{\{Z<0\}}\in \cZ.
	$$ 
	This together with $c+\tilde{\ell}\pointgeq X+\tilde{Z}$ yields a contradiction. 
	Recall that $\cZ$ is a linear space so that $nZ\in\cZ$ for any $n\in\mathbb{N}$. 
	From $nZ\pointleq \tilde{Z}\vee (nZ)\pointleq 0$, we also have $\tilde{Z}_n:=\tilde{Z}\vee (nZ)\in \cZ$, 
	by Lemma \ref{l.sandwich}. By noting that $\{\tilde{Z}<0\}\subset\{Z<0\}$ 
	we have that $\tilde{Z}_n(\omega)\rightarrow \tilde{Z}(\omega)$ for every $\omega\in\Omega$. 
	From the closure of $\cZ$ under pointwise convergence, we conclude that $\tilde{Z}\in \cZ$.
	
	\emph{Step 2.} For a given set $A\in\cF_T$, we let $A^*\subset A$ 
	be the set of scenarios visited by martingale measures 
	(see \eqref{eq:OmegaStar_def} in the Appendix for more details). 
	We show that, for any $Z\in\cZ(X)$, $\cD(X)=\cD_{\{Z=0\}^*}(X)$.\\
	
	Suppose that $\{Z=0\}^*$ is a proper subset of $\{Z=0\}$ otherwise, from Step 1, 
	there is nothing to show. From Lemma \ref{NoOnePoint} there is a strategy 
	$\tilde{\ell}\in \cI$ such that $\tilde{\ell}\geq 0$ on 
	$\{Z=0\}$\footnote{Note that restricted to $\{Z=0\}$ this strategy yields no risk 
		and possibly positive gains, in other words, this is a good candidate for being an arbitrage.}. 
	Lemma \ref{conditional:splitting} (and in particular \eqref{e.A0}) 
	yields a finite number of strategies $\ell^t_1,\ldots \ell^t_{\beta_t}$ with $t=1,\ldots T$, such that 
	\begin{equation}\label{eq:ZOmega star}
		\{\hat{Z}=0\}=\{Z=0\}^*\qquad \text{where }\qquad \hat{Z}:=Z-\sum_{t=1}^T\sum_{i=1}^{\beta_t}\chi_{\{Z=0\}}(\ell^t_i)^+\ .
	\end{equation}
	Moreover, for any $\omega\in\{Z=0\}\setminus \{Z=0\}^*$, 
	there exists $(i,t)$ such that $\ell^t_i(\omega)>0$.
	We are going to show that, under the no arbitrage hypothesis, 
	$\ell^t_i\in\cZ$ for any $i=1,\ldots\beta_t$, $t=1,\ldots T$. 
	In particular, from the lattice property of the linear space $\cZ$, we have $\hat{Z}\in\cZ$.
	
	We illustrate the reason for $t=T$, by repeating the same argument up to $t=1$ we have the thesis. 
	We proceed by induction on $i$. Start with $i=1$. From Lemma \ref{conditional:splitting} we have that 
	$\ell^T_i\geq 0$ on $\{Z=0\}$ and, therefore, $\{\ell^T_1<0\}\subseteq\{Z<0\}$. 
	Define $\tilde{Z}:=-(\ell^T_1)^-\pointleq 0$. By using the same argument as in Step 1, 
	we observe that $nZ\pointleq \tilde{Z}\vee (nZ)\pointleq 0$ with $nZ\in\cZ$ for any $n\in\mathbb{N}$. 
	From $\{\ell^T_1<0\}\subseteq\{Z<0\}$ and the closure of $\cZ$ under pointwise convergence, 
	we conclude that $\tilde{Z}\in \cZ$. From no arbitrage, we must have $\ell^T_1\in\cZ$.\\
	Suppose now that $\ell_j^T\in\cZ$ for every $1\leq j\leq i-1$.
	From Lemma \ref{conditional:splitting}, we have that $\ell^T_i\geq 0$ on $\{Z-\sum_{j=1}^{i-1}\ell^T_i=0\}$ and, 
	therefore, $\{\ell^T_i<0\}\subseteq\{Z-\sum_{j=1}^{i-1}\ell^T_i<0\}$.  
	The argument of Step 1 allows to conclude that $\ell^T_i\in\cZ$.\\ 
	
	We are now able to show the claim. The inequality $\cD_{\{Z=0\}^*}(X)\le \cD_{\{Z=0\}}(X)=\cD(X)$ is always true. 
	Towards a contradiction, suppose that the inequality is strict, namely, there exist 
	$c<\cD(X)$ and $\tilde{\ell}\in \cI$ such that $c+\tilde{\ell}(\omega)\geq X(\omega)$ 
	for any $\omega\in\{Z=0\}^*$. We show that 
	$$
	\tilde{Z}:=(c+\tilde{\ell}-X)^-\chi_{\Omega\setminus \{Z=0\}^*}\in \cZ.
	$$ 
	This together with $c+\tilde{\ell}\pointgeq X+\tilde{Z}$, yields a contradiction.
	To see this recall that, from the above argument, 
	$\hat{Z}\in\cZ$ with $\hat{Z}$ as in \eqref{eq:ZOmega star}. 
	Moreover, again by \eqref{eq:ZOmega star}, we have $\{\tilde{Z}<0\}\subset\{\hat{Z}<0\}$. 
	The argument of Step 1 allows to conclude that $\tilde{Z}\in \cZ$.
	
	\emph{Step 3.} We are now able to conclude the proof. Fix $Z\in\cZ(X)$ and set $A_{X,Z}:=\{Z=0\}^*$. Then,
	$$
	\cD(X) =\cD_{\{Z=0\}}(X)
	=\cD_{(A_{X,Z})^*}(X)
	= \sup_{Q\in \mathcal{Q}^{ca}_{A_{X,Z}}}\E_{Q}[X],
	$$
	where the first two equalities follow from Step 1 and Step 2 and the 
	last equality follows from Proposition \ref{thm: duality_no_option}. 
\end{proof}

\section{Some technical tools}
\subsection{Preferences}
\label{ss.pre}

We start with a simple but  a
useful condition for negligibility.

\begin{lem}
	\label{l.sandwich}  Consider two negligible claims
	$\hat{Z},\tilde{Z}\in\cZ$.
	Then, any claim $Z\in\cH$ 
	satisfying $\hat{Z}\oleq Z\oleq \tilde{Z}$  is 
	negligible as well.
\end{lem}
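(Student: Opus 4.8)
The plan is to invoke only the transitivity of the common pre-order $\oleq$, which is available because $\oleq$ is by assumption a pre-order on $\cH$ that is compatible with the vector space operations. Recall that a contract $W \in \cH$ is negligible, i.e. $W \in \cZ$, precisely when both $0 \oleq W$ and $W \oleq 0$ hold. Thus the hypotheses $\hat Z, \tilde Z \in \cZ$ unpack into the four relations $0 \oleq \hat Z$, $\hat Z \oleq 0$, $0 \oleq \tilde Z$, and $\tilde Z \oleq 0$, while the sandwiching hypothesis supplies $\hat Z \oleq Z$ and $Z \oleq \tilde Z$.

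First I would establish the lower bound $0 \oleq Z$ by chaining $0 \oleq \hat Z$, which is part of the negligibility of $\hat Z$, with the left-hand inequality $\hat Z \oleq Z$ of the sandwich; transitivity of $\oleq$ then gives $0 \oleq Z$. Symmetrically, I would obtain the upper bound $Z \oleq 0$ by chaining the right-hand inequality $Z \oleq \tilde Z$ with $\tilde Z \oleq 0$, the relevant half of the negligibility of $\tilde Z$, again using transitivity. Taken together, $0 \oleq Z$ and $Z \oleq 0$ are exactly the defining conditions for $Z$ to be negligible, so $Z \in \cZ$ and the claim follows.

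There is no genuine obstacle here: the argument is a two-line transitivity chase, and it uses only the lower bound $0 \oleq \hat Z$ and the upper bound $\tilde Z \oleq 0$, so it does not even require the full strength of the assumptions $\hat Z, \tilde Z \in \cZ$. The only conceptual point worth keeping in mind is that $\oleq$ is merely a pre-order (reflexive and transitive, not assumed antisymmetric); transitivity is all the proof needs, and antisymmetry is never invoked, which is consistent with the fact that a whole equivalence class of contracts can be negligible.
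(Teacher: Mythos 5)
Your proof is correct and is essentially the paper's own argument: the paper writes the same two transitivity steps in the translated form $X \oleq X+\hat{Z} \oleq X+Z \oleq X+\tilde{Z} \oleq X$ for arbitrary $X$, which, since $\oleq$ is compatible with the vector space operations, is exactly your chain $0 \oleq \hat{Z} \oleq Z$ and $Z \oleq \tilde{Z} \oleq 0$. No gap; your version simply invokes the definition of $\cZ$ directly rather than the equivalent statement $X \sim X+Z$ for all $X$.
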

\begin{proof}
	By definitions, we have, 
	$$
	X\oleq X+\hat{Z}\oleq X+Z \oleq X+ \tilde{Z}\oleq X \ \
	\Rightarrow \ \
	X \sim X+Z.
	$$
	Thus, $Z\in \cZ$. 
\end{proof}

\begin{lem}\label{l.stability} Suppose that $\cZ$ is closed under 
	pointwise convergence. 
	Then, $\cZ$ is stable under multiplication, i.e., $ZH\in\cZ$ for any $H\in\cH$.
\end{lem}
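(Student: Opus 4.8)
The plan is to reduce the assertion to the case of a \emph{bounded} multiplier $H$, where I can invoke the sandwich criterion of Lemma~\ref{l.sandwich}, and then to remove the boundedness by a truncation argument that exploits the closure of $\cZ$ under pointwise convergence. First I would record the two structural facts that drive everything. Since the pre-order is compatible with the vector space operations, $Z\in\cZ$ implies $-Z\in\cZ$, so $\cZ$ is a linear subspace; being also a lattice (Assumption~\ref{a.discrete}), it contains $|Z|=Z\vee(-Z)$ for every $Z\in\cZ$, and hence also $n|Z|$ and $-n|Z|$ for every $n\in\N$.

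Next, fix $Z\in\cZ$ and a \emph{bounded} $H\in\cH$, say $|H|\pointleq n$ for some $n\in\N$. Then the pointwise estimate $|ZH|\pointleq n|Z|$ gives
\begin{equation*}
-n|Z|\ \oleq\ ZH\ \oleq\ n|Z|,
\end{equation*}
where I use that a pointwise inequality implies the pre-order inequality (the preorder is consistent with the pointwise order, i.e.\ one may take the witnessing negligible contract to be $0$). Since $\pm n|Z|\in\cZ$ by the first step, Lemma~\ref{l.sandwich} yields $ZH\in\cZ$.

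Finally I would treat a general $H\in\cH$ by truncation: setting $H_k:=(H\wedge k)\vee(-k)$, each $H_k$ is bounded, so $ZH_k\in\cZ$ by the previous step, while $Z(\om)H_k(\om)\to Z(\om)H(\om)$ for every $\om\in\Om$ because $H(\om)$ is finite. As $\cZ$ is closed under pointwise convergence, the limit $ZH$ again lies in $\cZ$, which is the claim. The genuinely delicate point is exactly this passage to unbounded $H$: the whole weight rests on the hypothesis that $\cZ$ is closed under pointwise convergence and on the fact that the truncations converge to $ZH$ \emph{everywhere} rather than merely quasi-surely, so that the closure hypothesis can be applied directly.
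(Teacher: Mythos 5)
Your proof is correct and takes essentially the same route as the paper's: truncate $H$ to $(H\wedge k)\vee(-k)$, sandwich $ZH_k$ between $\pm k|Z|\in\cZ$ via Lemma~\ref{l.sandwich}, and pass to the pointwise limit using the closure hypothesis. The paper's one-line version cites only the cone property where you additionally make explicit the linearity and lattice structure of $\cZ$ needed to get $|Z|\in\cZ$, so your write-up is simply a fully spelled-out form of the same argument.
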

\begin{proof}Note first that $Z_n:=Z((H\wedge n)\vee -n)\in\cZ$. This follows from by 
	Lemma \ref{l.sandwich} and the fact that $\cZ$ is a cone. 
	By taking the limit  for $n\rightarrow\infty$, the result follows.
\end{proof}

We next prove that $\cE(Z)=0$ for every
$Z \in \cZ$.

\begin{lem}
	\label{l.ez} Let $\cE$ be a
	sublinear expectation. Then,
	\begin{align}
		\label{e.e}
		\cE(c + \lambda [X+Y]) &= c + \cE(\lambda[X+Y])
		= c + \lambda \cE(X+Y)\\
		\nonumber& \le c + \lambda \left[\ - \left(
		- \cE(X) - \cE(Y)\right) \right],
	\end{align}
	for every $c \in $, $\lambda \ge 0$, $X,Y \in 
	\cH$.
	In particular,  
	\begin{equation*}
		\cE(Z)=0, \quad \forall \ Z \in \cZ.
	\end{equation*}
\end{lem}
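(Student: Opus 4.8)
The plan is to read the displayed chain \eqref{e.e} straight off the three defining properties of a sublinear expectation, then to pin down the normalization $\cE(0)=0$ from positive homogeneity, after which the assertion $\cE(Z)=0$ for negligible $Z$ drops out of monotonicity by sandwiching $Z$ between $0$ and $0$. First I would verify the display for $\lambda>0$: translation invariance gives $\cE(c+\lambda[X+Y])=c+\cE(\lambda[X+Y])$; positive homogeneity rewrites $\cE(\lambda[X+Y])$ as $\lambda\,\cE(X+Y)$; and subadditivity gives $\cE(X+Y)\le \cE(X)+\cE(Y)$, which after multiplying by $\lambda\ge 0$ and adding $c$ produces exactly the right-hand bound $c+\lambda[-(-\cE(X)-\cE(Y))]$. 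These three steps are nothing more than substitution into the axioms, so this part is purely mechanical.

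Next I would establish the normalization $\cE(0)=0$. Applying positive homogeneity to the zero contract with $\lambda=2$ yields $\cE(0)=\cE(2\cdot 0)=2\,\cE(0)$, forcing $\cE(0)=0$. This simultaneously settles the $\lambda=0$ instance of the display, where, using $\cE(0)=0$, both sides collapse to $c$. Finally, fix $Z\in\cZ$. By the definition of a negligible contract we have $0\oleq Z$ and $Z\oleq 0$, so monotonicity of $\cE$ with respect to $\oleq$ gives both $\cE(0)\le\cE(Z)$ and $\cE(Z)\le\cE(0)$; hence $\cE(Z)=\cE(0)=0$, which is the desired conclusion.

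The only genuinely delicate point is the normalization step: positive homogeneity alone pins $\cE(0)$ down only to the set $\{0,+\infty\}$ (since $\cE$ takes values in $\R\cup\{\infty\}$, the value $-\infty$ is already excluded), and one must invoke that $\cE$ is a genuine, i.e.\ proper, sublinear expectation—finite at $0$—to rule out the degenerate value $+\infty$. Once $\cE(0)=0$ is in hand, no further subtlety remains: everything reduces to monotonicity and the elementary sandwich $0\oleq Z\oleq 0$.
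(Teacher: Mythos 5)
Your proof is correct and follows essentially the same route as the paper's: the displayed chain is read directly off translation invariance, positive homogeneity, and subadditivity, and $\cE(Z)=0$ for $Z\in\cZ$ is obtained by applying monotonicity to the sandwich $0\oleq Z\oleq 0$ together with $\cE(0)=0$ (the paper phrases the upper bound as $\cE(Z-Z)\ge\cE(Z)$, which is the same step). If anything, you are slightly more careful than the paper, which asserts $\cE(0)=0$ without comment, whereas you correctly note that positive homogeneity alone only forces $\cE(0)\in\{0,+\infty\}$ and that properness is needed to exclude the degenerate case, and you also handle $\lambda=0$ explicitly.
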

\begin{proof}
	Let $X, Y\in \cH$.  The sub-additivity 
	of $U_\cE$ implies that
	$$
	U_\cE(X^\prime)+U_\cE(Y^\prime) \le U_\cE(X^\prime
	+Y^\prime), \quad \forall \ X^\prime, Y^\prime \in \cH,
	$$
	even when they take values $\pm \infty$.  The definition
	of $U_\cE$ now yields,
	$$
	\cE(X+Y) = - U_\cE(-X-Y) \le 
	-\left[U_\cE(-X)+U_\cE(-Y)\right] = - \left(
	- \cE(X) - \cE(Y)\right).
	$$
	Then, \reff{e.e} follows directly
	from the definitions. 
	
	Let $Z \in \cZ$.  Then, $-Z , Z\in \cP$ and 
	$\cE(Z), \cE(-Z) \ge 0$.  
	Since $-Z \in \cP$, the
	monotonicity of $\cE$
	implies that 
	$\cE(X-Z) \ge \cE(X)$ for
	any $X \in \cH$.  Choose $X=Z$
	to arrive at
	$$
	0 = \cE(0)=\cE(Z-Z)\ge \cE(Z) \ge 0.
	$$
	Hence,  $\cE(Z)$ is equal to zero.
\end{proof}

\subsection{Finite Time Markets}
\label{ss.ftm}

We here recall some results from \cite{Burzoni-et-al} 
(see Section 2 therein for the precise specification of the framework). 
We are given a filtered space $(\Omega,\F,\mathcal{F})$ with 
$\Omega$ a Polish space and $\F$ containing the   filtration  generated by  a Borel-measurable process $S$. 
We denote by $\mathcal{Q}$ the set of martingale measures for the process $S$, 
whose support is a finite number of points. For a given set $A\in\mathcal{F}$, 
$\mathcal{Q}_A=\{Q\in\mathcal{Q}\mid Q(A)=1\}$. 
We define the set of scenarios charged by martingale measures as
\begin{equation}
	\label{eq:OmegaStar_def}
	A^*  := \left\{ \omega \in \Omega \mid \exists Q\in \mathcal{Q}_A\text{ s.t. }Q(\omega )>0\right\} 
	=\bigcup_{Q\in \mathcal{Q}_A} supp(Q).
\end{equation}

\begin{dfn}
	We say that $\ell\in\cI$ is a one-step strategy if $\ell=H_t\cdot (S_{t}-S_{t-1})$ with 
	$H_t\in \mathcal{L}(X,\mathcal{F}_{t-1})$ for some $t\in \{1,\ldots,T\}$. We say that 
	$a\in \cI$ is a one-point Arbitrage on $A$ iff $a(\omega)\geq 0$ $\forall\omega\in A$ 
	and $a(\omega)>0$ for some $\omega\in A$.
\end{dfn}
The following Lemma is crucial for the characterization of the set $A^*$ in terms of arbitrage considerations.

\begin{lem}
	\label{conditional:splitting} Fix any $t\in \{1,\ldots ,T\}$ and $\Gamma \in
	\mathcal{F}$. There exist an index $\beta \in \{0,\ldots ,d\}$%
	, one-step strategies $\ell^{1},\ldots ,\ell^{\beta }\in \cI$ and  $B^{0},...,B^{\beta }$, a partition of $\Gamma$, satisfying:
	
	\begin{enumerate}
		\item \label{item_spezz} if $\beta=0$ then $B^0=\Gamma$ and there are No one-point Arbitrages, i.e., 
		$$
		\ell(\omega)\geq 0\  \forall \omega\in B^{0}\Rightarrow\ell(\omega)=0\ \forall \omega\in B^{0}.
		$$
		\item \label{item_spezz_arb}if $\beta >0$ and $i=1,\ldots ,\beta $ then: 
		\begin{itemize}
			\item [$\rhd$]$B^{i}\neq \emptyset $,
			\item [$\rhd$]$\ell^{i}(\omega )>0$ for all $\omega \in B^{i}$,
			\item [$\rhd$]$\ell^{i}(\omega )\geq 0$ for all $\omega \in \cup _{j=i}^{\beta }B^{j}\cup B^{0 }$.
		\end{itemize}
	\end{enumerate}
\end{lem}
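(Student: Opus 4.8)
The plan is to construct the partition $B^0,\ldots,B^\beta$ and the one-step strategies $\ell^1,\ldots,\ell^\beta$ by an iterative \emph{peeling} procedure at the fixed date $t$, each round extracting a one-point arbitrage and removing the set where it is strictly positive; the key is that every extraction kills one dimension of the relevant increments, so the procedure stops after at most $d$ rounds.

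First I would reduce to the single increment $\Delta S_t$: a one-step strategy at $t$ has the form $\ell=H\cdot\Delta S_t$ with $H$ an $\cF_{t-1}$-measurable $\R^d$-valued map, so a one-point arbitrage on a set $A$ is such an $H$ with $H\cdot\Delta S_t\geq 0$ on $A$ and $H\cdot\Delta S_t>0$ on a nonempty subset of $A$. Setting $\Gamma_1:=\Gamma$, at round $i$ I ask whether a one-step one-point arbitrage exists on $\Gamma_i$. If none does, I stop with $\beta:=i-1$ and $B^0:=\Gamma_i$, which then carries no one-point arbitrage; this yields item (1), including the case $\beta=0$ in which the very first test fails. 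If one exists, I select an $\cF_{t-1}$-measurable direction $H^i$ realizing it, set $\ell^i:=H^i\cdot\Delta S_t$, $B^i:=\{\omega\in\Gamma_i:\ell^i(\omega)>0\}$, and $\Gamma_{i+1}:=\Gamma_i\setminus B^i$. Since $\ell^i\geq 0$ on $\Gamma_i$ and $B^i$ is exactly its strict-positivity set, we automatically obtain $B^i\neq\emptyset$, $\ell^i>0$ on $B^i$, $\ell^i=0$ on $\Gamma_{i+1}$, and $\ell^i\geq 0$ on $\Gamma_i=\bigcup_{j\geq i}B^j\cup B^0$, which is precisely item (2).

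For termination I would track, fiberwise over $\cF_{t-1}$, the linear span $V$ of the increment values $\Delta S_t(\omega)$ still present in $\Gamma_i$. The identity $H^i\cdot\Delta S_t=0$ on $\Gamma_{i+1}$ confines the surviving increments to $V\cap\ker H^i$; because $\ell^i$ is strictly positive somewhere on that fiber, $H^i$ does not annihilate all of $V$, so $\dim(V\cap\ker H^i)<\dim V$. Thus on each fiber the dimension drops strictly at every round, so at most $d$ rounds occur there, and since on fibers without a remaining arbitrage one may simply take $H^i=0$ (consistent with $\ell^i\geq 0$), the global procedure halts with $\beta\leq d$.

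The main obstacle is the measurable selection concealed in the extraction step: the directions $H^i$ must be genuinely $\cF_{t-1}$-measurable and the sets $B^i$ must belong to $\cF$. Here I would exploit the Polish, Borel structure assumed for $(\Omega,\cF,S)$ to show that the graph of the set-valued map assigning to each node its cone of one-point-arbitrage directions is analytic, and then invoke the Jankov--von Neumann (or Aumann) measurable selection theorem to choose $H^i$ measurably; the fiberwise bookkeeping that records the local rank and decides, measurably, when no arbitrage survives must be carried along the same fibers. This measurable, fiberwise implementation of the elementary finite-dimensional peeling is the delicate point, while the remaining verifications of the stated inequalities are immediate from the construction.
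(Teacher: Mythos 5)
This lemma is not proved in the paper: it is imported verbatim from \cite{Burzoni-et-al}, and the argument given there is precisely the peeling scheme you propose --- at the fixed date $t$, repeatedly extract a one-step one-point arbitrage on the surviving set, remove its strict-positivity set, and bound the number of rounds by a rank-reduction argument on the increments. Your verification of the partition property, of $B^{i}\neq\emptyset$, and of the sign conditions on $\cup_{j\ge i}B^{j}\cup B^{0}=\Gamma_i$ is correct, and the dimension-drop argument for $\beta\le d$ is the right one.

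Two things keep this from being a complete proof. First, the bound $\beta\le d$ holds only if each $H^i$ is selected \emph{fiberwise}: on every $\cF_{t-1}$-fiber of $\Gamma_i$ that still admits a one-point arbitrage, $H^i$ must realize one there, so that the local rank drops on \emph{all} such fibers simultaneously in a single round. A single global one-point arbitrage per round (positive at just one point of $\Gamma_i$) reduces the rank on only one fiber at a time, and the procedure need not stop after $d$ rounds. You gesture at this (``on fibers without a remaining arbitrage one may simply take $H^i=0$''), but the simultaneity requirement should be stated explicitly, since it is what turns the extraction step into a genuine measurable-selection problem rather than a pointwise existence statement. Second, that selection --- showing that the multifunction assigning to each node the set of arbitrage directions for the conditional support of $\Delta S_t$ restricted to $\Gamma_i$ has analytic graph, admits a universally measurable selector, and yields measurable sets $B^i$ --- is the actual mathematical content of the lemma in the uncountable Polish setting, and you only name the tools one would use. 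As written, your argument is a correct reduction of the lemma to that selection problem, which is exactly the part that \cite{Burzoni-et-al} supplies.
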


We are now using the previous result, which is for some fixed $t$, to identify $A^*$. Define
\begin{eqnarray}\label{eq:construction A*}
	A_{T}:= && A \nonumber \\
	A _{t-1}:= && A _{t}\setminus \bigcup_{i=1}^{\beta_t} B^i_t,\text{%
		\quad }t\in \{1,\ldots,T\}, \label{Omega_t}
\end{eqnarray}%
where $B^i_t:= B^{i,\Gamma}_t$, $\beta_t:=\beta_t^{\Gamma}$ are the sets and
index constructed in Lemma \ref{conditional:splitting} with $\Gamma= A_t$%
, for $1\leq t\leq T$. 
Note that, for the corresponding strategies $\ell^t_i$ we have 
\begin{equation}\label{e.A0}
	A_0=
	\bigcap_{t=1}^T\bigcap_{i=1}^{\beta_t}\{\ell^t_i=0\}.
\end{equation}

\begin{lem}\label{NoOnePoint} $A_0$ as constructed in \eqref{eq:construction A*} satisfies 
	$ A_0= A^*$. Moreover, No one-point Arbitrage on $ A$ $\Leftrightarrow$ $ A^*= A$.
\end{lem}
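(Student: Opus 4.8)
The plan is to prove the two inclusions $A^*\subseteq A_0$ and $A_0\subseteq A^*$ separately and then read off the stated equivalence. First I would treat the easy inclusion $A^*\subseteq A_0$. Fix $Q\in\mathcal{Q}_A$ and $\om_0\in\mathrm{supp}(Q)$; the goal is to show $\ell^t_i(\om_0)=0$ for all $t,i$, which by \eqref{e.A0} gives $\om_0\in A_0$. I would run a downward induction on $t$ proving $Q(A_t)=1$. It starts from $Q(A_T)=Q(A)=1$; assuming $Q(A_t)=1$, apply Lemma \ref{conditional:splitting} with $\Gamma=A_t$ and run an inner induction on $i=1,\dots,\beta_t$: once $Q(B^j_t)=0$ for $j<i$, the measure $Q$ sits on $B^0_t\cup\bigcup_{j\ge i}B^j_t$, where $\ell^t_i\pointgeq 0$, so the martingale identity $\E_Q[\ell^t_i]=0$ (valid since $\ell^t_i\in\cI$ and $S$ is a $Q$-martingale) forces $\ell^t_i=0$ $Q$-a.s.\ and hence $Q(B^i_t)=0$; summing over $i$ yields $Q(A_{t-1})=1$. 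Because $Q(\om_0)>0$, the $Q$-a.s.\ vanishing of each $\ell^t_i$ forces $\ell^t_i(\om_0)=0$ pointwise, so $\om_0\in A_0$.

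The hard inclusion $A_0\subseteq A^*$ is where the real work lies. Fixing $\om_0\in A_0$, I must build a finitely supported martingale measure $Q$ with $Q(A)=1$ and $Q(\om_0)>0$. I would exploit two structural facts: the residual set $B^0_t=A_{t-1}$ carries no time-$t$ one-step arbitrage (it is exactly the part of $A_t$ left arbitrage-free after the peeling in Lemma \ref{conditional:splitting}), and the sets are nested, $A_0\subseteq A_1\subseteq\cdots\subseteq A_T=A$. Fiberwise, the absence of a time-$t$ one-step arbitrage on $A_{t-1}$ means that on every $\cF_{t-1}$-atom $\alpha$ the origin lies in the relative interior of the convex hull of $\{\Delta S_t(\om):\om\in A_{t-1}\cap\alpha\}$ (the standard separation equivalence); a relative-interior plus Carath\'eodory argument then represents $0$ as a convex combination of finitely many such increments, with strictly positive weight on any prescribed point $\Delta S_t(\om_0)$ of the fiber. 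I would paste these one-step kernels along a finite tree that carries $\om_0$ at every level --- legitimate because $\om_0\in A_0\subseteq A_{t-1}$ always lies in the relevant fiber. The time-$t$ successors are chosen in $A_{t-1}$, and since $A_{t-1}\subseteq A_t$ they are valid starting nodes for the time-$(t+1)$ branching; iterating to $t=T$ leaves all leaves in $A_{T-1}\subseteq A$. The resulting product measure $Q$ is finitely supported, makes $S$ a martingale, satisfies $Q(A)=1$, and charges $\om_0$, so $\om_0\in A^*$. Since the tree is finite, only finitely many pointwise choices are made and no measurable selection is needed.

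It then remains to deduce the equivalence. For the direction $A^*=A\Rightarrow$ no one-point arbitrage: if some $a\in\cI$ satisfied $a\pointgeq 0$ on $A$ with $a(\om^*)>0$ for an $\om^*\in A=A^*$, I would pick $Q\in\mathcal{Q}_A$ with $Q(\om^*)>0$ and get $\E_Q[a]\ge a(\om^*)Q(\om^*)>0$, contradicting the martingale identity $\E_Q[a]=0$. Conversely, if there is no one-point arbitrage on $A$, then no time-$t$ one-step strategy can be one either, so applying Lemma \ref{conditional:splitting} at $t=T,T-1,\dots,1$ forces $\beta_t=0$ at each step: a positive $\beta_t$ would exhibit $\ell^t_1\pointgeq 0$ on $A$ and strictly positive on $B^1_t\neq\emptyset$, a one-point arbitrage. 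Hence no region is ever removed, $A_0=A$, and the first part gives $A^*=A_0=A$.

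The step I expect to be the main obstacle is the forward construction in the second inclusion: recognizing the fiberwise no-arbitrage condition as ``$0$ in the relative interior of the convex hull of the increments'', and checking that the nesting $A_{t-1}\subseteq A_t$ is exactly what lets the one-step kernels compose into a bona fide martingale measure that remains supported in $A$ while charging the prescribed $\om_0$. The $\sigma$-algebra bookkeeping is the delicate point, and the saving observation is that a single point needs only a finite tree, so the whole argument can be kept pointwise.
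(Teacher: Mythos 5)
Your reconstruction is essentially correct, but note first that the paper does not actually prove this lemma: Section \ref{ss.ftm} merely recalls it from \cite{Burzoni-et-al}, so there is no in-paper argument to compare against, and what you have written is in effect a reconstruction of the source's proof. The inclusion $A^*\subseteq A_0$ via the backward induction $Q(A_T)=1\Rightarrow\cdots\Rightarrow Q(A_0)=1$ --- using $\E_Q[\ell^t_i]=0$ together with $\ell^t_i\pointgeq 0$ on $B^0_t\cup\bigcup_{j\ge i}B^j_t$ to kill the mass of each $B^i_t$ in turn --- is exactly right, and both directions of the stated equivalence are handled correctly (in particular the observation that absence of one-point arbitrage forces $\beta_t=0$ at every stage, so nothing is peeled off and $A_0=A$, which then yields $A^*=A$ from the first part).

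The one point you should flag explicitly is that your proof of $A_0\subseteq A^*$ leans on a property that the splitting lemma, as recalled in this paper, only asserts when $\beta_t=0$: namely that the residual set $B^0_t=A_{t-1}$ carries no one-step one-point arbitrage at time $t$. Item \ref{item_spezz} of Lemma \ref{conditional:splitting} states this only for $\beta=0$, and it is not derivable from the listed properties of the $\ell^t_i$ alone; it is, however, part of the full statement in \cite{Burzoni-et-al}, where the bound $\beta\le d$ reflects precisely that the peeling reduces the dimension of the conditional supports until no arbitrage remains. Granting that, your fiberwise argument --- no one-point arbitrage on an $\cF_{t-1}$-fiber iff $0$ lies in the relative interior of the convex hull of the time-$t$ increments, hence a finite convex combination with prescribed positive weight on $\Delta S_t(\om_0)$, pasted along a finite tree using the nesting $A_0\subseteq A_{t-1}\subseteq A_t$ --- is sound, and you are right that the finiteness of the tree is what spares you any measurable selection in this direction.
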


\begin{prop}
	\label{thm: duality_no_option} Let $ A \in
	\mathcal{F}$. We have that for any $\mathcal{F}$-measurable random variable $g$,
	\begin{equation}
		\pi _{ A ^{\ast }}(g)=\sup_{Q\in \mathcal{Q}_A}\E_{Q}[g]. \label{super:0opt}
	\end{equation}%
	with $\pi _{ A ^{\ast }}(g)=\inf \left\{ x\in \mathbb{R}\mid
	\exists
	a\in \cI \text{ such that }%
	x+a_{T}(\omega)\geq g(\omega)\ \forall \omega\in A ^{*}\right\} $. 
	In particular, the left hand side of
	\eqref{super:0opt} is attained by some strategy $a\in\cI$.
\end{prop}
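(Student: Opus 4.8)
The plan is to prove the two inequalities separately: the easy ``$\ge$'' by integration, and the substantive ``$\le$'' together with primal attainment by backward induction on the number of trading dates, using the conditional splitting of Lemma \ref{conditional:splitting} to strip off the directions of one-point arbitrage on which any hedging gain must vanish. First I would record the reduction $\mathcal{Q}_A=\mathcal{Q}_{A^*}$: every $Q\in\mathcal{Q}_A$ has $supp(Q)\subseteq\bigcup_{Q'\in\mathcal{Q}_A}supp(Q')=A^*\subseteq A$, so already $Q(A^*)=1$, while $A^*\subseteq A$ gives the reverse inclusion $\mathcal{Q}_{A^*}\subseteq\mathcal{Q}_A$. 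Consequently $(A^*)^*=A^*$, and by Lemma \ref{NoOnePoint} there is no one-point arbitrage on $A^*$; this pointwise no-arbitrage on $A^*$ is the structural input driving the whole argument.

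For weak duality, suppose $x+a_T(\om)\ge g(\om)$ for all $\om\in A^*$ with $a=\sum_{t=1}^T H_t\cdot\Delta S_t\in\cI$, and fix $Q\in\mathcal{Q}_A=\mathcal{Q}_{A^*}$. Since $Q$ has finite support all expectations are finite, and the $Q$-martingale property of $S$ gives $\E_Q[H_t\cdot\Delta S_t]=\E_Q[H_t\cdot\E_Q[\Delta S_t\mid\cF_{t-1}]]=0$ for each $t$, hence $\E_Q[a_T]=0$. Integrating the superhedging inequality over the $Q$-full set $A^*$ yields $x\ge\E_Q[g]$. Taking the infimum over superhedges and the supremum over $Q$ gives $\pi_{A^*}(g)\ge\sup_{Q\in\mathcal{Q}_A}\E_Q[g]$.

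For strong duality and attainment I would induct on $T$. In the one-period case the absence of one-point arbitrage on $A^*$ lets me apply Lemma \ref{conditional:splitting} with $\beta=0$ on $A^*$, and a measurable selection then produces an $\cF_0$-measurable one-step martingale kernel attaining the conditional dual value and an $\cF_0$-measurable one-step strategy attaining the conditional superhedging value; gluing these over $\cF_0$ yields a finitely supported $Q\in\mathcal{Q}_{A^*}$ and a strategy $a\in\cI$ with $\pi_{A^*}(g)=\E_Q[g]$. For the inductive step I apply the splitting to the first period: on the stratum of $A^*$ carrying a one-point arbitrage direction, no-arbitrage on $A^*$ forces the corresponding gain to be identically zero, so those assets are redundant and may be discarded; on the complementary, genuinely arbitrage-free stratum I use the one-period duality to define an $\cF_{T-1}$-measurable conditional value $\tilde g$, apply the $(T-1)$-period hypothesis to $\tilde g$, and concatenate the first-period strategy with the inductively constructed one to obtain $a\in\cI$ superreplicating $g$ on $A^*$ at cost $\sup_{Q\in\mathcal{Q}_A}\E_Q[g]$. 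Gluing the one-period kernels into a finitely supported martingale measure concentrated on $A^*$ attains the dual supremum, so the two bounds coincide, giving \eqref{super:0opt} with the primal attained by $a$.

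The hard part will be the one-period reduction underlying the induction: one must show that, after removing the one-point-arbitrage directions identified by Lemma \ref{conditional:splitting}, the cone $\{H\cdot\Delta S-k:\,H\in\cL^0(\Omega,\cF_{t-1}),\,k\ge 0\}$ is closed under pointwise convergence on the arbitrage-free stratum, and then perform the measurable selection of both the optimal strategies and the kernels $\om\mapsto Q_\om$ in a way that glues to a bona fide finitely supported $Q\in\mathcal{Q}_A$. This closedness-after-removing-arbitrage is precisely what Lemma \ref{conditional:splitting} is engineered to deliver, and it parallels the Kabanov--Stricker closedness argument already invoked in the proof of Theorem \ref{t.tfae}; the full measurable-selection bookkeeping is carried out in \cite{Burzoni-et-al}, from which this statement is quoted.
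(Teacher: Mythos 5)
The paper itself does not prove this proposition: it is stated in Appendix \ref{ss.ftm} as a result recalled verbatim from \cite{Burzoni-et-al}, so there is no in-paper argument to compare yours against. Your sketch is a faithful outline of the strategy used in that reference. The reduction $\mathcal{Q}_A=\mathcal{Q}_{A^*}$ (hence $(A^*)^*=A^*$ and absence of one-point arbitrage on $A^*$) and the weak-duality half are complete and correct as written, with the finite support of $Q$ ensuring $\E_Q[a_T]=0$. The strong-duality and attainment half correctly identifies the machinery --- backward induction on $T$, one-period conditional duality, measurable selection of optimal strategies and of the kernels $\omega\mapsto Q_\omega$, gluing into a finitely supported martingale measure concentrated on $A^*$ --- but, as you acknowledge, the two genuinely hard steps (closedness of the one-period superhedgeable cone and the measurable-selection bookkeeping) are deferred to \cite{Burzoni-et-al}. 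Since the paper defers the entire proposition to that source, this is consistent with its treatment rather than a gap relative to it.

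One imprecision worth fixing in your inductive step: within $A^*$ there is no stratum ``carrying a one-point arbitrage direction.'' By Lemma \ref{NoOnePoint} together with $(A^*)^*=A^*$, Lemma \ref{conditional:splitting} applied with $\Gamma=A^*$ returns $\beta=0$ at every date, so there is nothing of that kind to discard. What actually has to be stripped off in the one-period closedness argument are \emph{redundant} directions, i.e.\ those $H$ with $H\cdot\Delta S_t\equiv 0$ on the relevant fiber, in the style of Kabanov--Stricker (as in the proof of Theorem \ref{t.tfae}); these are not one-point arbitrages. The conditional-splitting lemma is used to pass from $A$ to $A^*$ in the first place (as in \eqref{eq:construction A*}), not to run the induction on $A^*$ itself.
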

\end{appendix}

\bibliographystyle{econometrica}
\bibliography{viable-5}

\end{document}